\definecolor{airforceblue}{rgb}{0.36, 0.54, 0.66}
\definecolor{bluegray}{rgb}{0.4, 0.6, 0.8}
\definecolor{ceil}{rgb}{0.57, 0.63, 0.81}
\definecolor{celestialblue}{rgb}{0.29, 0.59, 0.82}
\definecolor{cerulean}{rgb}{0.0, 0.48, 0.65}
\definecolor{celadon}{rgb}{0.67, 0.88, 0.69}
\definecolor{yellow-green}{rgb}{0.6, 0.8, 0.2}
\colorlet{graph-vertex-blue}{blue}
\colorlet{graph-vertex-green}{Green}
\colorlet{graph-vertex-red}{red}
\colorlet{graph-vertex-yellow}{yellow}
\colorlet{graph-edge-blue}{blue}
\colorlet{graph-edge-green}{Green}
\colorlet{graph-edge-red}{red}
\colorlet{graph-edge-yellow}{yellow}
\newtheorem{theorem}{Theorem}[section]
\newtheorem{lemma}[theorem]{Lemma}
\newtheorem{claim}[theorem]{Claim}
\newtheorem{conjecture}[theorem]{Conjecture}
\newtheorem{observation}[theorem]{Observation}
\theoremstyle{definition}  \newtheorem{definition}[theorem]{Definition}
\theoremstyle{remark}
\newbox\mybox 
\newdimen\myboxwidth    
\newcommand\addpicture[3]{%
\setbox\mybox=\hbox{\includegraphics[scale=#3]{#2}}
\myboxwidth\wd\mybox    
\renewcommand\windowpagestuff{%
\includegraphics[scale=#3]{#2}
\captionof{figure}{A test figure.}}
\parpic[#1]{%
\begin{minipage}{\myboxwidth}
 \windowpagestuff 
\end{minipage} 
} }
\newcommand{\ignore}[1]{}
\newcommand{\mypar}[1]{\medskip\noindent{\sffamily\bfseries #1.}~}
\newcommand{\etal}{{et al.}\xspace}
\newcommand{\cF}{\cal F}
\newcommand{\cP}{\cal P}
\newcommand{\cT}{\cal T}
\newcommand{\mcB}{\mathcal B}
\newcommand{\mcC}{\mathcal C}
\newcommand{\mcE}{\mathcal E}
\newcommand{\mcP}{\mathcal P}
\newcommand{\mcS}{\mathcal S}
\newcommand{\mcT}{\mathcal T}
\newcommand{\NN}{\mathbb{N}}
\newcommand{\Sec}[1]{\hyperref[sec:#1]{\S\ref*{sec:#1}}} 
\newcommand{\Eqn}[1]{\hyperref[eq:#1]{(\ref*{eq:#1})}} 
\newcommand{\Fig}[1]{\hyperref[fig:#1]{Fig.\,\ref*{fig:#1}}} 
\newcommand{\Tab}[1]{\hyperref[tab:#1]{Tab.\,\ref*{tab:#1}}} 
\newcommand{\Thm}[1]{\hyperref[thm:#1]{Theorem\,\ref*{thm:#1}}} 
\newcommand{\Fact}[1]{\hyperref[fact:#1]{Fact\,\ref*{fact:#1}}} 
\newcommand{\Lem}[1]{\hyperref[lem:#1]{Lemma\,\ref*{lem:#1}}} 
\newcommand{\Prop}[1]{\hyperref[prop:#1]{Prop.~\ref*{prop:#1}}} 
\newcommand{\Cor}[1]{\hyperref[cor:#1]{Corollary~\ref*{cor:#1}}} 
\newcommand{\Conj}[1]{\hyperref[conj:#1]{Conjecture~\ref*{conj:#1}}} 
\newcommand{\Def}[1]{\hyperref[def:#1]{Definition~\ref*{def:#1}}} 
\newcommand{\Alg}[1]{\hyperref[alg:#1]{Alg.~\ref*{alg:#1}}} 
\newcommand{\Ex}[1]{\hyperref[ex:#1]{Ex.~\ref*{ex:#1}}} 
\newcommand{\Clm}[1]{\hyperref[clm:#1]{Claim~\ref*{clm:#1}}} 
\renewcommand{\geq}{\geqslant}
\renewcommand{\leq}{\leqslant}
\DeclareMathOperator*{\argmin}{arg\,min}
\newcommand{\dir}{^\rightarrow}
\newcommand{\degen}{\kappa}
\newcommand{\TRICONJ}{\textsc{Triangle Detection Conjecture}\xspace}
\newcommand{\SUBC}{\mathrm{\#Sub}\xspace}
\newcommand{\HOMC}{\mathrm{\#Hom}}
\newcommand{\TRIC}{\textsc{tri-cnt}\xspace}
\newcommand{\dtd}{{DAG tree decomposition}\xspace}
\newcommand{\pdtd}{{partial DAG tree decomposition}\xspace}
\newcommand{\dtwtext}{{DAG treewidth}\xspace}
\newcommand{\pdtwtext}{{partial DAG treewidth}\xspace}
\newcommand{\good}{\textsc{good-pair}\xspace}
\DeclareMathOperator{\dtw}{\tau}
\DeclareMathOperator{\tw}{tw}
\DeclareMathOperator{\uniqueH}{UR}
\newcommand{\fixed}{\textsc{fixed-set}\xspace}
\newcommand{\core}{\textsc{core-set}\xspace}
\newcommand{\aux}{\textsc{auxiliary-set}\xspace}
\newcommand{\excludeC}{C\textsc{-excluded}\xspace}
\newcommand{\coreGraph}{\textsc{core}\xspace}
\newcommand{\fixedGraph}{\textsc{fixed}\xspace}
\newcommand{\connect}{\textsc{bridge}\xspace}
\DeclareMathOperator{\pmatch}{{\cP}-match}
\DeclareMathOperator{\ppmatch}{{\cP^\prime}-match}
\DeclareMathOperator{\phom}{{\cP}-homomorphism}
\DeclareMathOperator{\map}{\sigma}
\newcommand{\triMap}{\textsc{triangle}\xspace}
\DeclareMathOperator{\Hmatch}{H-match}
\DeclareMathOperator{\ALG}{ALG}
\DeclareMathOperator{\dist}{dist}
\newcommand{\isc}{\textsc{intersection-cover}\xspace}
\newcommand{\cover}[1]{{#1}\textsc{-cover}\xspace}
\newcommand{\R}{\textsc{reachable}\xspace}
\newcommand{\UR}{\textsc{unique-reachable}\xspace}
\newcommand{\Hdir}{H\dir}
\newcommand{\Hom}[2]{\mathrm{Hom}_#2(#1)}
\DeclareMathOperator{\LICS}{LICL}
\renewcommand{\mcS}{S}
\tikzset{%
  gnode/.style={shape=circle,minimum size=3mm,fill,draw=black}
}
\tikzset{myptr/.style={decoration={markings,mark=at position 1 with {\arrow[scale=1.5,>=stealth]{>}}},postaction={decorate}}}
\tikzset{myptr2/.style={decoration={markings,mark=at position 0.55 with {\arrow[scale=1.5,>=stealth]{>}}},postaction={decorate}}}
\def\URCycle{

   
   

   \node (9)  [label={below:\small $s_3$}] at (7.9,0) [nd, fill =] {};
   \node (10) [label={left: \small $v_{3,1}$}] at (7,0.75) [nd, fill =] {};
   \node (11) [label={left: \small $s_1$}]  at (7,1.5) [nd] {};
   \node (12) [label={above:\small $v_{1,2}$}] at (7.9,2.25) [nd] {};
   \node (13) [label={right:\small $s_2$}] at (8.8,1.5) [nd] {};
   \node (14) [label={right:\small $v_{2,3}$}] at (8.8,0.75) [nd] {};
   \draw[myptr] (9) to (10);
   \draw[myptr] (9) to (14);
   \draw[myptr] (11) to (10);
   \draw[myptr] (11) to (12);
   \draw[myptr] (13) to (12);
   \draw[myptr] (13) to (14);

   \node [inner sep = 0,below] at +(8,-0.7) {{\Large $C^\prime$}};

   \node (15)  [label={below:\small $s_3$}] at (1.9,0) [nd, fill =] {};
   \node (16) [label={left: \small $s_1$}]  at (1,1.5) [nd] {};
   \node (17) [label={right:\small $s_2$}] at (2.8,1.5) [nd] {};
   \draw (15) to (16);
   \draw (15) to (17);
   \draw (16) to (17);
   
   \node [inner sep = 0,below] at +(2,-0.7) {{\Large $\uniqueH_{\mcS_p}$}};
}
\def\DTDSixCycle{

   \node (1) at (0.9,0) [nd] {};
   \node (2) at (0,0.75) [nd] {};
   \node (3) at (0,1.5) [nd] {};
   \node (4) at (0.9,2.25) [nd] {};
   \node (5) at (1.8,1.5) [nd] {};
   \node (6) at (1.8,0.75) [nd] {};
   \draw (1) to (2);
   \draw (1) to (6);
   \draw (3) to (2);
   \draw (3) to (4);
   \draw (5) to (4);
   \draw (5) to (6);
   
   
   \node [inner sep = 0,below] at +(1,-0.7) {{\Large $H$}};

   \node (9)  [label={below:\small $s_3$}] at (4.9,0) [nd, fill =] {};
   \node (10) [label={left: \small $t_2$}] at (4,0.75) [nd, fill =] {};
   \node (11) [label={left: \small $s_1$}]  at (4,1.5) [nd] {};
   \node (12) [label={above:\small $t_3$}] at (4.9,2.25) [nd] {};
   \node (13) [label={right:\small $s_2$}] at (5.8,1.5) [nd] {};
   \node (14) [label={right:\small $t_1$}] at (5.8,0.75) [nd] {};
   \draw[myptr] (9) to (10);
   \draw[myptr] (9) to (14);
   \draw[myptr] (11) to (10);
   \draw[myptr] (11) to (12);
   \draw[myptr] (13) to (12);
   \draw[myptr] (13) to (14);

   \node [inner sep = 0,below] at +(5,-0.7) {{\Large $H\dir$}};

   \node (15)  [label={below:\small $s_3$}] at (8.9,0) [nd, fill =] {};
   \node (16) [label={left: \small $s_1$}]  at (8,1.5) [nd] {};
   \node (17) [label={right:\small $s_2$}] at (9.8,1.5) [nd] {};
   \draw (15) to (16);
   \draw (15) to (17);
   \draw (16) to (17);
   
   \node [inner sep = 0,below] at +(9,-0.7) {{\Large $\uniqueH_{\mcS(\Hdir)}$}};
}
\def\exampleH{
   \node (1) [label={above: $a_1$}] at (1.6,4.5) [nd] {};
   \node (2) [label={above: $a_2$}] at (2.4,4.5) [nd] {};
   \node (3) [label={below: $a_3$}] at (0.9,2) [nd] {};
   \node (4) [label={left:  $a_4$}] at (0,1) [nd] {};
   \node (5) [label={below: $a_5$}] at (1.3,0) [nd] {};
   \node (6) [label={below: $a_6$}] at (2.7,0) [nd] {};
   \node (7) [label={right: $a_7$}] at (4.25,1) [nd] {};
   \node (8) [label={below: $a_8$}] at (3.1,2) [nd] {};

   \draw (1) to (2);
   \draw (1) to (3);
   \draw (1) to (4);
   \draw (1) to (5);
   \draw (2) to (6);
   \draw (2) to (7);
   \draw (2) to (8);
   \draw (3) to (4);
   \draw (3) to (8);
   \draw (4) to (5);
   \draw (5) to (6);
   \draw (6) to (7);
   \draw (7) to (8);

   \draw [very thick,rotate around={90:(2,4.5)},red!85!black] (2,4.5) ellipse (20pt and 50pt);
   \node [red!85!black] at (2,5.75) {\Large $H_{\excludeC}$};
}
\def\GHexampleH{
   \node (1) [label={above: \Large $z_1$}] at (3.2,6) [nd,fill=NavyBlue] {};
   \node (2) [label={above: \Large $z_2$}] at (4.8,6) [nd,fill=NavyBlue] {};
   \node (3) [label={below: \Large $V_1$}] at (1.8,2) [nd] {};
   \node (4) [label={[label distance=24pt] left: \Large $V_{1,2}$}] at (0,1) [nd] {};
   \node (5) [label={below: \Large $V_2$}] at (2.6,0) [nd] {};
   \node (6) [label={[label distance=14pt] below: \Large $V_{2,3}$}] at (5.4,0) [nd] {};
   \node (7) [label={[label distance=-1.5pt] right: \Large $V_3$}] at (8.5,1) [nd] {};
   \node (8) [label={[label distance=14pt] below: \Large $V_{1,3}$}] at (6.2,2) [nd] {};
   
   \draw (1) to (2);
   \draw [thick, violet] (1) to (3);
   \draw [thick, violet] (1) to (4);
   \draw [thick, violet] (1) to (5);
   \draw [thick, violet] (2) to (6);
   \draw [thick, violet] (2) to (7);
   \draw [thick, violet] (2) to (8);
   \draw (3) to (4);
   \draw (3) to (8);
   \draw (4) to (5);
   \draw (5) to (6);
   \draw (6) to (7);
   \draw (7) to (8);
   
   \draw [thick, dashed, green!75!black] (1.8,2) circle (24pt);
   
    \draw [very thick, dotted, red, rotate around={90:(0,1)}] (0,1) ellipse (15pt and 25pt);
   
   \draw [thick, dashed, green!75!black] (2.6,0) circle (24pt);
   
   \draw [very thick, dotted, red, rotate around={-90:(5.4,0)}] (5.4,0) ellipse (15pt and 25pt);
   
   \draw [thick, dashed, green!75!black] (8.5,1) circle (24pt);
   
   \draw [very thick, dotted, red, rotate around={-90:(6.2,2)}] (6.2,2) ellipse (15pt and 25pt);

   \draw [very thick,rotate around={90:(4,1)},blue!85!black] (4,1) ellipse (80pt and 175pt);
   \node [blue!85!black] at (9,3.75) {\huge $G_{\coreGraph}$};
   
   \draw [very thick,rotate around={90:(4,6)},red!85!black] (4,6) ellipse (25pt and 60pt);
   \node [red!85!black] at (7.5,6) {\huge $G_{\fixedGraph}$};
   
   \node [violet] at (0.5,4.75) {\huge $E_{\connect}$};
}
\newcommand*\samethanks[1][\value{footnote}]{\footnotemark[#1]}
\begin{document}


\title{Near-Linear Time Homomorphism Counting in Bounded Degeneracy Graphs: The Barrier of Long Induced Cycles}


\author{Suman K. Bera\thanks{All the authors are supported by NSF TRIPODS grant CCF-1740850, NSF CCF-1813165, CCF-1909790, and ARO Award W911NF1910294.}}
\author{Noujan Pashanasangi\samethanks}
\author{C. Seshadhri\samethanks}
\affil{University of California, Santa Cruz}
\affil{\{sbera,npashana,sesh\}@ucsc.edu}
\date{}

\maketitle

\begin{abstract}
Counting homomorphisms of a constant sized pattern graph $H$ in an input graph $G$
is a fundamental computational problem. There is a rich history of studying the complexity of this problem, under various constraints on the input $G$ and the pattern $H$. Given the significance of this problem and the large sizes of modern inputs, we investigate when near-linear time algorithms are possible. We focus on the case when the input graph has bounded degeneracy, a commonly studied and practically relevant class for homomorphism counting. It is known from previous work that for certain classes of $H$, $H$-homomorphisms can be counted exactly in near-linear time in bounded degeneracy graphs. Can we precisely characterize the patterns $H$ for which near-linear time algorithms are possible?

We completely resolve this problem, discovering a clean dichotomy using fine-grained complexity. Let $m$ denote the number of edges in $G$. We prove the following: if the largest induced cycle in $H$ has length at most $5$, then there is an $O(m\log m)$ algorithm for counting $H$-homomorphisms in bounded degeneracy graphs. If the largest induced cycle in $H$ has length at least $6$, then (assuming standard fine-grained complexity conjectures) there is a constant $\gamma > 0$, such that there is no $o(m^{1+\gamma})$ time algorithm for counting $H$-homomorphisms.


\end{abstract}

\section{Introduction}

Analyzing occurrences of small pattern graphs in a large input graph is a fundamental algorithmic problem with a rich line of work in both theory and practice~\cite{lovasz1967operations,chiba1985arboricity,flum2004parameterized,dalmau2004complexity,lovasz2012large,AhNe+15,curticapean2017homomorphisms,PiSeVi17,roth2020counting}.
A common version of this problem is \emph{homomorphism counting}, which has numerous applications in logic, properties of graph products, partition functions in statistical physics, database theory, and network science~\cite{chandra1977optimal,brightwell1999graph,dyer2000complexity,borgs2006counting,PiSeVi17,dell2019counting,pashanasangi2020efficiently}. Denote the pattern simple graph 
as $H = (V(H), E(H))$, which is thought of as fixed (or constant-sized). The input simple graph is denoted by $G = (V(G), E(G))$. An $H$-homomorphism is an edge-preserving map $f:V(H) \to V(G)$. Formally, $\forall (u,v) \in E(H)$, $(f(u), f(v)) \in E(G)$. When $f$ is an injection, this map corresponds to the common notion of subgraphs. We use $\Hom{G}{H}$ to denote the count of the distinct $H$-homomorphisms. 

The problem of computing $\Hom{G}{H}$ for various choices of $H$ is a deep subfield of study in graph algorithms~\cite{itai1978finding,alon1997finding,brightwell1999graph,dyer2000complexity,diaz2002counting,dalmau2004complexity,borgs2006counting,curticapean2017homomorphisms,bressan2019faster,roth2020counting}. Indeed, even the case of $H$ being a triangle, clique, or cycle have led to long lines of results. Many practical and theoretical algorithms for subgraph counting are based on homomorphism counting, and it is known
that subgraph counts can be expressed as a linear combination of homomorphism counts~\cite{curticapean2017homomorphisms}.
Let $n = |V(G)|$ and $k = |V(H)|$. 
It is known that computing $\Hom{G}{H}$ is $\#W[1]$-hard when parameterized by $k$ (even when $H$ is a $k$-clique), so
we do not expect $n^{o(k)}$ algorithms for general $H$~\cite{dalmau2004complexity}. Yet the $n^k$
barrier can be beaten when $H$ has  structure. Notably, Curticapean-Dell-Marx proved that if $H$ has treewidth at most 2, then $\Hom{G}{H}$ can be computed in $\text{poly}(k) \cdot n^\omega$, where $\omega$ is the matrix multiplication constant~\cite{curticapean2017homomorphisms}. 

For many modern applications in network science, since $n$ is very large, one desires
linear-time algorithms. Indeed, one might argue that the ``right'' notion of computational efficiency in these settings is (near) linear time. Motivated by these concerns, we investigate the barriers for achieving linear time algorithms to count $\Hom{G}{H}$, especially when $G$ is a ``real-world'' graph.

We focus on the class of \emph{bounded degeneracy graphs}. This is the class of graphs where all subgraphs have a constant average degree. A seminal result of Chiba-Nishizeki proves that clique counting can be done in linear time for such graphs~\cite{chiba1985arboricity}. Importantly, this paper introduced the technique of
graph orientations for subgraph counting, that has been at the center of many 
state-of-the-art practical  algorithms~\cite{AhNe+15,jha2015path,PiSeVi17,ortmann2017efficient,jain2017fast,pashanasangi2020efficiently}. The degeneracy has a special significance in the analysis of real-world graphs, since it is intimately tied to the technique of ``core decompositions''. 

Many algorithmic ideas for homomorphism or subgraph counting on bounded degeneracy graphs have been quite successful in practice, and form the basis
of state-of-the-art algorithms.
Can we understand the limits of these methods? Theoretically, when is homomorphism counting possible in near-linear time on bounded degeneracy graphs? Chiba-Nishizeki~\cite{chiba1985arboricity} proved that clique and 4-cycle counting can be done in linear time. A result of the authors shows that near-linear time is possible when $|V(H)| = k \leq 5$~\cite{bera2020linear}. In  a significant generalization, Bressan~\cite{bressan2019faster} defines an intricate notion of \emph{DAG treewidth}, and shows (among other things) that a near-linear time algorithm exists when the DAG treewidth of $H$ is one.
These results lead us to the main question addressed by our work.

\medskip

\emph{Can we characterize the pattern graphs $H$ for which $\Hom{G}{H}$ is computable
in near-linear time (when $G$ has bounded degeneracy)?}

\medskip

Our main result is a surprisingly clean resolution of this problem, assuming
fine-grained complexity results. There is a rich history of {\em complexity dichotomies} for homomorphism detection and counting problems~\cite{hell1990complexity,dyer2000complexity,dalmau2004complexity,grohe2007complexity,roth2020counting}.
In this work, we discover such a dichotomy for near-linear time algorithms.

Let $\LICS(H)$ be the length of the largest induced cycle in $H$. 

\begin{theorem} \label{thm:main} Let $G$ be an input graph with $n$ vertices, $m$
edges, and degeneracy $\degen$. Let $f:\NN \to \NN$ denote some explicit function. Let $\gamma > 0$ denote the constant from the \TRICONJ.

If $\LICS(H) \leq 5$: there exists an algorithm that computes $\Hom{G}{H}$ in
time $f(\degen)\cdot m\log n$. 

If $\LICS(H) \geq 6$: assume the \TRICONJ. For any function $g: \NN \to \NN$, there is no 
algorithm with (expected) running time
$g(\degen) o(m^{1+\gamma})$
that computes $\Hom{G}{H}$.
\end{theorem}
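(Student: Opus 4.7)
\textbf{Proof plan for \Thm{main}.} The theorem splits into two independent halves: a near-linear-time algorithm when $\LICS(H) \le 5$ and a conditional hardness result when $\LICS(H) \ge 6$. I would prove these with different machinery and combine them.

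\emph{Upper bound ($\LICS(H) \le 5$).} The starting point is the degeneracy orientation of Chiba--Nishizeki: orient every edge of $G$ from the lower-ranked endpoint in a degeneracy ordering, so each vertex has out-degree at most $\degen$. Every $H$-homomorphism into $G$ lifts to a homomorphism from some acyclic orientation \Hdir{} of $H$ into the oriented $G$, so it suffices to count ``\Hdir-matches'' for each \Hdir{} and sum. For each \Hdir{} I would construct a \pdtd{} that splits the source vertices into a small \fixed{} set, enumerated by brute force over oriented vertices in roughly $O(m)$ time each, and a \core{} set processed by a tree-structured dynamic program whose bags contribute factors computable from single-edge or single-vertex local counts in $O(m)$ time. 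The structural heart of this half is a lemma: $\LICS(H) \le 5$ implies every acyclic orientation has \pdtwtext{} equal to $1$; once this holds, the dynamic program runs in $f(\degen)\cdot m\log n$ time, the $\log n$ coming from dictionary lookups on per-vertex local counts.

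\emph{Lower bound ($\LICS(H) \ge 6$).} Let $C$ be an induced cycle of length $k \ge 6$ in $H$. Given a graph $G'$ on $n$ vertices and $m$ edges for which triangles are to be detected, I would construct a bounded-degeneracy instance $G_H$ as follows. First, replace each edge of $G'$ with a path so that each triangle of $G'$ glues into an induced $C_k$ in $G_H$; this is possible because $k \ge 6$ admits a partition into three path-lengths each at least $2$, which keeps subdivided edges internally disjoint. Second, at each predetermined anchor on these subdivided paths, attach fresh private tree-gadgets realizing the vertices of $V(H)\setminus V(C)$ together with their attachments to $C$, so that the only $H$-homomorphisms in $G_H$ send $V(C)$ onto a triangle-derived $C_k$ and send $V(H)\setminus V(C)$ into matching private gadgets. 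Since every gadget is a tree or short path, $G_H$ has degeneracy bounded by a constant depending only on $H$ and $|E(G_H)|=O(m)$. A M\"obius/inclusion--exclusion correction, if required, isolates the true triangle-derived contribution to $\Hom{G_H}{H}$. Any algorithm for $\Hom{G_H}{H}$ with running time $g(\degen)\cdot o(m^{1+\gamma})$ would then solve triangle detection in $g(\degen)\cdot o(m^{1+\gamma})$ time, contradicting the \TRICONJ.

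\emph{Main obstacle.} The principal difficulty is the structural lemma driving the upper bound: proving that $\LICS(H)\le 5$ forces \pdtwtext{}$=1$ for every acyclic orientation of $H$. Orientations can generate pairs of source vertices whose reachable sets overlap in ways that, in a naive DAG decomposition, look like long cycles even when $H$ itself has only short induced cycles; the proof must exploit that every induced cycle has length $3$, $4$, or $5$ to show that such overlapping reachable sets can always be reduced to a constant-size \fixed{} core. On the lower-bound side the delicate point is ruling out spurious $H$-homomorphisms that collapse several subdivided paths into a ``shortcut'' $C_k$ not coming from a triangle, which is what forces the careful path-length partitioning of $k$ and the choice of attachment anchors for the non-cycle part of $H$.
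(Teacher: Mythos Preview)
Your upper-bound plan is on the right track and matches the paper's strategy: the decisive structural fact is exactly that $\LICS(H)\le 5$ forces \dtwtext{} one for every acyclic orientation of $H$, after which Bressan's algorithm (\Thm{linear}) gives the $f(\degen)\cdot m\log n$ bound directly. Two minor points: there is no separate \fixed{}/\core{} split of the sources in the algorithm---with width one every bag is a single source and the whole thing is one tree-DP---so that part of your description is conflating notation from the paper's lower-bound construction with the upper-bound argument. More substantively, you have not said how you would actually prove the width-one lemma; the paper's device is the \emph{unique reachability graph} $\uniqueH_{S_p}$ on a subset $S_p$ of sources, where $s,s'$ are adjacent iff some vertex of $\Hdir$ is reachable from exactly $\{s,s'\}$ among $S_p$. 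The key lemma is that a cycle of length $\ell$ in $\uniqueH_{S_p}$ forces an induced cycle of length $\ge 2\ell$ in $H$; hence $\LICS(H)\le 5$ makes every $\uniqueH_{S_p}$ a forest, which supplies the degree-$\le 1$ vertex needed to inductively grow a width-one \pdtd. Your ``overlapping reachable sets'' intuition is pointing at this, but the doubling argument is the non-obvious step.

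Your lower-bound sketch has a genuine gap. The ``fresh private tree-gadgets'' idea breaks for two reasons. First, $H-V(C)$ need not be a tree (or even connected), and in general it has edges to \emph{several} distinct vertices of $C$, so a gadget private to a single anchor on a subdivided path cannot realize those attachments simultaneously; you would need one gadget shared across many cycle positions, which is exactly what the paper does. Second, and more importantly, your assertion that ``the only $H$-homomorphisms in $G_H$ send $V(C)$ onto a triangle-derived $C_k$'' is false in general: spurious homomorphisms arise that fold $V(C)$ partly into the gadget or reuse gadget vertices, and no local anchor trick rules them out. The paper's fix is to keep a \emph{single} shared copy $G_{\fixedGraph}$ of $H-V(C)$, connect each vertex $z\in V_{\fixed}$ to the entire partition class corresponding to its neighbor(s) on $C$, and then count not $\Hom{G_H}{H}$ but the number of \emph{$\mcP$-matches}: matches hitting every class of a size-$k$ partition $\mcP$ of $V(G_H)$ exactly once. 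This quantity is recovered from $2^k$ ordinary homomorphism counts by inclusion--exclusion over subfamilies of $\mcP$ (\Lem{hom-inc-exc}), and it is this partitioned count---not the raw homomorphism count---that is in clean bijection with triangle-derived $r$-cycles. Your M\"obius remark gestures at this, but without the single-shared-copy-plus-partition setup you do not have a well-defined quantity to inclusion--exclude down to.
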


(We note that the condition on $H$ involves induced cycles, but we are interested
in counting non-induced homomorphisms.)

Abboud and Williams introduced the \TRICONJ on the complexity of determining whether a graph has a triangle~\cite{abboud2014popular}. Assuming that there is no $O(m^{1+\gamma})$ time  triangle detection algorithm, they proved lower bounds for many classic graph algorithm problems. It is believed that the
constant $\gamma$ could be arbitrarily close to $1/3$~\cite{abboud2014popular}.

\begin{conjecture}[\TRICONJ~\cite{abboud2014popular}]
\label{conj:triangle}
There exists a constant $\gamma>0$ such that in the word RAM model
of $O(\log n)$ bits, any algorithm to detect whether an input graph on
$m$ edges has a triangle requires $\Omega(m^{1+\gamma})$ time
in expectation.
\end{conjecture}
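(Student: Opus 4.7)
Proving this conjecture unconditionally is beyond current lower-bound techniques: no method is known to establish super-linear time lower bounds for any natural decision problem in the word-RAM model with $O(\log n)$-bit words. The realistic plan is therefore to derive the statement from a more fundamental hardness assumption, most naturally the combinatorial Boolean Matrix Multiplication (BMM) Conjecture or the 3SUM Conjecture, and to argue that the resulting $\gamma$ is strictly positive.

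The cleanest starting point is the classical reduction from BMM. Given $n\times n$ Boolean matrices $A,B,C$ and the task of deciding whether some index pair $(i,k)$ satisfies $(AB)_{ik}=C_{ik}=1$, I would build the tripartite graph on vertex sets $\{a_i\},\{b_j\},\{c_k\}$ of size $n$, inserting the edge $(a_i,b_j)$ whenever $A[i,j]=1$, the edge $(b_j,c_k)$ whenever $B[j,k]=1$, and the edge $(a_i,c_k)$ whenever $C[i,k]=1$. A triangle in this graph corresponds precisely to a witness $(i,j,k)$ for a nonzero entry of the entrywise product $(AB)\circ C$. The graph has $3n$ vertices and $m=O(n^2)$ edges, so an $O(m^{1+\gamma})$-time triangle detection algorithm for \emph{all} $\gamma>0$ would yield an $O(n^{2+2\gamma})$ witness-BMM algorithm, contradicting the combinatorial BMM conjecture for every $\gamma<1/2$. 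Padding the input with $\Theta(n^2)$ gadget edges lets one cover sparser edge regimes by the same reduction, so a fixed $\gamma$ depending on the BMM exponent can be extracted.

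A complementary route goes through the 3SUM conjecture: P\u{a}tra\c{s}cu's reduction produces a triangle-finding instance with $m=\Theta(N^{4/3})$ edges from an $N$-element 3SUM instance, so refuting the triangle conjecture with a sufficiently small $\gamma$ breaks the $N^{2-o(1)}$ lower bound for 3SUM. Taking the disjunction of the two reductions, the Triangle Detection Conjecture follows as soon as one of the BMM or 3SUM conjectures holds; this is essentially the informal justification offered by Abboud and Williams for proposing the conjecture in the first place.

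The main obstacle is the unconditional quantifier in the statement: it forbids \emph{every} word-RAM algorithm, and proving such a lower bound without a hardness hypothesis would amount to a breakthrough in circuit and Turing-machine lower bounds far beyond anything currently known. Any realistic ``proof'' must therefore replace the unconditional quantifier by a conditional assumption, at which point the conjecture becomes a theorem modulo that assumption but not an unconditional one. Bridging this gap --- turning the widespread belief in BMM/3SUM hardness into a formally unconditional claim about sparse triangle detection --- is the core difficulty, and with present techniques it appears unassailable. For the purposes of the present paper, I would not attempt to close this gap and would instead invoke the conjecture as a black-box assumption, which is exactly how it is used to derive the $\LICS(H)\ge 6$ branch of \Thm{main}.
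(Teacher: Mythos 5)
You have correctly recognized that this statement is a conjecture imported from Abboud and Williams, which the paper does not prove but only states and invokes as a black-box hardness assumption for the $\LICS(H)\geq 6$ lower bound; your conclusion to do the same matches the paper's treatment exactly. The supplementary discussion of reductions from the combinatorial BMM and 3SUM conjectures is accurate background on why the conjecture is believed, but no proof is expected or given in the paper.
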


\subsection{Main Ideas} \label{sec:ideas}

\mypar{Background for the Upper Bound} We begin with some context on the
main algorithmic ideas used for homomorphism/subgraph counting in bounded degeneracy graphs. Any graph $G$ of bounded degeneracy has an acyclic orientation $G^\rightarrow$, where all outdegrees are bounded. Moreover,  $G^\rightarrow$ can be found in linear time~\cite{matula1983smallest}. For any pattern graph $H$, we consider all possible acyclic orientations. For each such orientation $H^\rightarrow$, we compute
the number of $H^\rightarrow$-homomorphisms (in $G^\rightarrow$). (Directed homomorphisms are maps that preserve the direction of edges.) Finally,
we sum these counts over all acyclic orientations $H^\rightarrow$.
This core idea was embedded
in the seminal paper of Chiba-Nishizeki, and has been presented in such terms
in many recent works~\cite{PiSeVi17,ortmann2017efficient,bressan2019faster,bera2020linear}.

Since $G^\rightarrow$ has bounded outdegrees, for any bounded,
rooted tree $T^\rightarrow$ (edges pointing towards leaves), \emph{all} 
$T^\rightarrow$-homomorphisms can be explicitly enumerated in linear time. To construct
a homomorphism of $H^\rightarrow$, consider the rooted trees 
of a DFS forest $T^\rightarrow_1$, $T^\rightarrow_2, \ldots$ generated
by processing the sources first. We first enumerate all homomorphisms of
$T^\rightarrow_1$, $T^\rightarrow_2, \ldots$ in linear time. We need to count how
many tuples of these homomorphisms
can be ``assembled'' into $H^\rightarrow$-homomorphisms. (We note that the number
of $H^\rightarrow$-homomorphisms can be significantly super-linear.) The main 
idea is to index the rooted tree homomorphisms appropriately, so that 
$H^\rightarrow$-homomorphisms can be counted in linear time. This requires a
careful understanding of the shared vertices among the rooted DFS forest
$T^\rightarrow_1$, $T^\rightarrow_2, \ldots$.

The previous work of the authors showed how this efficient counting
can be done when $|V(H)| \leq 5$, though the proof was  ad hoc~\cite{bera2020linear}. It did
a somewhat tedious case analysis for various $H$, exploiting specific structure
in the various small pattern graphs. Bressan gave a remarkably principled approach, introducing
the notion of the \emph{DAG treewidth}~\cite{bressan2019faster}. We will take some liberties with
the original definition, for the sake of exposition. Bressan defined the DAG
treewidth of $H^\rightarrow$, and showed that when this quantity is $1$, 
$\Hom{G^\rightarrow}{H^\rightarrow}$ can be computed in near-linear time.
The DAG treewidth is $1$ when the following construct exists. For any source
$s$ of $H^\rightarrow$, let $R(s)$ be the set of vertices in $H^\rightarrow$ reachable
from $s$. The sources of $H^\rightarrow$ need to be be arranged in a tree $\cT$ such
that the following holds. If $s$ lies on the (unique) path between $s_1$ and $s_2$ (in $\cT$), then $R(s_1) \cap R(s_2) \subseteq R(s)$. In some sense, this gives
a divide-and-conquer framework to construct (and count) $H^\rightarrow$-homomorphisms. 
Any $H^\rightarrow$-homomorphism can be broken into ``independent pieces''
that are only connected by the restriction of the homomorphism to $R(s)$. By
indexing all the tree homomorphisms appropriately, the total count of $H^\rightarrow$-homomorphisms can be determined in near-linear time by dynamic programming.
Note that we need the DAG treewidth of \emph{all} acyclic orientations of $H$ to be $1$, which is a challenging notion to describe succinctly.

\mypar{From Induced Cycles to DAG tree decompositions} We observe an interesting
contrast between the previous work of the authors and Bressan's work. The former provides a simple family of $H$ for which $\Hom{G}{H}$ can be computed
in near-linear time in bounded degeneracy graphs, yet the proofs were ad hoc. The latter gave a principled
algorithmic approach, but it does not succinctly describe what kinds of $H$ allow
for such near-linear algorithms. Can we get the best of both worlds?

Indeed, that is what we achieve. By a deeper understanding of \emph{why} $|V(H)| \leq 5$ was critical in~\cite{bera2020linear} and generalizing it through the language of 
DAG tree decompositions, we can prove: the DAG treewidth of $H$ is one iff $\LICS(H) \leq 5$. 

When $\LICS(H) \leq 5$, for any acyclic orientation $H^\rightarrow$, we provide a (rather complex) iterative procedure to construct
the desired DAG tree decomposition $\cT$. The proof is intricate and involves many
moving parts. The connection between induced cycles
and DAG tree decompositions is provided by a construct called the \emph{unique reachability graph}. For any set $S$ of sources in $H^\rightarrow$, construct
the following simple, undirected graph $UR(S)$. Add edge $(s,s')$ if there
exists a vertex that is in $R(s) \cap R(s')$, but not contained in any $R(s'')$, for
$s'' \in S \setminus \{s,s'\}$. A key lemma states that if $UR(S)$ contains a cycle
(for any subset $S$ of sources), then $H$ contains an induced cycle of at least twice the length. Any cycle in a simple graph has length at least $3$. So if $UR(S)$ has
a cycle, then $H$ has an induced cycle of length at least $6$. Thus, if $\LICS(H) \leq 5$, for all $S$, the simple graph $UR(S)$ is a forest.

For any set $S$ of sources, we will (inductively) construct a \emph{partial} DAG tree decomposition that only involves $S$. Let us try to identify a ``convenient'' vertex $x \in S$ with the following property. We inductively take the partial DAG tree decomposition $\cT'$ of $S \setminus \{x\}$, and try to attach $x$ as a leaf in $\cT'$
preserving the DAG tree decomposition conditions (that involve reachability). By carefully working out the definitions, we identify a specific intersection
property of $R(x)$ with the reachable sets of the other sources in $S \setminus \{x\}$. When this property holds, we can attach $x$ and extend the partial DAG tree decomposition, as described above. When the property fails, we prove that the degree
of $x$ in $UR(S)$ is at least $2$. But $UR(S)$ is a forest, and thus contains a vertex
of degree $1$. Hence, we can always identify a convenient vertex $x$, and can iteratively build the entire DAG tree decomposition. 

We also prove the converse. If $\LICS(H) \geq 6$, then the DAG treewidth (of some
orientation) is at least two. This proof is significantly less complex, but
crucially uses the unique reachability graph.

\mypar{The Lower Bound: Triangles Become Long Induced Cycles} We start with the
simple construction of ~\cite{bera2020linear} that reduces triangle counting
in arbitrary graphs to $6$-cycle counting in bounded degeneracy graphs.
Given a graph $G$ where we wish to count triangles, we consider the graph
$G'$ where each edge of $G$ is subdivided into a path of length $2$. Clearly,
triangles in $G$ have a 1-1 correspondence with $6$-cycles in $G'$. It is
easy to verify that $G'$ has bounded degeneracy.

Our main idea is to generalize this idea for any $H$ where $\LICS(H) = 6$.
The overall aim is to construct a graph $G'$ where each $H$-homomorphism corresponds
to a distinct induced $6$-cycle in $G'$, which comes from a triangle in $G$. We will
actually fail to achieve this aim, but get ``close enough'' to prove the lower bound.

Let $\overline{H}$ denote the pattern obtained after removing the induced $6$-cycle
from $H$. Let us outline the construction of $G'$. We first take three copies
of the vertices of $G$. For every edge $(u,v)$ of $G$, connect copies of $u$
and $v$ that lie in \emph{different} copies by a path of length two. Note that
each triangle of $G$ has been converted into six $6$-cycles. We then add a single
copy of $\overline{H}$, and connect $\overline{H}$ to the remaining vertices
(these connections depend on the edges of $H$). This completes the description of $G'$.
Exploiting the relation of degeneracy to vertex removal orderings, 
we can prove that $G'$ has 
bounded degeneracy.

It is easy to see that every triangle in $G$ leads to a distinct $H$-homomorphism.
Yet the converse is potentially false. We may have ``spurious" $H$-homomorphisms that do
not involve the induced $6$-cycles that came from triangles in $G$.
By a careful analysis of $G'$, we can show the following. Every spurious
$H$-homomorphism \emph{avoids} some vertex in the copy of $\overline{H}$ (in $G'$).

These observations motivate the problem of \emph{partitioned}-homomorphisms.
Let $\mathbf{P}$ be a partition of the vertices of $G'$ into $k$ sets.
A partitioned-homomorphism is an $H$-homomorphism where each vertex
is mapped to a different set of the partition. We can choose $\mathbf{P}$ 
appropriately, so that the triangle count of $G$ is the number
of partitioned-homomorphisms scaled by a constant (that only depends
on the automorphism group of $H$). Thus, we reduce triangle counting
in arbitrary graphs to counting partitioned-homomorphisms in bounded degeneracy graphs.

Our next insight is to give up the hope of showing a many-one linear-time reduction
from triangle counting to $H$-homomorphisms, and instead settle
for a Turing reduction. This suffices for the lower bound of \Thm{main}. Using
inclusion-exclusion, we can reduce a single instance of partitioned-homomorphism
counting to $2^k$ instances of vanilla $H$-homomorphism counting. The details
are somewhat complex, but this description covers the basic ideas.

When $\LICS(H) > 6$, we replace edges in $G$ by longer paths, to give longer
induced cycles. The partitions become more involved, but the essence of the proof
remains the same.

\section{Related Work}
\label{sec:related}
Counting homomorphisms has a rich history in the
field of parameterized complexity theory.
D{\'\i}az~\etal~\cite{diaz2002counting} designed a 
dynamic programming based algorithm for the $\Hom{G}{H}$
problem with runtime $O(2^{k}n^{\tw(H)+1})$ where $\tw(H)$
is the treewidth of the target graph $H$.
Dalmau and Jonsson~\cite{dalmau2004complexity} proved
that $\Hom{G}{H}$ is polynomial time solvable if and
only if $H$ has bounded treewidth, otherwise it is
$\#W[1]$-complete. More recently, Roth and Wellnitz~\cite{roth2020counting} consider a {\em doubly restricted} version of $\Hom{G}{H}$, where both $H$ and $G$ are from {\em restricted} graph classes.
They primarily focus on the parameterized dichotomy between
poly-time solvable instances and $\#W[1]$-completeness.

We give a brief review of the graph parameters treewidth and degeneracy.
The notion of tree decomposition and treewidth were introduced in a seminal work by Robertson and Seymour~\cite{robertson1983graph,robertson1984graph,robertson1986graph}; although it has been discovered before under
different names~\cite{bertele1973non,halin1976s}.
Over the years, tree decompositions have been used extensively to 
design fast divide-and-conquer algorithms for combinatorial problems.
 Degeneracy is a nuanced measure of sparsity and has
 been known since the early work of Szekeres-Wilf~\cite{szekeres1968inequality}.
 The family of bounded degeneracy graphs is quite rich: it involves
 all minor-closed families, bounded expansion families, and preferential attachment
 graphs. Most real-world graphs tend to have small degeneracy (\cite{goel2006bounded,jain2017fast,shin2018patterns,bera2019graph,bera2020degeneracy}, also Table 2 in~\cite{bera2019graph}),
 underscoring  the practical importance of this class. The 
degeneracy has been exploited for subgraph counting problems in many
algorithmic results~\cite{chiba1985arboricity,eppstein1994arboricity,AhNe+15,jha2015path,PiSeVi17,ortmann2017efficient,jain2017fast,pashanasangi2020efficiently}.

Bressan~\cite{bressan2019faster} introduced 
the concept of DAG treewidth to design faster algorithms for 
homomorphism and subgraph counting problems in bounded degeneracy
graphs. They prove the following dichotomy for the 
subgraph counting problem. For a pattern $H$ with $|V(H)|=k$ and 
an input graph $G$ with $|E(G)|=m$ and degeneracy $\degen$,
one can count $\Hom{G}{H}$ in
$f(\degen,k)O(m^{\dtw(H)}\log m)$ time, where $\dtw(H)$ is the 
DAG treewidth of $H$ (\Cref{thm:linear}). On the other hand, assuming the exponential time hypothesis~\cite{impagliazzo1998problems}, the subgraph counting problem does not admit
any $f(\degen,k)m^{o(\dtw(H)/\ln \dtw(H))})$ algorithm, for any positive function $f:\mathbb{N}\times \mathbb{N} \rightarrow \mathbb{N}$.
Previous work of the authors shows that for every $k \geq 6$, there exists
some pattern $H$ with $k$ vertices, such that $\Hom{G}{H}$ cannot be counted
in linear time, assuming fine-grained complexity conjectures~\cite{bera2020linear}.
We note that these results do not give a complete characterization like \Thm{main}.
They define classes of $H$ that admit near-linear or specific polynomial time algorithms, and show
that \emph{some} $H$ (but not all) outside this class does not have such efficient algorithms.

We remark here that in an independent and parallel work, Gishboliner, Levanzov, and Shapira~\cite{gishboliner2020counting} effectively prove the same characterization for linear time homomorphism counting.





The problem of approximately counting homomorphism and 
subgraphs have been studied extensively in various
Big Data models such as the property testing model~\cite{eden2017approximately,eden2018approximating,assadi2018simple,eden2020faster},
the streaming model~\cite{BarYossefKS02, Manjunath2011,Kane2012, ahn2012graph,Jha2013, Pavan2013, McGregor2016, bera2017towards,bera2020degeneracy}, and the map reduce model~\cite{cohen2009graph, Suri2011, kolda2014counting}.
These works often employ clever sampling based techniques and forego exact algorithms.

Almost half a century ago, Itai and Rodeh~\cite{itai1978finding}
gave the first non-trivial algorithm for the triangle 
detection and finding problem with $O(m^{3/2})$ runtime. 
Currently, the best known algorithm for the triangle detection problem uses fast matrix multiplication and runs in time
$O(\min \{n^\omega, m^{{2\omega}/{(\omega+1)}}\})$~\cite{alon1997finding}.
Improving on the exponent is a major open problem, and it is widely believed that $m^{4/3}$ (corresponding to $\omega=2$) is a lower bound for the problem. Thus, disproving the 
\TRICONJ would require a significant breakthrough.
See~\cite{abboud2014popular} for a detailed list of other
classic graph problems whose hardness is derived using \TRICONJ.


\section{Preliminaries}
We use $G$ to denote the input graph and $H$ to denote the target pattern graph. We consider both $G = (V(G), E(G))$ and $H = (V(H), E(H))$ to be simple, undirected, and connected graphs. 
Throughout the paper, we use $m$ and $n$ to denote
$|V(G)|$ and $|E(G)|$ respectively, for the input graph $G$.
We denote $|V(H)|$ by $k$. 

We use $\HOMC_H$ to denote the problem of counting homomorphisms for a fixed pattern graph $H$. We use $\SUBC_H$ for the analogous subgraph counting problem.

If a subset of vertices $V^\prime \subseteq V(G)$ is deleted from $G$, we denote the remaining subgraph by $G-V^\prime$. We use $G[V^\prime]$ to denote the subgraph of $G$ induced by $V^\prime$.
The length of the largest induced cycle in $H$ is denoted by $\LICS(H)$.

We say that a graph $G$ is $k$\emph{-degenerate} if each non-empty subgraph of $G$ has minimum vertex degree of at most $k$. The smallest integer $k$ such that $G$ is $k$-degenerate is the \emph{degeneracy} of graph $G$. To denote the degeneracy of a graph $G$, we use $\degen(G)$ or simply $\degen$ if $G$ is clear from the context. Observe that by definition, if a graph is $k$-degenerate, all its subgraphs are also $k$-degenerate, so the degeneracy of each subgraph of $G$ is at most $\degen(G)$.
We note that the \emph{arboricity} is a closely related notion, and is related
by a constant factor to the degeneracy.

We will heavily use acyclic orientations of graphs. For the graph $H$ (say), 
we use $\Hdir$ to denote an acyclic orientation of the simple graph $H$.
Vertex orderings and DAGs are closely related to degeneracy. Given a vertex ordering $\prec$ of a graph $G$, we can obtain a DAG $G_\prec\dir$ by orienting each edge $\{u,v\} \in E(G)$ from $u$ to $v$ if 
$u\prec v$.


Now, we define a homomorphism from $H$ to $G$. We denote the number of homomorphism from $H$ to $G$ by $\Hom{G}{H}$.

\begin{definition}\label{Homomorphism}
A homomorphism from $H$ to $G$ is a mapping $\pi: V(H) \rightarrow V(G)$ such that, $\{\pi(u),\pi(v)\} \in E(G)$ for all $\{u,v\} \in E(H)$. If $H$ and $G$ are both directed, then $\pi$ should preserve the directions of the edges. If $\pi$ is injective, then it is called an embedding of $H$ in $G$.
\end{definition}
Next, we define a match (also called copy) of $H$ in $G$.
\begin{definition}\label{Ematch}
A match of $H$ in $G$ is a subgraph of $G$ that is isomorphic to $H$. If a match of $H$ is an induced subgraph of $G$, then it is an induced match of $H$ in $G$.
\end{definition}
Observe that each embedding of $H$ in $G$ corresponds to a match of $G$.

\mypar{DAG tree decompositions}
Bressan~\cite{bressan2019faster} defined the notion of {\em DAG tree decompositions}
for DAGs, analogous to the widely popular tree decompositions for undirected graphs. The crucial 
difference in this definition is that only the set of source vertices
in the DAG are considered for creating the nodes 
in the tree. Let $D$ be a DAG and $S\subseteq V$ be the set of source vertices in $D$. For a source vertex $s\in S$, let $\R_D(s)$ denote the set of vertices 
in $D$ that are reachable from $s$. For a subset of the sources $B\subseteq S$,
let $\R_D(B) = \bigcup_{s\in B} \R_D(s)$. When the underlying DAG is clear from the context, we drop the subscript $D$. 

\begin{definition}[\dtd~\cite{bressan2019faster}]
\label{def:dtd}
Let $D$ be a DAG with source vertices $S$. A \dtd of $D$ is a tree 
 $\mcT=(\mcB,\mcE)$ with the following three properties.
\begin{enumerate}
    \item Each node $B\in \mcB$ (referred to as a ``bag'' of sources) is a subset of the source vertices $S$: $B \subseteq \mcS$. \label{item:bag} 
    \item The union of the nodes in $\mcT$ is the entire set $S$: $\bigcup_{B \in \mcB} B = \mcS$. \label{item:bag_union}
    \item For all $B,~B_1,~B_2 \in \mcB$, if $B$ lies on the unique path between the nodes
    $B_1$ and $B_2$ in $\mcT$, then $\R(B_1) \cap \R(B_2) \subseteq \R(B)$. \label{item:intersection}
\end{enumerate}
\end{definition}
\noindent The {\em \dtwtext} of a DAG $D$ is then defined as the minimum over
all possible \dtd{s} of $D$,
the size of the maximum {\em bag}. For a simple undirected graph $H$,
the \dtwtext 
is the maximum \dtwtext over all possible acyclic orientations 
of $H$. We denote the \dtwtext of $D$ and $H$ by $\dtw(D)$ and 
$\dtw(H)$, respectively.

Bressan~\cite{bressan2019faster} gave an algorithm for solving
the $\HOMC_H$ problem in bounded degeneracy graphs. 
\begin{theorem}[Theorem 16 in~\cite{bressan2019faster}]
\label{thm:linear}
Given an input graph $G$ on $m$ edges with degeneracy $\degen$ and 
a pattern graph $H$ on $k$ vertices, there is an 
$O(\degen^{k}m^{\dtw(H)}\log n)$ time algorithm for solving the
$\HOMC_H$ problem.
\end{theorem}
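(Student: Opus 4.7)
The plan is to reduce to counting directed homomorphisms from an acyclic orientation of $H$ into an acyclic orientation of $G$, and then to execute dynamic programming over a DAG tree decomposition of each such orientation. As a preprocessing step, I would compute the degeneracy ordering of $G$ in $O(m)$ time via the classical peeling algorithm, and orient every edge from the lower-ranked endpoint to the higher-ranked one to obtain $G^\rightarrow$ with maximum out-degree at most $\degen$. Using the standard identity $\Hom{G}{H} = \sum_{H^\rightarrow} \Hom{G^\rightarrow}{H^\rightarrow}$, where the sum ranges over acyclic orientations of $H$, it suffices to solve the problem for a fixed orientation $H^\rightarrow$, paying a constant $2^{|E(H)|} \leq 2^{k^2}$ blow-up that is absorbed into $\degen^{O(k)}$.

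Fix an acyclic orientation $H^\rightarrow$ and a DAG tree decomposition $\mcT = (\mcB, \mcE)$ of width $t \leq \dtw(H)$. Root $\mcT$ at an arbitrary bag. The central subroutine, to be invoked at each bag $B$, enumerates all homomorphisms of $H^\rightarrow[\R(B)]$ into $G^\rightarrow$: fix images of the $|B|\leq t$ sources, then greedily extend by following out-edges of $G^\rightarrow$ while checking consistency with $E(H^\rightarrow)$. Since each non-source vertex of $\R(B)$ has at least one in-neighbor in $H^\rightarrow$ whose image is already set and each vertex of $G^\rightarrow$ has at most $\degen$ out-neighbors, the enumeration runs in $O(m^t \degen^{k-t})$ time (using an edge rather than a pair of vertices to anchor each source pair $(s, \text{out-neighbor})$ keeps us at $m^t$ instead of $n^t$). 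Store the resulting list indexed by the image of $\R(B) \cap \R(\mathrm{parent}(B))$, sorted so that lookups cost $O(\log n)$ per access.

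The dynamic program then processes $\mcT$ in post-order. For each bag $B$ with children $B_1,\ldots,B_d$, the partial count $T_B[\phi]$ is the number of homomorphisms of the subgraph $H^\rightarrow[\bigcup_{B' \text{ below } B} \R(B')]$ whose restriction to $\R(B)$ equals $\phi$. The key correctness ingredient is the intersection property (Item~\ref{item:intersection} of \Cref{def:dtd}): for distinct children $B_i, B_j$ we have $\R(B_i)\cap\R(B_j)\subseteq\R(B)$, so two child-subtree homomorphisms that agree with $\phi$ on $\R(B)$ agree everywhere they can overlap, and hence multiply. This yields the recurrence
\[
T_B[\phi] \;=\; \mathbbm{1}[\phi \text{ extends to } \R(B)] \cdot \prod_{i=1}^{d} \sum_{\psi_i \sim \phi} T_{B_i}[\psi_i],
\]
where $\psi_i \sim \phi$ means $\psi_i$ and $\phi$ agree on $\R(B)\cap \R(B_i)$. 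The final answer is $\sum_\phi T_{B_0}[\phi]$.

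Cost accounting: each bag contributes $O(m^t \degen^{k-t} \log n)$ time for enumeration plus table combination (the log factor comes from sorted lookups into each child's table, keyed by the shared portion of $\R$). Summing over the $O(k)$ bags of $\mcT$ and over all $2^{O(k^2)}$ orientations of $H$ gives the claimed bound $O(\degen^{k} m^{\dtw(H)} \log n)$. The main obstacle I anticipate is verifying that the DP truly counts each $H^\rightarrow$-homomorphism exactly once: one must argue, using Item~\ref{item:bag_union} of \Cref{def:dtd}, that the union of $\R(B)$ over all bags is $V(H^\rightarrow)$ so that no vertex is missed, and then invoke the intersection property repeatedly to rule out inconsistent gluings between sibling subtrees. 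The rest is a careful but mechanical accounting of the enumeration cost, where the substitution $n \leq m+1$ (for connected $G$) and the degeneracy bound $m \leq \degen \cdot n$ let us express the running time in $m$ only.
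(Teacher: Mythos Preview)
The paper does not prove this theorem; it is quoted as a black box from Bressan's work (note the attribution ``Theorem 16 in~\cite{bressan2019faster}'' in the theorem header), so there is no ``paper's own proof'' to compare your proposal against. Your sketch is a faithful outline of Bressan's original argument: orient $G$ by a degeneracy ordering, decompose $\Hom{G}{H}$ as a sum over acyclic orientations $H^\rightarrow$, and for each orientation run a post-order dynamic program on a DAG tree decomposition, where the per-bag work enumerates all homomorphic images of $\R(B)$ by anchoring the at most $t$ sources via incident edges and extending along the bounded out-degree of $G^\rightarrow$. One cosmetic point: the factor $2^{|E(H)|}$ counting orientations is not literally ``absorbed into $\degen^{O(k)}$'' (it depends only on $k$, not on $\degen$), but since $H$ is fixed this factor lives in the hidden constant of the big-$O$ and is harmless.
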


\section{LICL and Homomorphism Counting in Linear Time} 
\label{sec:LICL-DTW-sec}
We prove that the class of graphs with $\LICS \leq 5$ is equivalent to
the class of graphs with $\dtw=1$. 

\begin{theorem}
\label{thm:LICS_DTW}
For a simple graph $H$, $\LICS(H) \leq 5$ if and only if $\dtw(H)=1$.
\end{theorem}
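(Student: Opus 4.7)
The plan is to prove both directions separately, using the unique reachability graph $UR(\cdot)$ described in the Main Ideas as the central tool. I will rely on the key lemma (established elsewhere in the paper) that any $\ell$-cycle in $UR(S)$ forces an induced cycle of length at least $2\ell$ in $H$. In particular, if $\LICS(H) \leq 5$, then $UR(S)$ is a forest for every subset $S$ of sources of every acyclic orientation of $H$.

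For the direction $\LICS(H) \geq 6 \Rightarrow \dtw(H) \geq 2$, I will fix an induced cycle $C = v_1 v_2 \cdots v_k$ with $k \geq 6$ in $H$ and build an acyclic orientation $H^\rightarrow$ from a three-block topological ordering: first a chosen alternating subset $\mcS_C \subseteq V(C)$ (namely $\{v_1, v_3, \ldots, v_{k-1}\}$ for even $k$ and $\{v_3, v_5, \ldots, v_k\}$ for odd $k$), then the remaining cycle vertices, and finally every non-cycle vertex. The vertices of $\mcS_C$ are pairwise non-adjacent in $H$ (there are no chords in the induced $C$), so placing them first makes each of them a source of $H^\rightarrow$ with all incident edges pointing outward. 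Placing non-cycle vertices last guarantees that no such vertex can reach a cycle vertex, so the reachability sets restricted to $V(C)$ are determined entirely by the local cycle orientation. A direct case check then shows that $UR(\mcS_C)$ contains a cycle of length $\lfloor k/2 \rfloor \geq 3$ — consecutive sources share a unique in-between sink, while non-consecutive sources have disjoint $R$-sets within $C$. Finally, a cycle in $UR(\mcS_C)$ rules out any singleton-bag decomposition: if $\{u\}, \{v\}, \{w\}$ are three bags with $\{v\}$ on the tree path between the other two, item~(3) of \Def{dtd} forces $R(u) \cap R(w) \subseteq R(v)$, which fails along any cycle of $UR(\mcS_C)$. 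Hence $\dtw(H^\rightarrow) \geq 2$.

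For the main direction $\LICS(H) \leq 5 \Rightarrow \dtw(H) = 1$, I will fix an arbitrary acyclic orientation $H^\rightarrow$ with source set $\mcS$ and construct a singleton-bag DAG tree decomposition by induction on $|\mcS|$. Call a source $x \in \mcS$ \emph{convenient} if there exists $y \in \mcS \setminus \{x\}$ with $R(x) \cap R(y') \subseteq R(y)$ for every $y' \in \mcS \setminus \{x, y\}$. Convenience is precisely the hypothesis that lets me attach $\{x\}$ as a leaf-child of $\{y\}$ in the decomposition $\mcT'$ built recursively for $\mcS \setminus \{x\}$: every tree path terminating at $\{x\}$ now routes through $\{y\}$, so item~(3) of \Def{dtd} is preserved. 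The induction therefore reduces to proving that $\mcS$ always contains a convenient source.

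The principal obstacle is exactly this existence claim. My plan is to prove its contrapositive: any non-convenient source $x$ has $\deg_{UR(\mcS)}(x) \geq 2$. If every source were non-convenient, then $UR(\mcS)$ would have minimum degree at least $2$, forcing a cycle and contradicting its being a forest. To extract two $UR$-edges at a non-convenient $x$, I will choose a candidate parent $y \in \mcS \setminus \{x\}$ that maximises $R(x) \cap R(y)$ under inclusion, read off one witness $v_1 \in R(x) \cap R(y') \setminus R(y)$ that forces a $UR$-edge from $x$, and then use the failure of convenience at a second, carefully chosen candidate to produce a structurally distinct witness $v_2$. Turning this informal selection into rigorous book-keeping — bridging the gap between the global convenience condition and the local uniqueness required for $UR$-membership — is the technical heart of the argument, and the step where the intricate case analysis alluded to in the overview must be carried out in full.
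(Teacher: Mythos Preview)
Your reverse direction ($\LICS(H)\ge 6\Rightarrow\dtw(H)\ge 2$) is fine and close in spirit to the paper's; the paper selects only three sources on the induced cycle and produces a triangle in $\uniqueH_{\mcS}$, whereas you produce a longer cycle, but both arguments conclude in the same way. Your one-line justification that a cycle in $UR(\mcS_C)$ rules out singleton bags needs the extra observation that any intermediate bag $\{v\}$ on the tree path between two $UR$-adjacent cycle sources cannot reach the cycle witness (true because your ordering makes every non-$\mcS_C$ source a non-cycle vertex with no path back to $V(C)$), but that is easily filled in.

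The forward direction has a genuine gap. Your plan hinges on the implication ``$x$ non-convenient $\Rightarrow \deg_{UR(\mcS)}(x)\ge 2$'', applied to every source individually. This implication is false. Take the DAG on sources $x,a,b,c$ and sinks $p_1,p_2,p_3$ with arcs $x\to p_1,p_2,p_3$; $a\to p_1,p_3$; $b\to p_1,p_2$; $c\to p_2,p_3$. Then $R(x)\cap R(a)=\{p_1,p_3\}$, $R(x)\cap R(b)=\{p_1,p_2\}$, $R(x)\cap R(c)=\{p_2,p_3\}$, so no single $y$ covers all three intersections and $x$ is non-convenient. Yet every vertex of $R(x)\cap R(s)$ is reachable from a third source, so $x$ has $UR(\mcS)$-degree~$0$. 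In fact $UR(\mcS)$ is edgeless here, so even the forest hypothesis does not rescue the implication. Your proposed extraction of a $UR$-edge from a witness $v_1\in R(x)\cap R(y')\setminus R(y)$ fails for exactly this reason: such a $v_1$ can be reachable from several sources and need not certify \emph{unique} reachability.

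The paper avoids this trap by a different inductive bookkeeping. Instead of seeking a source $x$ to attach directly, for each $x$ it looks at a width-one partial decomposition $\mcT_{-x}$ of $\mcS\setminus\{x\}$ (which exists by the inductive hypothesis) and asks whether some \emph{leaf} $\ell$ of $\mcT_{-x}$, with parent $d$, satisfies $R(x)\cap R(\ell)\subseteq R(d)$. The point is that the tree structure already guarantees $d$ is an $(\mcS\setminus\{x\})$-cover of $\ell$, so any witness $v\in R(x)\cap R(\ell)\setminus R(d)$ is automatically unreachable from every other source, yielding a genuine $UR$-edge $(x,\ell)$. Since $\mcT_{-x}$ has at least two leaves, failure of this test for all leaves gives two $UR$-edges at $x$ for free. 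This is precisely the ``uniqueness for free from the inductive tree'' step that your direct convenience argument lacks; you should reorganise the induction around leaves of $\mcT_{-x}$ rather than around a global covering source.
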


By \Thm{linear}, this implies that $\Hom{G}{H}$ can be determined
in near-linear time when $\LICS(H) \leq 5$ and $G$ has bounded degeneracy.

We first prove that, for a simple graph $H$, if $\LICS(H)$ is at most five, then $\dtw(H) = 1$. This is discussed in~\Cref{subsec:lics5dtw1}. Then we prove the converse: if $\LICS(H)$ is at least six, then $\dtw(H) \geq 2$. 
We take this up in~\Cref{subsec:lics6dtw2}.

\mypar{Outline of the Proof Techniques} 
Before discussing the proofs in
detail, we provide a high level description of the proof techniques.

Fix an arbitrary acyclic orientation $\Hdir$ of $H$. 
We use $\mcS$ to denote 
the set of source vertices.
We describe a recursive procedure to build a \dtd of width one,
starting from a single source in $\mcS$.

Note
that property (2) in the definition of \dtd (\Cref{def:dtd}) requires
the union of nodes in the tree to cover all the source vertices in
$\mcS$. So, we need to be careful, if we wish to use induction
to construct the final DAG tree decomposition. To this
end, we relax the property (1) and (2) of \dtd and define a notion
of {\em \pdtd}. In a \pdtd with respect to a subset $S_p\subseteq \mcS$, the nodes in the tree are subsets of $S_p$ and the union
of the nodes cover the set $S_p$. The requirement of property (3) remains
the same. The width of the tree is defined same as before. We formalize this in~\Cref{def:dtd-partial}.
Now, consider a subset $S_{r+1} \subseteq \mcS$ of size $r+1$. 
We show how to build a \pdtd of width one for $S_{r+1}$, assuming there
exists a \pdtd of width one for any subset of $\mcS$ of size $r$.

Let $x \in S_{r+1}$ and $S_{-x}$ denote the set after removing the
element $x$: $S_{-x} = S_{r+1}\setminus \{x\}$. Let $\mcT_{-x}$ be a
\pdtd of width one for the set $S_{-x}$ (such a tree exists by assumption).
We identify a ``good property'' of the tree $\mcT_{-x}$ that enables
construction of a width one \pdtd for the entire set $S_{r+1}$.
The property is the following: there exists a leaf node $\ell$ in $\mcT_{-x}$
connected to the node $d\in \mcT_{-x}$ such that $\R(x)\cap \R(\ell) \subseteq \R(d)$. We make this precise in~\Cref{def:good_pair}. 
Assume $\mcT_{-x}$ has this good property, and $\ell\in \mcT_{-x}$ be the leaf that enables $\mcT_{-x}$ to posses the good property. 
Then we first construct a width one \pdtd $\mcT_{-\ell}$ 
for the set $S_{-\ell}=S_{r+1}\setminus \{\ell\}$ and after that
add $\ell$ as a leaf node to the node $d$ in $\mcT_{-\ell}$. We prove that 
the resulting tree is indeed a valid width one 
\pdtd for $S_{r+1}$(we prove this in~\Cref{claim:pdtd}). To complete the proof, it is now sufficient to 
show the existence of an element $x\in S_{r+1}$ such that a \pdtd for 
$\mcT_{-x}$ has the good property. This is the key technical element
that distinguishes graphs with  $\LICS$ at most $5$ from those with $\LICS$ at least six. 

We make a digression and discuss this key technical element further. We consider a 
graph that captures certain reachability aspects of the source
vertices in $\Hdir$. We define this as the unique rechability
graph, $\uniqueH_{S_p}$, for a subset of the source
vertices $S_p\subseteq \mcS$. The vertex set of $\uniqueH_{S_p}$
is simply the set $S_p$. Two vertices $s_1$ and $s_2$ in $\uniqueH_{S_p}$
are joined by an edge if and only if there exists a vertex $v\in V(\Hdir)$
such that only $s_1$ and $s_2$ among the vertices in $S_p$, can reach $v$
in $\Hdir$. We prove that, if the underlying 
undirected graph $H$ has $\LICS(H) \leq 5$, then the graph $\uniqueH_{S_p}$, 
for any subset $S_{p}\subseteq \mcS$, is acyclic. 
This is given in~\Cref{lem:cycle-existence} in~\Cref{subsec:key_lemma}.

Now, coming back to the proof of~\Cref{lem:lics5dtw1}, 
we show that there must exist
an element $x\in S_{r+1}$ such that a \pdtd for 
$\mcT_{-x}$ has the ``good property'', as otherwise the unique 
reachability graph $\uniqueH_{S_{r+1}}$ over the set $S_{r+1}$
has a cycle. However, this contradicts $\LICS(H) \leq 5$ (\Cref{lem:cycle-existence}). 
This is established in~\Cref{claim:exists_good_pair}.
This completes the proof of ~\Cref{lem:lics5dtw1}.

Now consider~\Cref{lem:lics6dtw2}. Observe that, to prove this
lemma, it is sufficient to exhibit a DAG $\Hdir$ of $H$ with $\dtw(\Hdir) \geq 2$. We first prove in~\Cref{lem:triangle_to_tau}
that if the the unique reachability graph $\uniqueH_{\mcS(\Hdir)}$ 
has a triangle, then $\dtw(\Hdir) \geq 2$.
Then, for any graph $H$ with $\LICS(H) \geq 6$, we construct a
DAG $\Hdir$ such that the unique reachability graph $\uniqueH_{\mcS(\Hdir)}$ has a triangle. It follows that 
$\dtw(H)\geq 2$. 

\subsection{Main Technical Lemma}
\label{subsec:key_lemma}
In this section, we describe our main technical lemma.
We define a unique reachability
graph for a DAG $\Hdir$ over a subset of source vertices $S_{p} \subseteq \mcS(\Hdir)$. This graph captures a certain reachability aspect of the source vertices in $S_{p}$ in the graph $\Hdir$. 
More specifically, two vertices $s_1$ and $s_2$ are joined by
and edge in $\uniqueH_{S_p}$ if and only if 
there exists a vertex $v$ in $V(\Hdir)$ such that only $s_1$ and $s_2$ among the vertices in $S_p$, can reach $v$ in $\Hdir$.

\begin{definition}[Unique reachability graph]
\label{def:unique_rechability}
Let $\Hdir$ be a DAG of $H$ with source vertices $\mcS$ and
$S_p\subseteq \mcS $ be a subset of $\mcS$. 
We define a unique reachability graph 
$\uniqueH_{S_p}(S_p, E_{S_p})$ on the vertex set $S_p$, 
and the edge set $E_{S_p}$ such that there exists an edge
 $e=\{s_1,s_2\} \in E_{S_p}$, for $s_1,s_2\in S_p$,
if and only if the set $\left (\R_{\Hdir}(s_1) \cap \R_{\Hdir}(s_2) \right) \setminus \R_{\Hdir}(S_p \setminus \{s_1,s_2\})$ is non-empty.
\end{definition}

We are interested in the existence of a cycle in
$\uniqueH_{S_p}$. We show that a cycle in $\uniqueH_{S_p}$ is closely related
to an induced cycle in $H$, the underlying undirected graph of $H\dir$. 
More specifically, we prove that if $\LICS$ of $H$ is at most
five, then $\uniqueH_{S_p}$ must be acyclic for each subset
$S_p\subseteq \mcS$.

\begin{lemma} \label{lem:cycle-existence}
Let $\Hdir$ be a DAG of $H$ with source vertices $\mcS$ and
$S_p\subseteq \mcS $ be an arbitrary subset of 
$\mcS$. Let $\uniqueH_{S_p}(S_p, E_{S_p})$ be the
unique reachability graph for the subset $S_p$.
If $\LICS(H) \leq 5$, then $\uniqueH_{S_p}$ is acyclic.
\end{lemma}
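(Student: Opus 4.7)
The plan is to argue by contrapositive: assuming $\uniqueH_{S_p}$ contains a cycle, I construct an induced cycle of length at least $6$ in $H$, contradicting $\LICS(H)\leq 5$. First, I pick a \emph{shortest} cycle $C: s_1,s_2,\ldots,s_t,s_1$ in $\uniqueH_{S_p}$, so $t\geq 3$ and $C$ is chordless in $\uniqueH_{S_p}$. For each $i$ (indices mod $t$), \Cref{def:unique_rechability} furnishes a witness vertex $v_i \in (\R_{\Hdir}(s_i)\cap \R_{\Hdir}(s_{i+1}))\setminus \R_{\Hdir}(S_p\setminus\{s_i,s_{i+1}\})$. In $\Hdir$, I pick a shortest directed path $P_i$ from $s_i$ to $v_i$ and a shortest directed path $Q_i$ from $s_{i+1}$ to $v_i$; both have length at least $1$ since the $s_j$ are distinct sources. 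Among all valid choices of witnesses $v_i$ and shortest paths $P_i,Q_i$, I fix one that minimizes $\sum_i (|P_i|+|Q_i|)$.

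The key structural observation is a \emph{locality of reachability}: any vertex $w$ on $P_i$ or $Q_i$ reaches $v_i$ in $\Hdir$, so any source $s \in S_p$ that reaches $w$ also reaches $v_i$, forcing $s \in \{s_i, s_{i+1}\}$ by the defining property of $v_i$. In particular, internal vertices of $P_i$ and $Q_i$ have positive in-degree and hence are not source vertices at all, so they cannot appear in $\{s_1,\ldots,s_t\}$.

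With this in hand, I examine the undirected closed walk
\[
W\colon\; s_1 \xrightarrow{P_1} v_1 \xleftarrow{Q_1} s_2 \xrightarrow{P_2} v_2 \xleftarrow{Q_2} \cdots \xrightarrow{P_t} v_t \xleftarrow{Q_t} s_1
\]
in $H$, of length $\sum_i (|P_i|+|Q_i|)\geq 2t\geq 6$, and aim to show that $W$ is in fact a chordless cycle. For simplicity: a coincidence between internal vertices of $P_i\cup Q_i$ and $P_j\cup Q_j$ forces $\{s_i,s_{i+1}\}\cap\{s_j,s_{j+1}\}\neq\emptyset$ via the locality observation, which by distinctness of the $s_k$ pins $(i,j)$ down to a few adjacent cases; in each case, cut-and-paste at the shared vertex either shortens some $P_k$ or $Q_k$, or yields a new unique-reachability witness that shortcuts $C$ into a strictly shorter cycle in $\uniqueH_{S_p}$, contradicting the minimality of $\sum_i (|P_i|+|Q_i|)$ or of $t$. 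For no chord: an $H$-edge between two non-consecutive vertices of $W$ admits an analogous shortcut, giving the same contradiction after verifying that the replacement vertex is still a valid unique-reachability witness.

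The main technical obstacle is this case analysis. A chord may be a forward or backward edge in $\Hdir$; it may land on an interior vertex of a $P_i$, a $Q_i$, or on some $s_j$ or $v_j$; and the pair $(i,j)$ of indices involved may be consecutive on $C$ or not. In each subcase one must use the locality observation to exhibit a valid replacement witness whose $S_p$-reachability set is exactly some pair of sources from $C$, so that the replacement contradicts the minimality of the chosen configuration (either by shortening a shortest path or by producing a shorter cycle in $\uniqueH_{S_p}$). Completing this bookkeeping establishes that $W$ is a chordless cycle in $H$ of length at least $2t\geq 6$, contradicting $\LICS(H)\leq 5$ and finishing the proof.
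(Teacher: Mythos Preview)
Your approach is essentially the paper's: argue the contrapositive by choosing, for each edge $\{s_i,s_{i+1}\}$ of a cycle in $\uniqueH_{S_p}$, a witness $v_i$ minimizing $\dist(s_i,v)+\dist(s_{i+1},v)$, take shortest directed paths to it, and argue the resulting closed walk in $H$ is an induced cycle of length at least $2t\ge 6$. The paper uses only the per-edge minimization of the witness and does not take a shortest cycle in $\uniqueH_{S_p}$ or globally minimize $\sum_i(|P_i|+|Q_i|)$; after checking (i) each path is induced, (ii) the two paths to $v_i$ meet only at $v_i$, and (iii) no edge enters $P_i\cup Q_i$ from $\R_{\Hdir}(S_p\setminus\{s_i,s_{i+1}\})$, it simply asserts the induced-cycle conclusion. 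Your ``locality'' observation is exactly the mechanism behind the paper's step (iii).

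One caution about the deferred ``bookkeeping'': your three minimality devices cleanly handle interactions between non-adjacent index pairs (via locality plus chordlessness of $C$) and the $P_i$--$Q_i$ pair (via the choice of $v_i$), but the adjacent case---an $H$-edge or shared vertex between $Q_{i-1}$ and $P_i$, the two out-paths rooted at the same $s_i$---does not obviously fall to any of them. Both endpoints may be reachable from $s_i$ alone among $S_p$, so you get neither a shorter path to an existing witness, nor a new two-source witness, nor a chord of $C$ in $\uniqueH_{S_p}$. The paper's write-up does not spell this case out either; but if you want a fully rigorous argument rather than a sketch, this is where the real work lies (for instance, one can first force $Q_{i-1}$ and $P_i$ to agree on a common prefix by choosing them inside a single shortest-path tree from $s_i$, and then argue from the branching point).
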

\begin{proof}
We in fact prove a stronger claim. We show that if $\uniqueH_{S_p}$ has an $\ell$-cycle, then $\LICS(H) \geq 2\ell$.
Consider an edge $\{s_i,s_{j}\}$ in $E_{S_p}$. Let $\UR(s_i,s_j)$ denote
the set of vertices in $\Hdir$ reachable from $s_i$ and $s_j$ both, but 
non-reachable from any other vertices in $S_p$. Let $\dist(s,t)$ denote the length of the shortest directed path from $s$ to $t$ in $\Hdir$. We set
$\dist(s,t)=\infty$, if $t$ is not reachable from $s$. Now, let $v_{i,j}$ be the vertex in the $\UR(s_i,s_j)$ set with the smallest total
distance (directed) from $s_i$ and $s_j$ (breaking ties arbitrarily).
Formally, $v_{i,j} =  \argmin_{v} \dist(s_i,v) + \dist(s_j,v)$,
where $v \in \UR(s_i,s_j)$.

\begin{figure*}
\centering
\begin{tikzpicture}[nd/.style={scale=1,circle,draw,fill=blue,inner sep=2pt},minimum size = 6pt]    
    \matrix[column sep=3cm, row sep=1.2cm,ampersand replacement=\&]
    {
     \URCycle\\
    };
 \end{tikzpicture}
\caption{Let $S_p=\{s_1,s_2,s_3\}$. On the left, we give an example of 
a $\uniqueH_{S_p}$ graph with a triangle. On the right, we give a
possible example of the vertices $v_{1,2}$, $v_{2,3}$, and $v_{3,1}$
($v_{i,j}$ is as defined in the proof of~\Cref{lem:cycle-existence}).
$C^\prime$ forms an induced cycle of length six in $H$.}
\label{fig:URcycle}
\end{figure*}
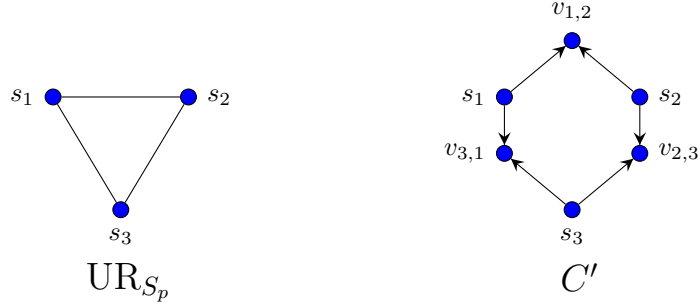

Let $C=s_1,s_2,\ldots,s_{\ell},s_1(=s_{\ell+1})$ be an $\ell$-cycle in $\uniqueH_{S_p}$.
Then using $C$ and the vertices $v_{i,i+1}$,
for $i\in [\ell]$ (abusing notation, we take $v_{\ell,\ell+1}=v_{\ell,1}$), we construct a cycle of length at least $2\ell$ in $H$. Denote by $p_{s \rightarrow v}$ the directed path from a source vertex $s \in S_p$ to a vertex  $v\in \Hdir$.
Intuitively, inserting the paths $p_{s_i \rightarrow v_{i,i+1}}$ and $p_{s_{i+1} \rightarrow v_{i,i+1}}$ between the source $s_i$ and $s_{i+1}$,
for each $i\in [\ell]$ (again, taking $s_{\ell+1}=s_1$),
induces a  cycle of length at least $2\ell$ in $H$. 
See~\Cref{fig:URcycle} for a simple demonstration of this. We make this formal below.

Let $V(p)$ denote the set of vertices of a path $p$. Any edge between vertices in $V(p_{s_i \rightarrow v_{i,i+1}})$ other than the edges of $p_{s_i \rightarrow v_{i,i+1}}$, results in either a path between $s_i$ and $v_{i,i+1}$ shorter than $\dist(s_i,v_{i,i+1})$ or a directed cycle in $H\dir$. Thus, the edges of $p_{s_i \rightarrow v_{i,i+1}}$ are the only edges between vertices in $V(p_{s_i \rightarrow v_{i,i+1}})$. 
It is easy to see that any edge between $V(p_{s_i \rightarrow v_{i,i+1}}) \setminus \{v_{i,i+1}\}$ and $V(p_{s_{i+1} \rightarrow v_{i,i+1}}) \setminus \{v_{i,i+1}\}$ result in a vertex $v^\prime_{i,i+1}$ where $\dist(s_i,v^\prime_{i,i+1})+\dist(s_{i+1},v^\prime_{i,i+1}) < \dist(s_i,v_{i,i+1})+\dist(s_{i+1},v_{i,i+1})$, therefore no such edges exist. Also, any edge from a vertex in $\R_{\Hdir}(S_p \setminus \{s_i,s_{i+1}\})$ to a vertex in $V(p_{s_i \rightarrow v_{i,i+1}})$ or $V(p_{s_{i+1} \rightarrow v_{i,{i+1}}})$ implies that $v_{i,i+1} \in \R_{\Hdir}(S_p \setminus \{s_i,s_{i+1}\})$, which is not true by definition. Hence, there are no such edges either.

We use $E(p)$ to denote the set of edges of a path $p$. 
For convenience, we assume $\ell+1$ in the index is to be treated as $1$ instead. Let
\begin{align*}
    V_{C^\prime} & = \bigcup_{p \in P} V(p), ~~~~
    E_{C^\prime}  = \bigcup_{p \in P} E(p), \\
    \text{where } P & = \bigcup_{i\in[\ell]} \{p_{s_i \rightarrow v_{i,i+1}},p_{s_{i+1} \rightarrow v_{i,i+1}}\}.
\end{align*}
Now, it is easy to see that the graph induced by the set $V_{C^\prime}$ is indeed an induced cycle $C^\prime$.
There are $2\ell$ paths in $P$, and each path $p \in P$ has at least two vertices. 
As we showed above, these paths do not share edges. Thus the length of $C^\prime$ is at least $2\ell$, and $\LICS(H) \geq 2\ell$.
Considering the contrapositive, we deduce that if $\LICS(H) \leq 5$, then $\uniqueH_{S_p}$
is acyclic.
\end{proof}

\subsection{DAG Treewidth for Graphs with LICL at most Five}
\label{subsec:lics5dtw1}

In this section, we prove the following lemma.

\begin{restatable}{lemma}{licsfivedtwone}
\label{lem:lics5dtw1}
For every simple graph $H$, if $\LICS(H) \leq 5$ then $\dtw(H) = 1$.
\end{restatable}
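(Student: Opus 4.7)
The plan is to fix an arbitrary acyclic orientation $\Hdir$ of $H$ with source set $\mcS$, and construct a \dtd of $\Hdir$ of width one, which gives $\dtw(\Hdir)=1$; since $\Hdir$ is arbitrary, this proves $\dtw(H)=1$. Because the full decomposition must cover $\mcS$, it is cleaner to work with the relaxed notion of \pdtd from~\Cref{def:dtd-partial}, where bags are subsets of some $S_p\subseteq\mcS$ and the bags cover $S_p$ while still satisfying property~(3) of~\Cref{def:dtd}. I will build a width-one \pdtd for every $S_p\subseteq\mcS$ by induction on $|S_p|$, with $|S_p|=1$ trivial.

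For the inductive step from $r$ to $r+1$, given $S_{r+1}\subseteq\mcS$, the aim is to identify some $x\in S_{r+1}$ and a width-one \pdtd $\mcT_{-x}$ of $S_{-x}:=S_{r+1}\setminus\{x\}$ containing a \good in the sense of~\Cref{def:good_pair}: a leaf $\ell\in\mcT_{-x}$, adjacent to a node $d$, with $\R(x)\cap\R(\ell)\subseteq\R(d)$. Given such a triple, I would then invoke the inductive hypothesis on $S_{r+1}\setminus\{\ell\}$ to obtain a width-one \pdtd $\mcT_{-\ell}$ and, using the good-pair condition as the certificate for the new tree edge, attach $\ell$ as a leaf adjacent to the bag corresponding to $d$ in $\mcT_{-\ell}$. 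Verifying that the resulting tree is a valid width-one \pdtd of $S_{r+1}$ is a self-contained combinatorial check on intersections of reachable sets, which I would package as a standalone claim mirroring~\Cref{claim:pdtd}.

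The heart of the argument --- and the main obstacle --- is showing that such an $x$ always exists, and this is the only place where the hypothesis $\LICS(H)\leq 5$ enters. My approach is to consider the unique reachability graph $\uniqueH_{S_{r+1}}$ from~\Cref{def:unique_rechability} and argue by contradiction. If no $x$ admits a good-pair in any inductively produced \pdtd of $S_{-x}$, then unpacking what this means for reachable sets yields, for every $x\in S_{r+1}$, at least two distinct sources $s'\in S_{r+1}\setminus\{x\}$ with $\R(x)\cap\R(s')\not\subseteq\R(S_{r+1}\setminus\{x,s'\})$; equivalently, every vertex of $\uniqueH_{S_{r+1}}$ has degree at least two. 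However,~\Cref{lem:cycle-existence} guarantees that $\uniqueH_{S_{r+1}}$ is acyclic under $\LICS(H)\leq 5$, hence a forest, and any nonempty finite forest has a vertex of degree at most one --- a contradiction. The delicate step I expect to spend the most effort on is this translation from ``no good-pair exists'' to ``minimum degree $\geq 2$ in $\uniqueH_{S_{r+1}}$'' in a way that does not depend on the particular inductive \pdtd chosen; most likely I would fix a concrete tree produced by the inductive recipe and extract the degree lower bound from its leaf structure. Applying the induction at $S_p=\mcS$ then produces a width-one \dtd of $\Hdir$, completing the proof.
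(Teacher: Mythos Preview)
Your proposal is correct and follows essentially the same route as the paper: induction on $|S_p|$ via \pdtd{s}, the \good mechanism to attach a leaf (\Cref{claim:pdtd} together with \Cref{lem:DTD-leaf-addition}), and the contradiction step extracting minimum degree $\geq 2$ in $\uniqueH_{S_{r+1}}$ from the leaves of $\mcT_{-x}$ (\Cref{claim:exists_good_pair}) before invoking \Cref{lem:cycle-existence}. The only cosmetic difference is that the paper also treats $|S_p|=2$ as a base case so that $\mcT_{-x}$ is guaranteed to have at least two leaves in the inductive step; you should add this, since with only the $|S_p|=1$ base case the \good definition (a leaf $\ell$ adjacent to some $d$) is vacuous when $\mcT_{-x}$ is a single node.
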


We introduce some notation.
We start with defining the notion of \pdtd. In this 
definition, we consider a tree decomposition with respect to a subset of the source vertices of the original DAG.
\begin{definition}[\pdtd]
\label{def:dtd-partial}
Let $\Hdir$ be a DAG with source vertices $\mcS$. For a subset $S_p \subseteq \mcS$, a \pdtd of $\Hdir$ with respect to $S_p$ is a tree $\mcT=(\mcB,\mcE)$ with the following three properties.
\begin{enumerate}
    \item Each node $B\in \mcB$ (referred to as a ``bag'') is a subset of the sources in $S_p$: $B \subseteq S_p$.
    \item The union of the nodes in $\mcT$ is the entire set $S_p$: $\bigcup_{B \in \mcB} B = S_p$.
    \item For all $B,B_1,B_2 \in \mcB$, if $B$ is on the unique path between $B_1$ and $B_2$ in $\mcT$, then we have $\R(B_1) \cap \R(B_2) \subseteq \R(B)$.
\end{enumerate}
The \pdtwtext of $\mcT$ is $\max_{B\in \mcB}|B|$. Abusing notation,
we use $\dtw(\mcT)$ to denote the \pdtwtext of $\mcT$.
\end{definition}
Observe that, when $S_p=\mcS$, we recover the original definition of 
\dtd. Our proof strategy is to show by induction on the size of the subset $S_p$ that there exists a \pdtd of width one for each $S_p \subseteq \mcS$.
In particular, when $S_p = S$, it follows that there exists a 
\dtd for $\Hdir$ of width one.

We next define \isc for a pair of vertices, based on the third 
property of the \dtd. We generalize this notion to a subset of source vertices
$S_p\subseteq \mcS$ and define a notion of \cover{$S_p$}. These notions
will be useful in identifying a suitable source vertex in an existing
\pdtd to attach a new node to it.
\begin{definition}[\isc and \cover{$S_p$}]
\label{def:cover}
Let $\Hdir$ be a DAG with sources $\mcS$. Let $s_1$ and $s_2$ be a pair of sources in $\mcS$. We call a source $s \in \mcS$ an \isc of $s_1$ and $s_2$ if $\R(s_1 )\cap \R(s_2) \subseteq \R(s)$. Furthermore, assume
$S_p \subseteq \mcS$ is a subset of the sources $\mcS$. We call a source $s \in \mcS$, a \cover{$S_p$} of $s_1\in \mcS$ if for each source $s_2\in S_p$, $s$ is an \isc for $s_1$ and $s_2$.
\end{definition}

We now introduce one final piece of notation. Assume $S_{p} \subset
\mcS$ be a subset of the source vertices in the DAG $\Hdir$.
Let $x$ be some source vertex in $\mcS$ that does not belong to
$S_p$. Let $\mcT_{S_p}$ denote a \pdtd of width one for $S_{p}$.
Now, consider a leaf node $\ell$ in $\mcT_{S_p}$. Let $d$ denote the only node 
in $\mcT_{S_p}$ that is adjacent to $\ell$. We claim that if $d$ is an \isc for 
$\ell$ and $x$, then we can construct a \pdtd for $S_{p}\cup\{x\}$ of width one (we  will make this more formal and precise in the following paragraph). We identify such source and \pdtd pair $(x,\mcT_{S_p})$ as a \good. 

\begin{definition}[\good]
\label{def:good_pair}
Let $x\in \mcS({\Hdir})$ be a source vertex and $\mcT_{S_p}$ be a \pdtd of width
one for $S_{p} \subset \mcS({\Hdir})$ where $x\notin S_p$. We call the pair $(x,\mcT_{S_p})$ a \good if there exists a leaf node $\ell \in \mcT_{S_p}$ connected to the node $d\in \mcT_{S_p}$ such that $d$ is an \isc for $x$ and $\ell$.
\end{definition}

We prove a final technical lemma that provides insight into the process 
of adding a new source vertex to an existing \pdtd of width one.
\begin{lemma} \label{lem:DTD-leaf-addition}
Let $\Hdir$ be a DAG of $H$ with sources $\mcS$ and $S_p\subset \mcS$ be a subset of $\mcS$. Assume $\mcT$ is a \pdtd for $S_p$ with $\dtw(\mcT)=1$.
Consider a source $s \in \mcS$ such that $s \notin S_p$. 
If $d\in S_p$ is a \cover{$S_p$} of $s$, then connecting $s$ to $d$ in $\mcT$ as a leaf results in a tree $\mcT^\prime$ that is a \pdtd for $S_p\cup \{s\}$. Furthermore, $\dtw(\mcT^{\prime})= 1$.
\end{lemma}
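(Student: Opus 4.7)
The plan is to verify the three defining properties of a partial DAG tree decomposition for $\mcT'$ one at a time, and then observe that no bag exceeds size one. Properties (1) and (2) are essentially free: the only bag added is the singleton $\{s\}$, which is a subset of $S_p \cup \{s\}$, and the union of bags in $\mcT$ equals $S_p$ by hypothesis, so the union of bags in $\mcT'$ equals $S_p \cup \{s\}$. The size bound $\dtw(\mcT') = 1$ is immediate since the new bag has size $1$ and $\dtw(\mcT) = 1$.

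The heart of the argument is property (3). I would split into cases based on which endpoints $B_1, B_2 \in \mcT'$ of the path are considered. If neither $B_1$ nor $B_2$ equals $\{s\}$, then both already lie in $\mcT$, and since $\{s\}$ is a leaf of $\mcT'$ attached to $\{d\}$, the unique path between $B_1$ and $B_2$ in $\mcT'$ is identical to the unique path in $\mcT$; hence property (3) transfers directly from the inductive hypothesis that $\mcT$ is a \pdtd. If $B_1 = B_2 = \{s\}$ the statement is trivial, so the interesting case is $B_1 = \{s\}$ and $B_2 \in \mcT$. Any bag $B$ on the unique $\{s\}$-to-$B_2$ path in $\mcT'$ is either $\{s\}$ itself (trivial), the bag $\{d\}$, or an interior bag of the $\{d\}$-to-$B_2$ path in $\mcT$.

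For the subcase $B = \{d\}$, the required inclusion $\R(s) \cap \R(B_2) \subseteq \R(d)$ is exactly the statement that $d$ is an \isc of $s$ and the unique element of $B_2$, which holds because $d$ is a \cover{$S_p$} of $s$ and $B_2 \subseteq S_p$ is a singleton (using $\dtw(\mcT) = 1$). For the remaining subcase, where $B$ is an interior bag of the $\{d\}$-to-$B_2$ path in $\mcT$, I would chain two inclusions. First, the cover property gives $\R(s) \cap \R(B_2) \subseteq \R(d)$ as above, and since trivially $\R(s) \cap \R(B_2) \subseteq \R(B_2)$, we have
\[
\R(s) \cap \R(B_2) \;\subseteq\; \R(d) \cap \R(B_2).
\]
Second, applying property (3) of the original \pdtd $\mcT$ to the bags $\{d\}$, $B_2$ and the intermediate bag $B$ yields $\R(d) \cap \R(B_2) \subseteq \R(B)$. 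Composing these two inclusions gives $\R(s) \cap \R(B_2) \subseteq \R(B)$, completing the verification.

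The main subtlety to watch for is that \cover{$S_p$} is defined via pairwise \isc conditions against sources in $S_p$, whereas property (3) is phrased in terms of bags (which are sets of sources). This potential mismatch is exactly what forces me to invoke $\dtw(\mcT) = 1$ so that each $B_2$ contains a single source, letting me convert the cover property into a statement about $\R(B_2)$. Once that bridge is in place, the rest of the proof is a clean two-step inclusion using the leaf-attachment structure, and no other case analysis is needed.
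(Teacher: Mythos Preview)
Your proposal is correct and follows essentially the same approach as the paper: verify properties (1) and (2) trivially, then for property (3) split on whether the new leaf $\{s\}$ is an endpoint, using the \cover{$S_p$} property to get $\R(s)\cap\R(B_2)\subseteq\R(d)$ and chaining with property (3) of $\mcT$ along the $\{d\}$--$B_2$ path. Your explicit remark that $\dtw(\mcT)=1$ is what lets you treat bags as single sources is a nice clarification the paper leaves implicit.
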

\begin{proof}
We first prove that $\mcT^{\prime}$ is a \pdtd for $S_p\cup\{s\}$.
The properties (1) and (2) of \pdtd (see~\Cref{def:dtd-partial}) trivially hold for $\mcT^\prime$. If $\mcT$ has one or two nodes, then by definition of \cover{$S_p$}, $\mcT^\prime$ satisfies property (3). So we assume $\mcT$ has at least $3$ nodes.

Note that $\mcT$ and $\mcT^{\prime}$ are identical barring the leaf node $s$.
Hence, for any three nodes $s_1$, $s_2$, and $s_3$ in $T^{\prime}$ 
with $s \notin \{s_1,s_2,s_3\}$, property (3) of \pdtd (\Cref{def:dtd-partial})
holds. Now, consider $s$ with two other nodes $s_1$ and $s_2$ in $\mcT^\prime$ where
$s_1$ is on the unique path between $s$ and $s_2$.
If $s_1=d$, then property (3) holds as $d$ is a \cover{$S_p$} of $s$.
So assume $s_1\neq d$. But then, $s_1$ is on the unique path between $d$
and $s_2$ (by construction of $\mcT^{\prime}$). Since
property (3) holds for $d$, $s_1$, and $s_2$ in $\mcT$, we have
$\R(d) \cap \R(s_2) \subseteq \R(s_1)$. We also have 
$\R(s) \cap \R(s_2) \subseteq \R(d)$ as $d$ is a \cover{$S_p$} of $s$.
Hence, $\R(s) \cap \R(s_2) \subseteq \R(s_1)$. Therefore, property (3) holds. Thus, $\mcT^\prime$ is a \pdtd of $S_p\cup \{s\}$. As $\dtw(\mcT)=1$, it follows immediately from the construction that $\dtw(\mcT^{\prime})=1$.
\end{proof}

We now have all the ingredients to prove~\Cref{lem:lics5dtw1}.
For the sake of completeness, we restate the lemma.
\licsfivedtwone*

\begin{proof}
The \dtwtext of a simple graph $H$ is defined as
the maximum \dtwtext of any DAG $\Hdir$ obtained from $H$. 
So, we prove that $\dtw(\Hdir)=1$ for each DAG $\Hdir$ of $H$.
In the rest of the proof, we fix a DAG of $H$, and call it $\Hdir$.
Let $\mcS({\Hdir})$ denote the set of all source vertices in $H\dir$. When $\Hdir$ is clear from the context, we simply use $\mcS$.

Let $S_p \subseteq \mcS$ denote a subset of $\mcS$. 
We prove by induction on the size of the subset $S_p$ that
there exists a \pdtd (\Def{dtd-partial}) of width one for each $S_p 
\subseteq \mcS$. In particular, when $S_p = S$, it follows that there exists
a \dtd for $\Hdir$ of width one.

The base case of $|S_p|=1$ is trivial: put the only source in $S_p$
in a bag $B$ as the only node in the \pdtd for $\Hdir$. 
Similarly, for $|S_p|=2$, put the two sources in two separate bags and
connect them by an edge in the \pdtd for $\Hdir$. The resulting tree
is a \pdtd of width one.
Now assume that, it is possible to build a \pdtd of width one 
for any subset $S_p\subset \mcS$ where $|S_p| \leq r$, and $ 1 \leq r < |\mcS|$. We show how to construct a \pdtd of width one for any subset $S_p\subseteq \mcS$ where $|S_p|=r+1$ for $r\geq 2$. 

Fix a subset $S_{r+1}\subseteq \mcS$ of size $r+1$. 
Consider an arbitrary source $x \in S_{r+1}$. By induction hypothesis, we can construct a \pdtd of width one for the set $S_{-x} = S_{r+1} \setminus \{x\}$. We call the tree $\mcT_{-x}$. 
Now recall that, we call the pair $(x,\mcT_{-x})$ a \good if there exists
a leaf node $\ell \in \mcT_{-x}$ connected to the node $d\in \mcT_{-x}$ such that
$d$ is an \isc for $x$ and $\ell$ (see~\Def{good_pair}).
We argue the existence of a \good $(x,\mcT_{-x})$ and give a
constructive process to find a width one \pdtd of $S_{r+1}$
from such a \good $(x,\mcT_{-x})$.

\mypar{A \good leads to a \pdtd of width one}
We first show that if there exists a source $x\in S_{r+1}$ and a
width one \pdtd $\mcT_{-x}$ for $S_{-x} = S_{r+1}\setminus \{x\}$ such that $(x,\mcT_{-x})$
is a \good, then there exists
a width one \pdtd for $S_{r+1}$. In fact, we give a simple
constructive process to find such a \pdtd: construct a width one \pdtd $\mcT_{-\ell}$ for $S_{-\ell}=S_{r+1} \setminus \{\ell\}$, and then connect $\ell$ as a leaf to $d$ in $\mcT_{-\ell}$. 

\begin{claim}
\label{claim:pdtd}
Let $x\in S_{r+1}$ be a source vertex and $\mcT_{-x}$ be a 
width one \pdtd for $S_{-x} = S_{r+1}\setminus \{x\}$ such that $(x,\mcT_{-x})$
is a \good. Then, there exists a \pdtd $\mcT$ for $S_{r+1}$ with $\dtw(\mcT)=1$.
\end{claim}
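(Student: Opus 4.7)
The plan is to apply \Lem{DTD-leaf-addition} in order to build the desired width-one \pdtd for $S_{r+1}$ by attaching $\ell$ as a new leaf adjacent to $d$ on top of a pdtd for $S_{-\ell}$. Since $|S_{-\ell}| = r$, the inductive hypothesis furnishes a width-one \pdtd $\mcT_{-\ell}$ for $S_{-\ell}$; because $d \in S_{-\ell}$ and every bag has size at most one, the singleton bag $\{d\}$ must appear in $\mcT_{-\ell}$. The proposed tree $\mcT$ is then $\mcT_{-\ell}$ together with a fresh leaf bag $\{\ell\}$ joined to $\{d\}$.

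To invoke \Lem{DTD-leaf-addition}, the key step is to verify that $d$ is a \cover{$S_{-\ell}$} of $\ell$, i.e., that $\R(\ell) \cap \R(s) \subseteq \R(d)$ for every $s \in S_{-\ell}$. I will partition $S_{-\ell}$ into $\{x\}$ and $S_{-x} \setminus \{\ell\} = S_{r+1} \setminus \{x, \ell\}$, and handle the two parts separately. The case $s = x$ is exactly the content of the \good hypothesis on $(x, \mcT_{-x})$, so it is immediate. For each $s \in S_{-x} \setminus \{\ell\}$, I will use that $\mcT_{-x}$ is itself a valid width-one \pdtd for $S_{-x}$: since $\ell$ is a leaf of $\mcT_{-x}$ with unique neighbor $d$, the node $d$ lies on the unique path in $\mcT_{-x}$ from the bag $\{\ell\}$ to the bag containing $s$, and property (3) of \Def{dtd-partial} applied to $\mcT_{-x}$ then yields $\R(\ell) \cap \R(s) \subseteq \R(d)$.

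Once the \cover{$S_{-\ell}$} condition is established, \Lem{DTD-leaf-addition} directly certifies that the constructed $\mcT$ is a valid \pdtd for $S_{-\ell} \cup \{\ell\} = S_{r+1}$ with $\dtw(\mcT) = 1$, completing the argument. I do not expect a substantive obstacle here: everything reduces to the two-case covering check above, and the \good hypothesis is engineered precisely to supply the single intersection containment that does not already follow from the pdtd structure of $\mcT_{-x}$. The only mild subtlety worth flagging is that the construction grafts the new leaf onto $\mcT_{-\ell}$ rather than onto $\mcT_{-x}$, but the covering property of $d$ with respect to $\ell$ is an intrinsic statement about $\Hdir$ and therefore transfers cleanly between the two decompositions.
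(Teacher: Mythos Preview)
Your proposal is correct and follows essentially the same approach as the paper: build $\mcT_{-\ell}$ by the inductive hypothesis, attach $\ell$ as a leaf at $d$, and invoke \Lem{DTD-leaf-addition} after checking that $d$ is an \cover{$S_{-\ell}$} of $\ell$ by combining the \good condition (for $s=x$) with the leaf-neighbor structure of $\mcT_{-x}$ (for the remaining $s$). The paper phrases the covering conclusion as ``$d$ is a \cover{$S_{r+1}$} of $\ell$'' rather than a \cover{$S_{-\ell}$}, but this is the same content, and the justification is identical to yours.
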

\begin{proof}
Since $(x,\mcT_{-x})$ is a \good, there exists a
leaf node $\ell \in \mcT_{-x}$ connected to the node $d \in \mcT_{-x}$
such that $d$ is an \isc for $x$ and $\ell$.
We build a \pdtd of width one for $S_{-\ell} = S_{r+1} \setminus \{\ell\}$ 
(such a tree exists by induction hypothesis), 
and then add $\ell$ as a leaf node to the node $d$. 
We prove that the resulting tree, denoted as $\mcT$, is \pdtd for $S_{r+1}$ with
$\dtw(\mcT)=1$.

Since $\ell$ is only connected to $d$ in $\mcT_{-x}$,
$d$ is a \cover{$S_{-x}$} of $\ell$.
Also, $d$ is an \isc of $\ell$ and $x$, so $d$ is a \cover{$S_{r+1}$} of $\ell$. Therefore, by applying~\Cref{lem:DTD-leaf-addition}, it follows that $\mcT$ is a \pdtd 
of $S_{r+1}$ with $\dtw(\mcT)=1$.
\end{proof}

\mypar{Existence of a \good}
We have shown how to construct a \pdtd for the set $S_{r+1}$
if there exists a \good $(x,\mcT_{-x})$ where $x$ is a source in $S_{r+1}$ and
$\mcT_{-x}$ is a width one \pdtd for $S_{-x}$.
We now show that for any set $S_{r+1}$, there always exists
a \good $(x,\mcT_{-x})$.

\begin{claim}
\label{claim:exists_good_pair}
There exists a vertex $x\in S_{r+1}$ and a width one \pdtd $\mcT_{-x}$ for
$S_{-x} = S_{r+1} \setminus \{x\}$, such that $(x,\mcT_{-x})$ is a \good.
\end{claim}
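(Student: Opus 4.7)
The plan is to argue by contradiction using \Cref{lem:cycle-existence}. Suppose, towards a contradiction, that for every $x \in S_{r+1}$ and every width one \pdtd $\mcT_{-x}$ of $S_{-x}$, the pair $(x, \mcT_{-x})$ is \emph{not} a \good. I will show that this forces the unique reachability graph $\uniqueH_{S_{r+1}}$ to have minimum degree at least $2$, hence to contain a cycle, which contradicts \Cref{lem:cycle-existence} under the hypothesis $\LICS(H)\leq 5$.

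Fix an arbitrary $x \in S_{r+1}$. By the induction hypothesis there exists a width one \pdtd $\mcT_{-x}$ of $S_{-x}$, and since $|S_{-x}| = r \geq 2$ this tree has at least two leaves. I will first observe that every leaf bag must be a non-empty singleton: if some leaf $\ell$ were empty, then $\R(\ell) = \emptyset$ and any neighbor $d$ would trivially satisfy $\R(x)\cap \R(\ell)\subseteq \R(d)$, making $(x,\mcT_{-x})$ a \good. Hence I may identify each leaf with its unique source, and for every leaf $\ell$ with neighbor $d$, the not-good assumption yields $\R(x)\cap \R(\ell) \not\subseteq \R(d)$; pick a witness $v \in (\R(x)\cap \R(\ell))\setminus \R(d)$.

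The key step will be to show that $v$ witnesses the edge $\{x,\ell\}$ in $\uniqueH_{S_{r+1}}$, i.e., that $v \notin \R(s)$ for every $s \in S_{-x}\setminus \{\ell\}$. This is a direct application of property~(3) of \Def{dtd-partial}: since $\ell$ is a leaf in $\mcT_{-x}$, the unique tree path from $\ell$ to the bag containing $s$ must pass through $d$, so $\R(\ell)\cap \R(s)\subseteq \R(d)$, and combined with $v \in \R(\ell)\setminus \R(d)$ this gives $v\notin \R(s)$. Hence $\{x,\ell\}$ is an edge of $\uniqueH_{S_{r+1}}$ by \Def{unique_rechability}, and running the argument over all (at least two) leaves of $\mcT_{-x}$ shows that $x$ has degree at least $2$ in $\uniqueH_{S_{r+1}}$.

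Since $x \in S_{r+1}$ was arbitrary, every vertex of $\uniqueH_{S_{r+1}}$ has degree at least $2$, so $\uniqueH_{S_{r+1}}$ must contain a cycle, contradicting \Cref{lem:cycle-existence}. The only genuinely nontrivial step is the unique-witness argument in the third paragraph, which is where the leaf structure of $\mcT_{-x}$ is used essentially; the surrounding empty-bag observation, degree counting, and appeal to \Cref{lem:cycle-existence} are short routine steps.
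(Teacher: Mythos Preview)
Your argument is correct and follows essentially the same route as the paper: assume no \good exists, show via the leaf structure of $\mcT_{-x}$ that every $x\in S_{r+1}$ has degree at least $2$ in $\uniqueH_{S_{r+1}}$, and derive a cycle contradicting \Cref{lem:cycle-existence}. Your treatment is in fact slightly more careful than the paper's in two places: you explicitly rule out empty leaf bags, and you justify $v\notin\R(s)$ for $s\in S_{-x}\setminus\{\ell\}$ directly from property~(3) of \Cref{def:dtd-partial}, whereas the paper phrases the same step via the \cover{$S_{-x}$} terminology.
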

\begin{proof}
Assume for contradiction, the claim is false. Consider the  unique reachability graph on the vertex set $S_{r+1}$, denoted by
$\uniqueH_{S_{r+1}}(S_{r+1},E_{S_{r+1}})$ (see~\Def{unique_rechability}). Let $x \in S_{r+1}$ be an arbitrary source vertex. 
By assumption, $(x,\mcT_{-x})$ is not a \good. 
So, for each leaf node $\ell\in \mcT_{-x}$ connected to the node $d\in \mcT_{-x}$, 
$d$ is not an \isc for $x$ and $\ell$. 
Then, there exists a vertex $v$ in $\Hdir$,
such that $v\in \R(x)\cap \R(\ell)$, but $v\notin \R(d)$. On the other hand,
by construction, $d$ is a \cover{$S_{-x}$} for $\ell$ ($d$ is the only node connected
to $\ell$ in $\mcT_{-x}$).
Hence, $v$ is reachable from none of the source vertices in $S_{-x}$, other than $\ell$. 
Therefore, the edge $\{x,\ell\} \in E_{S_{r+1}}$. 
Now, $\mcT_{-x}$ has at least two leaves, so the degree of the source vertex $x$ in $\uniqueH_{S_{r+1}}$ is at least 2. The same argument holds for each $x\in S_{r+1}$. 
Hence, the degree of each vertex in $\uniqueH_{S_{r+1}}$ is at least two. Now $|S_{r+1}| \geq 3$ (recall $r\geq 2$), thus there exists a cycle $\mcC$ in $\uniqueH_{S_{r+1}}$ of length at least $3$. 
By applying~\Cref{lem:cycle-existence}, we have $\LICS(H) \geq 6$.
But $\LICS(H) \leq 5$, so this leads to a contradiction. Hence, the claim is true.
\end{proof}
We proved by induction that for any non-empty subset $S_p\subseteq \mcS$, there exist a \pdtd for $S_p$ with width one. In the case when $S_p = \mcS$, the \pdtd is a \dtd for $H\dir$. This completes the proof of~\Cref{lem:lics5dtw1}.
\end{proof}

\subsection{DAG Treewidth for Graphs with LICL at least Six}
\label{subsec:lics6dtw2}
In this section, we prove the following lemma.

\begin{restatable}{lemma}{licssixdtwtwo}
\label{lem:lics6dtw2}
For every simple graph $H$, if $\LICS(H) \geq 6$ then $\tau(H) \geq 2$.
\end{restatable}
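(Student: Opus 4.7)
The plan follows the outline at the end of \Sec{LICL-DTW-sec}. Since $\dtw(H)$ is the maximum \dtwtext over all acyclic orientations of $H$, it suffices to exhibit a single orientation $\Hdir$ with $\dtw(\Hdir) \geq 2$. I would carry this out in two steps: (i) first prove a helper lemma asserting that a triangle in $\uniqueH_{\mcS(\Hdir)}$ forces $\dtw(\Hdir) \geq 2$; (ii) then construct such an orientation from any induced cycle of length $\geq 6$ in $H$.

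For step (i), suppose $\{s_1, s_2, s_3\}$ spans a triangle in $\uniqueH_{\mcS(\Hdir)}$ and, for contradiction, that $\Hdir$ admits a \dtd $\mcT$ of width $1$. Let $B_i \in \mcT$ be any bag containing $s_i$ (necessarily the singleton $\{s_i\}$), so $B_1, B_2, B_3$ are distinct. Let $B^*$ be the median of $B_1, B_2, B_3$ in $\mcT$, i.e., the unique node lying on all three pairwise paths. If $B^* = B_j$ for some $j$, then $B_j$ lies on the $B_i$--$B_k$ path (for $\{i,j,k\}=\{1,2,3\}$), and Property~(3) of \Def{dtd} forces the triangle witness $v_{ik} \in \R(s_i) \cap \R(s_k)$ to lie in $\R(B_j) = \R(s_j)$, contradicting $v_{ik} \notin \R(\mcS \setminus \{s_i, s_k\})$. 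Otherwise $B^* \notin \{B_1, B_2, B_3\}$: if $B^* = \emptyset$, any application of Property~(3) produces the witness inside $\emptyset$; if $B^* = \{s^*\}$, I would pick a triangle edge $(s_i, s_k)$ whose endpoints both differ from $s^*$ (always possible since at most one $s_j$ equals $s^*$) and apply Property~(3) along the $B_i$--$B_k$ path to obtain $v_{ik} \in \R(s^*) \subseteq \R(\mcS \setminus \{s_i, s_k\})$, again a contradiction.

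For step (ii), I would fix an induced $\ell$-cycle $C = (v_0, v_1, \ldots, v_{\ell-1})$ of $H$ with $\ell \geq 6$ and designate $s_1 = v_0,\ s_2 = v_2,\ s_3 = v_4$; these three are pairwise non-adjacent in $H$ because $C$ is induced and they are pairwise non-adjacent on $C$ (using $\ell \geq 6$). I then define a total order $\prec$ on $V(H)$ by listing first $s_1, s_2, s_3$, next the remaining cycle vertices in the order $v_1, v_3, v_5, v_6, \ldots, v_{\ell-1}$, and finally all vertices of $V(H)\setminus V(C)$ in any order; orienting each edge from its $\prec$-smaller to its $\prec$-larger endpoint yields an acyclic $\Hdir$ in which $s_1, s_2, s_3$ are sources. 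The claim to be verified is that $v_1, v_3, v_{\ell-1}$ serve as unique-reachability witnesses for the three triangle edges $(s_1, s_2)$, $(s_2, s_3)$, $(s_1, s_3)$ in $\uniqueH_{\mcS(\Hdir)}$; step (i) then delivers $\dtw(\Hdir) \geq 2$, hence $\dtw(H) \geq 2$.

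The main obstacle is ruling out any ``spurious'' source of $\Hdir$ --- an unintended non-cycle vertex that happens to be a source --- from reaching one of the three chosen witnesses. This is precisely what the placement of all non-cycle vertices at the tail of $\prec$ is designed to handle: because $C$ is induced, every non-cycle $H$-neighbor of a witness $w \in \{v_1, v_3, v_5, \ldots, v_{\ell-1}\}$ comes after $w$ in $\prec$ and becomes an out-neighbor, confining the $\Hdir$-in-neighbors of $w$ to $C$. Computing these in-neighbor sets gives $\{s_1, s_2\}$ for $v_1$, $\{s_2, s_3\}$ for $v_3$, $\{s_3\}$ for $v_5$, $\{v_{k-1}\}$ for each $6 \leq k \leq \ell-2$, and $\{v_{\ell-2}, s_1\}$ for $v_{\ell-1}$. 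Tracing backward along these in-neighbors shows that exactly $\{s_1, s_2\}$ reach $v_1$, exactly $\{s_2, s_3\}$ reach $v_3$, and exactly $\{s_1, s_3\}$ reach $v_{\ell-1}$ (the last via the direct edge $s_1 \to v_{\ell-1}$ and the chain $s_3 \to v_5 \to v_6 \to \cdots \to v_{\ell-1}$), completing the verification and the proof.
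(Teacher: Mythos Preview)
Your proposal is correct and follows the same two-step strategy as the paper: first show (this is \Lem{triangle_to_tau}) that a triangle in $\uniqueH_{\mcS(\Hdir)}$ forces $\dtw(\Hdir)\ge 2$, and then exhibit an acyclic orientation of $H$ in which three vertices of a longest induced cycle become sources forming such a triangle, with all edges between $V(C)$ and $V(H)\setminus V(C)$ oriented away from $C$ so that no off-cycle source can reach the witnesses. The only difference is cosmetic: the paper spaces its three sources roughly evenly around the cycle (at $v_1,v_{\ell+1},v_{2\ell+1}$ where $r=3\ell+q$) and orients cycle--non-cycle edges by hand, whereas you cluster the sources at $v_0,v_2,v_4$ and realize the entire orientation via a single linear order placing non-cycle vertices last; both constructions work for the same reason.
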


We first discuss 
the simple case of the $6$-cycle. Note that, to prove that $\dtw(H) \geq 2$, it suffices to show that there exists a DAG $\Hdir$ of $H$ such that $\dtw(\Hdir) \geq 2$. Let $\Hdir$ be a DAG of $H$ as shown in the middle figure in \Fig{6-cycle}. Let $\mcS=\{s_1,s_2,s_3\}$ be the set of sources in $\Hdir$. Consider the unique reachability graph $\uniqueH_{\mcS}(\mcS,E_{\mcS})$, shown on the right in~\Fig{6-cycle}. 
The graph $\uniqueH_{\mcS}$ is a triangle: $t_1$ is not reachable from $s_1$, but reachable from $s_2$ and $s_3$, and so on. 
In any \dtd $\mcT$ of $\Hdir$ with width one, all source vertices are a vertex of $\mcT$ by themselves. So at least one of $s_1$, $s_2$, or $s_3$  (say $s_1$) would be on the unique path between the other two. 
But this would violate property (3) of \dtd.
It follows that $\dtw(H) \geq 2$.
In this case, it is not difficult to argue that $\dtw(H)=2$.

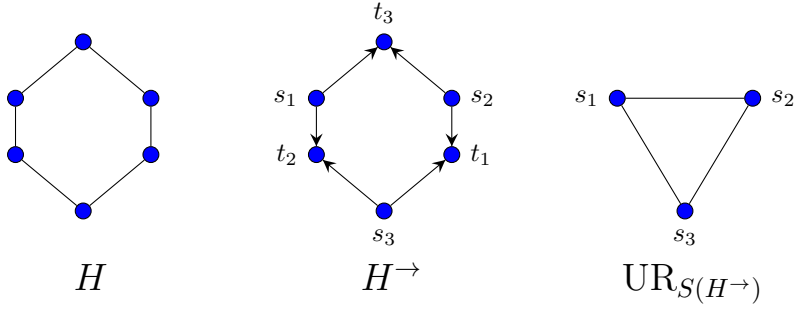
\begin{figure*}
\centering
\begin{tikzpicture}[nd/.style={scale=1,circle,draw,fill=blue,inner sep=2pt},minimum size = 6pt]    
    \matrix[column sep=1cm, row sep=0.2cm,ampersand replacement=\&]
    {
     \DTDSixCycle\\
    };
 \end{tikzpicture}
\caption{Let $H$ be a six cycle. In the middle figure, we show the DAG $\Hdir$ of $H$ for which $\dtw(\Hdir) \geq 2$. On the right, we show the $\uniqueH$ graph corresponding to $\Hdir$. It is a triangle as $t_1$ is only reachable from $\{s_2,s_3\}$, 
$t_2$ is only reachable from $\{s_1,s_3\}$, and
$t_3$ is only reachable from $\{s_1,s_2\}$.}
\label{fig:6-cycle}
\end{figure*}

We formalize this intuition to prove in the following lemma:
if the $\uniqueH_{\mcS}$ graph of a DAG $\Hdir$ with source vertices $\mcS$ has a triangle in it, then it must the case that $\dtw(\Hdir) \geq 2$. Then, to prove $\dtw(H)\geq 2$ for a graph $H$, it is 
sufficient to show the existence of a DAG $\Hdir$ such that
the corresponding $\uniqueH_{\mcS(\Hdir)}$ has a triangle.

\begin{lemma}
\label{lem:triangle_to_tau}
Let $\Hdir$ be a DAG of $H$ with source vertices $\mcS$ and
$\uniqueH_{\mcS}(\mcS, E_{\mcS})$ be the
{\em unique reachability} graph for the set $\mcS$.
If $\uniqueH_{\mcS}$ has a triangle, then $\dtw(H) \geq 2$.
\end{lemma}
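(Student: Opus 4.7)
The plan is to argue by contradiction: assume there is a \dtd $\mcT$ of $\Hdir$ with $\dtw(\mcT) = 1$, and use the triangle in $\uniqueH_{\mcS}$ to violate property (3) of \Def{dtd}. Since $\dtw(H) = \max_{\Hdir} \dtw(\Hdir)$ over all orientations, proving $\dtw(\Hdir) \geq 2$ for the given $\Hdir$ will suffice.

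First, I would extract three witness vertices from the triangle: let $s_1, s_2, s_3 \in \mcS$ be the vertices of a triangle in $\uniqueH_{\mcS}$, and pick, for each edge $\{s_i, s_j\}$, a witness $v_{ij} \in \R(s_i) \cap \R(s_j)$ with $v_{ij} \notin \R(\mcS \setminus \{s_i, s_j\})$ (this exists by \Def{unique_rechability}). Since $\dtw(\mcT) = 1$, every bag of $\mcT$ is a singleton (empty bags can be ignored), and by property (2) there exist bags $B_i = \{s_i\}$ for $i = 1, 2, 3$.

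The core step is to consider the \emph{median} bag $B^\ast$ of $B_1, B_2, B_3$ in the tree $\mcT$, i.e., the unique node that lies on the $B_i$--$B_j$ path for every pair. I split on two cases. If $B^\ast = B_i$ for some $i$ (say $i = 1$), then $B_1$ lies on the path between $B_2$ and $B_3$, so property (3) yields $\R(s_2) \cap \R(s_3) \subseteq \R(s_1)$, contradicting $v_{23} \in \R(s_2) \cap \R(s_3) \setminus \R(s_1)$. Otherwise $B^\ast = \{s\}$ for some $s \in \mcS \setminus \{s_1, s_2, s_3\}$, and applying property (3) to each pair $(B_i, B_j)$ through $B^\ast$ forces $v_{ij} \in \R(s)$ for every pair $\{i,j\} \subset \{1,2,3\}$. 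By the uniqueness condition on $v_{ij}$, this requires $s \in \{s_i, s_j\}$ for each pair, and $\{s_1, s_2\} \cap \{s_1, s_3\} \cap \{s_2, s_3\} = \emptyset$ gives the contradiction.

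The main (minor) obstacle is simply to justify the use of a median node: in any tree, three nodes have a unique median vertex lying on every pairwise path, which is either one of the three nodes or a fourth node. This is standard, so the proof is essentially a clean case analysis once the median is set up. I do not anticipate any technical difficulty beyond checking that the width-one assumption forces each singleton bag $\{s_i\}$ to satisfy $\R(B_i) = \R(s_i)$, which is immediate.
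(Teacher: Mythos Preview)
Your proof is correct and takes essentially the same approach as the paper's: assume a width-one \dtd exists, find a singleton bag lying on the tree path between two of the triangle's source vertices, and use the unique-reachability witness for that pair to violate property~(3). The only cosmetic difference is that the paper observes directly that the three singleton bags $\{s_1\},\{s_2\},\{s_3\}$ cannot be pairwise adjacent in a tree (so some $\{s\}$ with $s\notin\{s_i,s_j\}$ lies strictly between some pair), whereas you organize the same observation via the median node and a two-case split; your Case~2 is slightly more work than needed, since a single witness $v_{ij}$ already suffices once $s\neq s_i,s_j$.
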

\begin{proof}
Assume for contradiction, $\dtw(\Hdir) =1$ and
$\mcT$ be a \dtd of width one. Let $\{s_1,s_2,s_3\}$ be a triangle 
in the graph $\uniqueH_{\mcS}$. As $\mcT$ is a \dtd with width one, 
all source vertices in $\mcS$ must be a node by themselves  in $\mcT$.
Observe that, there must exists a node $s\in \mcT$ that is
between the unique path for a pair of nodes in $\{s_1,s_2,s_3\}$. As otherwise, these three nodes are all pairwise connected by an 
edge forming a triangle in $\mcT$.
Wlog, assume $s$ is on the unique path between $s_1$ and $s_2$ in $\mcT$. Since, $\{s_1,s_2\}$ is an edge in $\uniqueH_{S}$, by definition, there exists a vertex $t_{s_1,s_2}\in V(\Hdir)$, such that $t\in \R_{\Hdir}(s_1) \cap \R_{\Hdir}(s_2)$, but $t\notin \R_{\Hdir}(s)$. But, this violates the property (3) of \dtd in~\Cref{def:dtd}.
So such a tree $\mcT$ cannot exists and hence, $\dtw(\Hdir) \geq 2$. Therefore, $\dtw(H) \geq 2$.
\end{proof}

We are now ready to prove the main lemma. We restate the lemma for completeness.
\licssixdtwtwo*
\begin{proof}
Let $\LICS(H)=r$, where $r \geq 6$ and
$r=3\ell+q$, for some $\ell \geq 2$ where $q\in\{0,1,2\}$.
Assume ${C}=v_1,v_2,\ldots,v_r,v_1$ 
is an induced cycle of length $r$ in $H$.
We construct a DAG $\Hdir$ as follows.

Consider an edge $e=(u,v)$ in $H$.
Assume only one of the end point  does not belong to $V(C)$ --- say $u\notin V(C)$. Then we orient the edge from $v$ to $u$. 
Now consider the case when both $u,v\notin V(C)$. We orient such 
edges in an arbitrary manner ensuring the resulting orientation is acyclic. 
We now describe the orientation of the edges on the $r$-cycle $C$.
We mark three vertices $s_1$, $s_2$ and $s_3$ in ${C}$ as sources that are at least distance two apart from each other. Wlog, assume $s_1=v_1$, $s_2=v_{\ell+1}$, and $s_3=v_{2\ell+1}$.
Now we mark three vertices $t_1$, $t_2$, and $t_3$ as sinks such that 
$t_1$ is between $s_1$ and $s_2$, $t_2$ is between $s_2$ and $s_3$, and $t_3$ is between $s_3$ and $s_1$ in the cycle $C$. 
Again, wlong assume $t_1 = v_2$, $t_2=v_{\ell+2}$, and $t_3=v_{2\ell+2}$.
Finally, orient the edges in ${C}$ towards the sink vertices and away from the sources. This completes the description of $\Hdir$.

Now let $\mcS$ denote the set of source vertices in $\Hdir$.
Consider the unique reachability graph $\uniqueH_{\mcS}(S,E_{\mcS})$.
We claim that $\uniqueH_{S}$
includes a triangle. Indeed, we show that 
$\{s_1,s_2,s_3\}$ forms a triangle in $\uniqueH_{\mcS}$.
We first argue the existence of the edge $\{s_1,s_2\}\in E_{S}$. The vertex $t_1$ is reachable from $s_1$ and $s_2$, but not from $s_3$. Since all the edges that are not
part of the cycle ${C}$, are oriented outwards from the
vertices in ${C}$, no other source vertices in $\mcS$
can reach $t_1$ in $\Hdir$. Hence, $\{s_1,s_2\}\in E_{S}$.
Similarly, we can argue the existence of the edges 
$\{s_2,s_3\}$ and $\{s_1,s_3\}$ in $E_{S_p}$. Applying~\Cref{lem:triangle_to_tau}, it follows that $\dtw(H)\geq 2$.
\end{proof}

\section{LICL and the Homomorphism Counting Lower Bound} 
\label{sec:lower-bound_sec}
In this section, we prove our main lower bound result. We show that 
for a pattern graph $H$ with $\LICS(H) \geq 6$, the $\HOMC_H$ problem does not 
admit a linear time algorithm in bounded degeneracy graphs, assuming the \TRICONJ (\Cref{conj:triangle}). We state our main theorem below.

\begin{restatable}{theorem}{LB}
\label{thm:lower_bound}
Let $H$ be a pattern graph on $k$ vertices with $\LICS(H) \geq 6$. Assuming  the \TRICONJ, there exists an absolute constant $\gamma > 0$ such that for any function $f:\mathbb{N \times N} \to \mathbb{N}$, there is no (expected) $f(\degen, k)o(m^{1+\gamma})$ algorithm for the $\HOMC_H$ problem, where $m$ and $\degen$ are the number of edges and the degeneracy of the input graph, respectively.
\end{restatable}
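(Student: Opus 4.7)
The plan is to Turing-reduce triangle counting in an arbitrary graph to $\HOMC_H$ on a bounded degeneracy graph, extending the subdivision construction of~\cite{bera2020linear} (which handled the pure $6$-cycle case) to arbitrary $H$ with $\LICS(H) \geq 6$. Fix an induced cycle $C = v_1 v_2 \cdots v_r v_1$ of length $r = \LICS(H)$ in $H$, and let $\overline{H} := H[V(H) \setminus V(C)]$. Given a triangle-counting instance $G$, build an auxiliary graph $G'$ in three pieces: (i) three disjoint copies $V_1, V_2, V_3$ of $V(G)$; (ii) for every edge $(u,v) \in E(G)$ and every ordered pair $i \neq j$, a subdivision path of length roughly $r/3$ (lengths chosen so that three consecutive subdivision arcs sum to exactly $r$) between the $V_i$-copy of $u$ and the $V_j$-copy of $v$; and (iii) one fixed copy of $\overline{H}$, wired to $V_1 \cup V_2 \cup V_3$ using the adjacencies between $V(C)$ and $V(\overline{H})$ inside $H$. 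A bounded back-degree peeling order---internal subdivision vertices first (each of degree $2$), then the $V_i$-copies (now attached only to $O_k(1)$ fixed vertices), then $\overline{H}$ itself---shows $\degen(G') = O_k(1)$, while $|E(G')| = O(|E(G)|)$ since $k$ is constant.

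Each triangle of $G$ spawns distinct $H$-homomorphisms into $G'$ by mapping $V(C)$ onto one of the induced length-$r$ cycles formed from that triangle's three $V_i$-copies and the subdivision arcs between them, and sending $V(\overline{H})$ to the fixed $\overline{H}$-copy. The main obstacle is analyzing the \emph{spurious} homomorphisms that do not arise this way. The key structural claim, which uses $\LICS(H) = r$ crucially, is that every spurious homomorphism must miss at least one vertex of the fixed $\overline{H}$-copy: any alternative embedding of the cycle $C$ into $G'$ must use vertices outside the ``triangle-shaped'' length-$r$ cycles, and tracing how $\overline{H}$ attaches to $V(C)$ inside $H$ shows that such an embedding cannot be extended to cover every $\overline{H}$-copy vertex without forcing an induced cycle of length exceeding $r$ in $H$, contradicting the maximality of $C$.

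This motivates a \emph{partitioned homomorphism count}. Let $\mathbf{P} = (P_1, \ldots, P_k)$ be the partition of $V(G')$ in which $P_i$ is the intended block (one of $V_1, V_2, V_3$, a subdivision set, or a singleton inside the $\overline{H}$-copy) for the $i$-th vertex of $H$, and let $N(\mathbf{P})$ count $H$-homomorphisms $\phi$ with $\phi(v_i) \in P_i$ for every $i$. Because the $\overline{H}$-blocks are singletons, the partition constraint kills every spurious homomorphism, and one obtains $N(\mathbf{P}) = c_H \cdot t(G)$, where $c_H$ is an explicit nonzero constant depending only on the automorphisms of the embedded cycle $C$ and $t(G)$ is the triangle count of $G$.

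The final ingredient is converting $N(\mathbf{P})$ into vanilla $\HOMC_H$ calls. For each subset $S \subseteq [k]$, let $G'_S := G'\bigl[\bigcup_{i \in S} P_i\bigr]$ be the induced subgraph on the union of blocks indexed by $S$. By standard inclusion-exclusion / M\"obius inversion over the block-support pattern of a homomorphism, $N(\mathbf{P})$ equals a $\pm 1$ integer combination of the quantities $\HomCount(H, G'_S)$ (with coefficients that account for the redundancy when multiple $V(H)$-vertices share a block; the singleton $\overline{H}$-blocks collapse these cleanly). This is a sum of $2^k = O_k(1)$ terms; each $G'_S$ has $O_k(1)$ degeneracy and $O(|E(G)|)$ edges. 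A hypothetical algorithm running in time $g(\degen) \cdot o(m^{1+\gamma})$ for $\HOMC_H$ would therefore give an algorithm running in time $g'(k) \cdot o(|E(G)|^{1+\gamma})$ for triangle counting in $G$, contradicting~\Cref{conj:triangle}.
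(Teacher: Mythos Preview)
Your plan follows the paper's approach closely, but two concrete details need repair.

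First, the wiring of the fixed $\overline{H}$-copy: you attach it only to $V_1 \cup V_2 \cup V_3$, but a cycle vertex $v_j \in V(C)$ that is adjacent in $H$ to some $u \in V(\overline{H})$ need not correspond to one of $V_1,V_2,V_3$; it may correspond to a subdivision block. The paper fixes a bijection $\sigma_{\connect}$ from $V(C)$ to \emph{all} $r$ core blocks and, for every edge $\{v_j,u\}$ of $H$ with $v_j\in V(C)$ and $u\notin V(C)$, joins the $\overline{H}$-copy of $u$ to every vertex of the block $\sigma_{\connect}(v_j)$. With your wiring, whenever $H$ has a $C$--$\overline{H}$ edge whose cycle endpoint lands on a subdivision block, $G'$ contains no partitioned match of $H$ at all, and the reduction collapses.

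Second, the inclusion--exclusion you describe does not recover the labeled quantity $N(\mathbf{P})$ (homomorphisms with $\phi(v_i)\in P_i$ for each $i$). Summing $\pm\Hom{G'_S}{H}$ over $S\subseteq[k]$ yields only the number of homomorphisms whose image meets every block---the paper's $\phom$ count---with no memory of which $H$-vertex landed in which block. The paper resolves this by noting that every $\phom$ is automatically an embedding (there are $k$ disjoint blocks and $k$ pattern vertices), then dividing by $|\mathrm{Aut}(H)|$ to obtain the $\pmatch$ count. The bijection between $\pmatch$es and triangle-induced $r$-cycles is then argued purely from the block structure of $G_H$: between any two blocks of $\mcP$ there are edges only where $H$ has an edge, so every $\pmatch$ is an \emph{induced} match, forcing its core part to be an $r$-cycle that respects $\mcP'$. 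Your alternative justification---that a spurious embedding of $C$ would force an induced cycle in $H$ of length exceeding $r$---is not substantiated (the embedding lives in $G'$, not in $H$) and is not how the paper proceeds.
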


\mypar{Outline of the Proof}
We first present an outline of our proof; the complete proof is discussed in~\Cref{subsec:lb_proof}.
Let $\TRIC$ denote the problem of counting the number of triangles in a graph. We prove the theorem by a linear time Turing reduction from the $\TRIC$ problem to the $\HOMC_H$ problem.
Assuming the \TRICONJ, any algorithm (possibly randomized) for the $\TRIC$ problem requires $\omega(m)$ time, where $m$ is the number of edges in the input graph. Given an input instance $G$ of the \TRIC problem, we construct a graph $G_H$ with bounded degeneracy and $O(|E(G)|)$ edges.
We show how the number of triangles in $G$ can be obtained by counting specific homomorphisms of $H$ in $G_H$.

Let $\LICS(H)=r$ and $C$ be one of the largest induced cycles in $H$; let $V(C)$ denotes
its vertices.
We now describe the construction of the graph $G_H$.
The graph $G_H$ has two main parts: (1) the fixed component, denoted as $G_{\fixedGraph}$ (this part is independent of the input graph $G$ and only depends on the pattern graph $H$) and (2) the core component, denoted as 
$G_{\coreGraph}$. Additionally, there are edges that connect these two components, denoted by $E_{\connect}$.
Let $H_{\excludeC}$ denote the graph after we remove $V(C)$ from $H$. More formally, $H_{\excludeC} = H-V(C)$. The fixed component $G_{\fixedGraph}$ is a copy of $H_{\excludeC}$.

Next, we give an intuitive account of our construction. We discuss the role of $G_{\coreGraph}$ and how its connection to $G_{\fixedGraph}$ through $E_{\connect}$
ensures that the number of triangles in $G$ can be obtained by counting homomorphisms of $H$. Then we give an overview of the construction.

\mypar{Intuition behind the Construction}
The main idea is to construct $G_{\coreGraph}$ and $E_{\connect}$ in such a way that each triangle in $G$ transforms to an $r$-cycle in $G_{\coreGraph}$, that then composes a match of $H$ together with $G_{\fixedGraph}$ (recall that $G_{\fixedGraph}$ is a copy of $H_{\excludeC}$). To this end, we design $G_{\coreGraph}$ in $r$ parts, ensuring the following properties hold for each $r$-cycle in $G_{\coreGraph}$ that contains exactly one vertex in each of these $r$ parts: (1) It composes a match of $H$ together with $G_{\fixedGraph}$ and (2) It corresponds to a triangle in $G$. Let $\mcP^\prime$ denote the partition of $V(G_{\coreGraph})$ into these $r$ parts. Further, assume we construct $G_H$ in a way that, each match of $H$ that contains the vertices of $G_{\fixedGraph}$ and exactly one vertex in each set $V \in \mcP^\prime$, corresponds to one of the $r$-cycles described above. It is now easy to see that if we can count these matches of $H$, we can then obtain the number of the described $r$-cycles in $G_{\coreGraph}$ and hence the number of triangles in $G$.

Consider a partition $\mcP$ of $V(G_H)$ where $|\mcP| = k$. Assume there is a linear time algorithm $\ALG$ for $\HOMC_H$ in bounded degeneracy graphs. Then,~\Lem{hom-inc-exc} proves that there exists a linear time algorithm that, using $\ALG$, counts the matches of $H$ in $G$ that include exactly one vertex in each set $V \in \mcP$. These matches are called $\pmatch$es of $H$, as we define formally later in~\Def{p-match}. Also, each $r$-cycles in $G_{\coreGraph}$ that contain exactly one vertex in each set $V\in \mcP^\prime$ is a $\ppmatch$ of $C$. Now, we define the partition $\cP$ of $V(G_H)$ as follows; $\cP$ includes each set in $\mcP^\prime$ and each of the $k-r$ vertices in $G_{\fixedGraph}$ as a set by itself. 

Overall, by construction of $G_H$, we can get the number of triangles in $G$ by the number of $\ppmatch$es of $C$ in $G_{\coreGraph}$. Further, we can obtain the number of $\ppmatch$es of $C$ in $G_{\coreGraph}$ by the number of $\pmatch$es of $H$ in $G_H$ that we count using $\ALG$. The following restates the desired properties of $G_H$ we discussed, more formally.

\begin{enumerate}[label=(\Roman*)]
    \item There is a bijection between the set of $\pmatch$es of $H$ in $G_H$ and the set of $\ppmatch$es of $C$ in $G_{\coreGraph}$.
    \item The number of triangles in $G$ is a simple linear function of the number of $\ppmatch$es of $C$ in $G_{\coreGraph}$.
\end{enumerate}

Next, we give an overview of the construction of $G_H$ for $r=6$. We prove that properties (I) and (II) hold for our construction in the general case in~\Lem{copy-cycle-bijection} and~\Lem{triangle-linear-func}, respectively. 

\mypar{Overview of the Construction}
In what follows, we give an overview of $G_{\fixedGraph}$, $G_{\coreGraph}$, and $E_{\connect}$. For the ease of presentation, we assume $r=6$.
\begin{enumerate}[label=(\arabic*)]
    \item $G_{\fixedGraph}$ is a copy of $H_{\excludeC}$.
    We denote the vertex set in $G_{\fixedGraph}$
    as $V_{\fixed}$.
    Observe that, $G_{\fixedGraph}$ does not
    depend on the input graph $G$.
    
    \item The core component $G_{\coreGraph}$ consists of two set
    of vertices: $V_{\core}$ and $V_{\aux}$. We first discuss
    the sets $V_{\core}$ and $V_{\aux}$, and then introduce the edge set $E(G_{\coreGraph})$.
    \begin{enumerate}
        \item $V_{\core}$ consists of three set of vertices, 
        $V_1 = \{w_1,\ldots,w_n\}$, 
        $V_2 = \{x_1,\ldots,x_n\}$, and 
        $V_3 = \{y_1,\ldots,y_n\}$ --- each of size $n$ (recall $|V(G)|=n$). The vertices in each of these sets correspond to the vertices in $V(G)=\{u_1,\ldots, u_n\}$.
    
        \item The construction of $V_{\aux}$ depends on $r$. For $r=6$,
            it consists of three sets, denoted as $V_{1,2}$, $V_{2,3}$, and $V_{1,3}$ --- each of size $2m$ (recall $|E(G)|=m$). 
            The vertices in each of these sets corresponds to the edges in $E(G)$. We index them using $e$, for each $e\in E(G)$:
            $V_{1,2}=\{v_e^{1,2},v_e^{2,1}\}_{e\in E(G)}$, and so on.
            The role of these sets will become clear as we describe the edges of $G_{\coreGraph}$.
        \item Consider an edge $e=\{u_i,u_j\}\in E(G)$ 
        and the pair $V_1$ and $V_2$. We connect the 
        vertex $w_i \in V_1$ to the vertex $x_j \in V_2$ by a $2$-path via the vertex $v^{1,2}_e \in V_{1,2}$. Similarly, we connect the vertex $w_j$ to the vertex $x_i$ by a $2$-path via the vertex $v^{2,1}_e$. In particular, we add the
        edges $\{w_i,v^{1,2}_e\}$ and $\{v^{1,2}_e,x_j\}$, and the edges 
        $\{w_j,v^{2,1}_e\}$ and $\{v^{2,1}_e,x_i\}$ to the set $E(G_{\coreGraph})$.
        We repeat the process for the pairs $(V_2,V_3)$ and $(V_1,V_3)$
        for each edge $e\in E(G)$.
    \end{enumerate}
    
    \item We now describe the edge set $E_{\connect}$ that serves as 
    connections between $G_{\fixedGraph}$ and $G_{\coreGraph}$.
    Let $\map_{\connect}$ be a bijective mapping between the two sets $V(C)$ and
    $\{V_1,V_2,V_3,V_{1,2},V_{2,3},V_{1,3}\}$; 
    $\map_{\connect}: V(C) \rightarrow \{V_1,V_{1,2},V_2,V_{2,3},V_3,V_{1,3}\}$.
    For each edge $e=\{u,v\} \in E(H)$ such that $u\in V(C)$
    and $v\notin V(C)$, we do the following.
    Let $z_v\in V_{\fixed}$ denote the vertex 
    corresponding to the vertex $v$ (recall $G_\fixed$ is a copy of $H_{\excludeC}$). We connect $z_v$
    to all the vertices in the set $\map_{\connect}(u)$ and add these
    edges to $E_{\connect}$.
\end{enumerate}

Note that, here $\mcP^\prime = \{V_1,V_{1,2},V_2,V_{2,3},V_3,V_{1,3}\}$. Before diving into the details of deriving the triangle counts in $G$, we first take an example pattern graph $H$ to visually depict the constructed graph $G_H$ (see~\Cref{fig:H_and_G_H-example}) and discuss why properties (I) and (II) hold in our construction.

\mypar{An Illustrative Example}
Let $H$ be the graph as shown in~\Cref{fig:H-example}. In this example, $\LICS(H)=6$. Let $C=a_3,a_4,a_5,a_6,a_7,a_8,a_3$ be the 
induced $6$-cycle in $H$. We demonstrate the constructed graph $G_H$ in ~\Cref{fig:G_H-example}. We now discuss the various components of $G_H$.
\begin{enumerate}[label=(\arabic*)]
    \item The graph $G_{\fixedGraph}$ is shown by the red oval.
The vertices $z_1$ and $z_2$ compose $V_{\fixed}$, where $z_1$ corresponds to $a_1$ and $z_2$ corresponds to $a_2$. 

    \item The graph $G_{\coreGraph}$ is shown by the blue oval.
For each edge $e=\{u_i,u_j\}\in E(G)$, (for some input graph $G$, which
is not shown in the figure), we add a total of six 2-paths: two between each pair of sets from $\{V_1,V_2,V_3\}$. For instance, between the set
$V_1$ and $V_2$ these 2-paths are as follows: $\{w_i,v_{e}^{1,2},x_j\}$
and $\{w_j,v_{e}^{2,1},x_i\}$. The vertices $w_i,w_j$ belong to $V_1$;
$x_i,x_j$ belong to $V_2$; and $v_{e}^{1,2},v_{e}^{2,1}$ belong to $V_{1,2}$.

    \item Finally we describe the edge set $E_{\connect}$ (the edges in violet).
We consider the following bijective mapping $\map_{\connect}$:
$\map_{\connect}(a_3)=V_1$, $\map_{\connect}(a_4)=V_{1,2}$,
$\map_{\connect}(a_5)=V_2$, $\map_{\connect}(a_6)=V_{2,3}$,
$\map_{\connect}(a_7)=V_3$, $\map_{\connect}(a_8)=V_{1,3}$.
Now consider the edge $\{a_3,a_1\} \in E(G)$;
$a_3 \in V(C)$ and $a_1\notin V(C)$. So we connect $z_1$
(the vertex corresponding to $a_1$) to each
vertex in the set $\map_{\connect}(a_3)=V_1$. We repeat the
same process for each edge $\{u,v\}$ in $E(H)$ where $u\in V(C)$
and $v\notin V(C)$.
\end{enumerate}

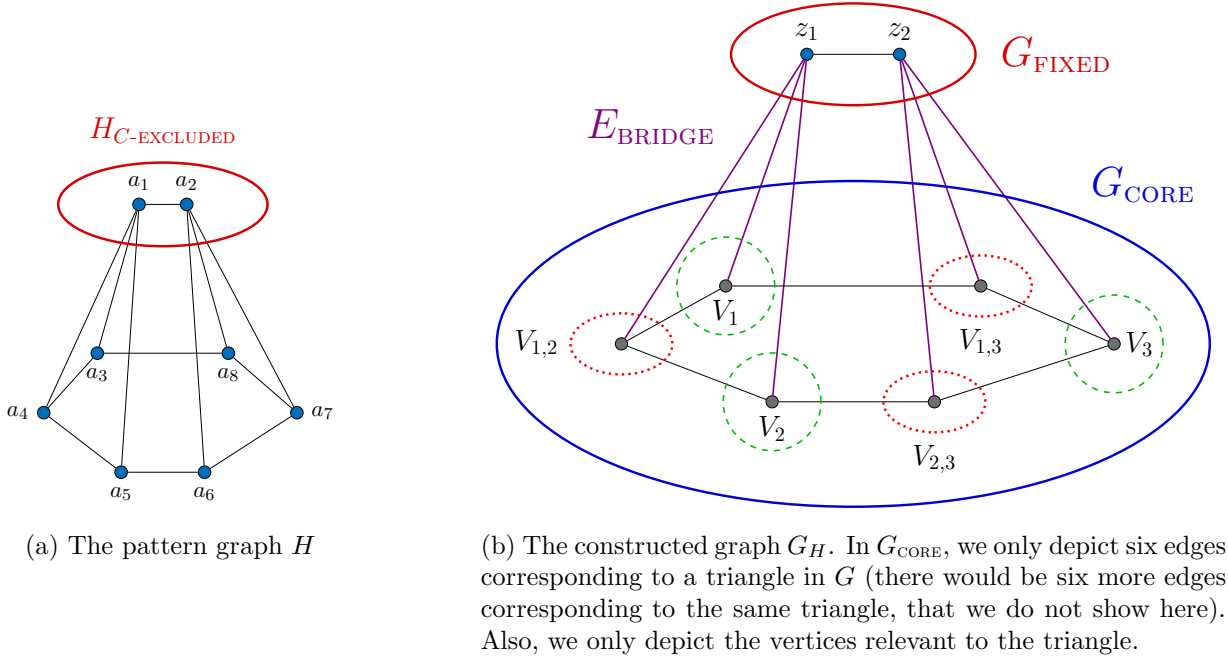
\begin{figure}[t]
\centering
\begin{subfigure}[t]{0.3\textwidth}
\centering
\resizebox{\textwidth}{!}{
  \begin{tikzpicture}[nd/.style={scale=1,circle,draw,fill=NavyBlue,inner sep=2pt},minimum size = 6pt]
    \matrix[column sep=0.8cm, row sep=0.5cm,ampersand replacement=\&]
    {
        \exampleH \\
    };
  \end{tikzpicture}
  }
\caption{The pattern graph $H$}
\label{fig:H-example}
\end{subfigure}\hfill
\begin{subfigure}[t]{0.6\textwidth}
\centering
\resizebox{\textwidth}{!}{
  \begin{tikzpicture}[nd/.style={scale=1,circle,draw,fill=Black!70,inner sep=2pt},minimum size = 6pt]
    \matrix[column sep=0.8cm, row sep=0.5cm,ampersand replacement=\&]
    {
        \GHexampleH \\
    };
  \end{tikzpicture}
  }
\caption{The constructed graph $G_H$. In $G_{\coreGraph}$, we only depict six edges corresponding to a triangle in $G$ (there would be six more edges corresponding to the same triangle, that we do not show here). Also, we only depict the vertices relevant to the triangle.}
\label{fig:G_H-example}
\end{subfigure}
\caption{$G_H$ constructed for an example pattern graph $H$}
\label{fig:H_and_G_H-example}
\end{figure}

Observe that in this example, $\mcP = \{\{z_1\},\{z_2\},V_1,V_{1,2},V_2,V_{2,3},V_3,V_{1,3}\}$. It is easy to see that $E_{\connect}$ connects $G_{\coreGraph}$ to $G_{\fixedGraph}$, such that each $6$-cycle in $G_{\coreGraph}$ compose a match of $H$ together with $G_{\fixedGraph}$. Each $\pmatch$ of $H$ is actually an induced match as the only edges between its vertices in $G_H$ are the edges that correspond to the match. Therefore, in this example, each $\pmatch$ of $H$ in $G_H$ include a $6$-cycle in $G_{\coreGraph}$ that is actually a $\ppmatch$ of $C$. Thus, property (I) holds.

It is not difficult to see that a triangle in $G$ introduces a total of 
six many $6$-cycle in $G_{\coreGraph}$ that are $\ppmatch$es of $C$ in $G_{\coreGraph}$. The converse follows as each $\ppmatch$ of $C$, which is a $6$-cycle in $G_{\coreGraph}$, must contain exactly one vertex from each of the three sets in each of $V_{\core}$ and $V_{\aux}$. So, we could obtain the number of triangles in $G$ by dividing the number of $\ppmatch$es of $C$ in $G_{\coreGraph}$ by six. Thus, property (II) holds.

\mypar{Deriving The Triangle Counts in \bm{$G$}}
So far, we have shown that properties (I) and (II) hold in $G_H$ for our construction. Therefore, the number of $\pmatch$es of $H$ in $G_H$ reveals the number of triangles in $G$. However, we are interested
in utilizing the homomorphism count of $H$ to derive the triangle count in $G$. Indeed, we obtain the number of $\pmatch$es of $H$ in $G_H$ by carefully looking at ``restricted'' homomorphisms from $H$ to $G$. One crucial property of the graph $G_H$ that we will require is bounded degeneracy. In fact, our construction of the graph $G_H$ ensures that it has constant degeneracy irrespective of the degeneracy of $G$ (we will formally prove this later in~\Cref{lem:bounded-degeneracy}). 

Let $\ALG$ be an algorithm for the $\HOMC_H$ problem, that runs in $f(\degen,k)\cdot O(m)$ time for some explicit function $f$, where $m$ and $\degen$ are the number of edges and degeneracy of the input graph, respectively. Then, we can use $\ALG$ to count the homomorphisms from $H$ to any subgraph of $G_H$ in time $f(\degen(G_H),k)\cdot O(m)$. Note that,
here we use the fact that for any subgraph $G_{H}^{\prime}$ of $G_H$,
$\degen(G_{H}^{\prime}) \leq \degen (G_{H})$.

We now solve the final missing piece of the puzzle:
how to count the number of $\pmatch$es of $H$ in $G_H$ using $\ALG$?
We present a two step solution to this question.
First, we count the number of ``${\cP}$ restricted" homomorphisms, denoted
by $\phom$ and defined in ~\Cref{def:p-match},
from $H$ to $G_H$ by running $\ALG$ on carefully chosen subgraphs of
$G_H$. Intuitively, a ``${\cP}$ restricted" homomorphism
is a homomorphism from $H$ to $G_H$ that involves at least one vertex in each part of ${\cP}$. Second, we use the count from the first step to 
derive the number of $\pmatch$es of $H$ in $G_H$.
We present this in~\Cref{lem:hom-inc-exc}.

We now formally define $\pmatch$ and $\phom$.
\begin{definition}[$\pmatch$ and $\phom$]
\label{def:p-match}
Let ${\cP}=\{V_1,\ldots,V_k\}$ be a partition of the vertex set $V(G)$
of the input graph $G$ where $|V(H)|=k$ for the pattern graph $H$. Further assume $|V_i| \geq 1$ for each $i\in [k]$.
Let $G_{\Hmatch}$ be a subgraph of $G$ such that $G_{\Hmatch}$
is a match of $H$. We call $G_{\Hmatch}$ a $\pmatch$, if it
includes exactly one vertex from each set $V_i$ in ${\cP}$:
$|V(G_{\Hmatch})\cap V_i| = 1$ for each $i\in [k]$.
Let $\pi:V(H) \rightarrow V(G)$ be a homomorphism from $H$ to $G$.
We call $\pi$ a $\phom$, if the image of $\pi$ is non-empty in each
set $V_i$: $|\{v:\pi(u)=v~\text{ for } u\in V(H)\} \cap V_i| \geq 1$
for each $i\in [k]$.
\end{definition}
In the following lemma, we prove that it is possible to count the number of 
$\pmatch$es of $H$ in $G_H$ by running $\ALG$ on suitably chosen $2^k$ many
subgraphs of $G_H$.

\begin{lemma} \label{lem:hom-inc-exc}
Assume that $\ALG$ is an algorithm for the $\HOMC_H$ problem
that runs in time $O(mf(\degen,k))$ for some function $f$, where $m=E(G)$ and $\degen=\degen(G)$ for the input graph $G$, and $k=V(H)$. 
Let ${\cP}=\{V_1,\ldots,V_k\}$ be a partition of $V(G)$ with $|V_i|\geq 1$ for each $i\in [k]$. 
Then, there exists an algorithm that counts the
number of $\pmatch$ of $H$ in $G$ with running time $O(2^k \cdot mf(\degen,k))$.
\end{lemma}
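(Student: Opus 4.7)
The plan is to reduce $\pmatch$ counting to $2^k$ invocations of $\ALG$ on induced subgraphs of $G$, via inclusion--exclusion, after first observing that every $\phom$ is automatically an embedding.

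My first step is the bridge from $\pmatch$es to $\phom$s. Since $\cP$ has $k$ parts and $|V(H)| = k$, any $\phom$ $\pi$ must have image of size exactly $k$, placing one vertex of $H$ in each $V_i$; hence $\pi$ is an embedding whose image is the vertex set of a $\pmatch$. Conversely, each $\pmatch$ arises as the image of exactly $|\mathrm{Aut}(H)|$ distinct $\phom$s, so
$$\#\,\pmatch\text{es of } H \text{ in } G \;=\; \frac{\#\,\phom\text{s of } H \text{ to } G}{|\mathrm{Aut}(H)|},$$
and it suffices to count $\phom$s.

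For the $\phom$ count, I use inclusion--exclusion on the parts missed by the image. For each $S \subseteq [k]$, let $G_S = G\bigl[\bigcup_{i \in S} V_i\bigr]$ and define $N(S) = \Hom{G_S}{H}$, which is obtained by a single call to $\ALG$ on $G_S$. A homomorphism $\pi : V(H) \to V(G)$ is a $\phom$ exactly when $\mathrm{Im}(\pi)$ meets every part, so M\"obius inversion over the subset lattice yields
$$\#\,\phom\text{s} \;=\; \sum_{S \subseteq [k]} (-1)^{\,k - |S|}\, N(S).$$

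For the running time, each $G_S$ is an induced subgraph of $G$, so $|E(G_S)| \le m$ and $\degen(G_S) \le \degen$; each of the $2^k$ invocations of $\ALG$ therefore costs $O(m f(\degen, k))$. Constructing the $G_S$'s and performing the final summation and division by the constant $|\mathrm{Aut}(H)|$ contribute only $O(m)$ overhead per subset, giving total time $O(2^k \cdot m f(\degen, k))$. I do not foresee any genuine obstacle here: the substantive observation, that $|V(H)| = |\cP| = k$ forces every $\phom$ to realize a $\pmatch$, is immediate, while degeneracy-monotonicity under induced subgraphs and the inclusion--exclusion step are entirely standard.
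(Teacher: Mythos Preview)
Your proposal is correct and follows essentially the same approach as the paper: both count $\phom$s by inclusion--exclusion over the parts of $\cP$ (you index by the parts \emph{kept}, the paper by the parts \emph{removed}, which yields the same alternating sum), then use that $|\cP|=|V(H)|$ forces every $\phom$ to be an embedding realizing a $\pmatch$, and finally divide by $|\mathrm{Aut}(H)|$. The running-time argument via degeneracy monotonicity on induced subgraphs is also identical.
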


\begin{proof}
Let ${\cF}_1,{\cF}_2,\ldots,{\cF}_{2^k-1}$ be the non-empty subfamilies of ${\cP}$. Let $G_1, G_2, \ldots, G_{2^k-1}$ be the subgraphs of $G$ where $G_i$ is
induced on the vertex set $V(G)\setminus (\bigcup_{S \in {\cF}_i}S)$, for $i\in [2^k-1]$.
Note that a homomorphism from $H$ to any subgraphs $G_i$, for $i\in [2^k-1]$, is also a homomorphism from $H$ to $G$. 
Since each $G_i$ is a subgraph of $G$, the degeneracy $\degen(G_i) \leq \degen$. Then, $\ALG$ can count homomorphisms from $H$ to any $G_i$ in time $O(mf(\degen, k))$. Using the inclusion-exclusion principle, we can obtain the number of homomorphisms from $H$ to $G$ that are also a homomorphism from $H$ to at least one of the subgraphs in $\{G_1, G_2, \ldots, G_{2^k-1}\}$ in $O(2^k \cdot mf(\degen, k))$. Hence, we can obtain the number of $\phom$s from $H$ to $G$ as follows,
\begin{align*}
   \Hom{G}{H} - \sum_{1 \leq i\leq 2^k-1} (-1)^{|F_i|-1} \Hom{G_i}{H} \,.
\end{align*}
Note that if a homomorphism from $H$ to $G$ does not include any vertex in a set $V_i$ in $\cP$, then it is also a homomorphism from $H$ to at least one of the subgraphs in $\{G_1, G_2, G_{2^k-1}\}$. Thus, we do not count such a homomorphism from $H$ to $G$. Since $k = |V(H)|$,
$\phom$s of $H$ in $G$
are actually embeddings of $H$ in $G$ that involve exactly one vertex in each part of $P$.
Observe that each such embedding of $H$ in $G$ corresponds to a $\pmatch$ of $H$ in $G$.
For each match $G_{\Hmatch}$ of $H$ in $G$, there are $|Aut(H)|$ embeddings of $H$ in $G$ that map $H$ to $G_{\Hmatch}$.
Thus, by dividing the number of $\phom$s from $H$ to $G$ by $|Aut(H)|$, we obtain the number of $\pmatch$es of $H$ in $G$ in $O(2^k \cdot mf(\degen, k))$ time.
\end{proof}

\subsection{Proof of Main Theorem}
\label{subsec:lb_proof}
We now present the details of the construction of $G_H$ for the general case and prove~\Thm{lower_bound}.
\begin{proof} [Proof of~\Cref{thm:lower_bound}]\label{proof:lower_bound_proof}
We present a linear time Turing reduction form the $\TRIC$ problem to the $\HOMC_H$ problem in bounded degeneracy graphs. Let $G$ be the input instance of the $\TRIC$ problem where $V(G)=\{u_1,\ldots,u_n\}$ and $|E(G)| = m$. First, we construct a graph $G_H$ based on $G$ and $H$ such that $G_H$ has bounded degeneracy and $O(m)$ edges.

\mypar{Construction of \bm{$G_H$}}
Let $\LICS(H)=r$ where $r\geq 6$; and $V(H)=\{a_1,a_2,\ldots,a_k\}$, where $a_{k-r+1},a_{k-r+2},\ldots,a_k,a_{k-r+1}$ is an induced $r$-cycle $C$. Let $H_{\excludeC}$ denote $H-V(C)$. $G_H$ has two main parts, $G_{\fixedGraph}$ and $G_{\coreGraph}$. These two parts are connected by the edge set $E_{\connect}$. $G_{\fixedGraph}$ is a copy of $H_{\excludeC}$ and has the vertex set $V_{\fixed}=\{z_1,z_2,\ldots,z_{k-r}\}$. $z_i \in V_{\fixed}$ corresponds to $a_i$ in $H_{\excludeC}$ for $i \in [k-r]$. Thus, $z_i,z_j \in V_{\fixed}$ are adjacent iff $\{a_i,a_j\}$ is an edge in $H_{\excludeC}$.

$G_{\coreGraph}$ contains two sets of vertices, $V_{\core}$, and $V_{\aux}$. Vertices in $V_{\core}$ correspond to vertices in $V(G)$, and vertices in $V_{\aux}$ correspond to the edges in $E(G)$.
$V_{\core}$ consists of three copies of $V(G)$ without any edges inside them. More precisely, $V_{\core}$ is composed of three sets of vertices $V_1=\{w_1,\ldots,w_n\}$, $V_2=\{x_1,\ldots,x_n\}$, and $V_3=\{y_1,\ldots,y_n\}$. For $i \in [n]$, vertices $w_i \in V_1$, $x_i \in V_2$, and $y_i \in V_3$ correspond to $u_i \in V(G)$. There are no edges inside $V_{\core}$. We describe $V_{\aux}$ next.

$V_{\aux}$ corresponds to the vertices of the paths of length $r/3$ that we add between $V_1$, $V_2$, and $V_3$. Let $r=3\ell+q$, for some $\ell \geq 2$ and $q \in \{0,1,2\}$. The vertices in $V_{\aux}$ consists of the sets of vertices $V_{1,2}$, $V_{2,3}$, and $V_{1,3}$. For each edge $e \in E(G)$ and each pair in $\{V_1,V_2,V_3\}$, we add two sets of vertices to $V_{\aux}$. Next, we describe the vertices we add to $V_{1,2}$, $V_{2,3}$, and $V_{1,3}$ for an edge $e \in E(G)$. For the pair $V_1$ and $V_2$, we add 
\begin{align*}
    V_e^{1,2} & = \left \{v_{e,1}^{1,2},\ldots,v_{e,\ell-1}^{1,2} \right \} \\
    \text{and } V_e^{2,1} & = \left \{v_{e,1}^{2,1},\ldots,v_{e,\ell-1}^{2,1} \right \}
\end{align*}
to $V_{1,2}$. For the pair $V_2$ and $V_3$, we add 
\begin{align*}
    V_e^{2,3} & = \left \{v_{e,1}^{2,3},\ldots,v_{e,\ell-1+\lfloor q/2 \rfloor}^{2,3} \right \} \\
    \text{and } V_e^{3,2} & = \left \{v_{e,1}^{3,2},\ldots,v_{e,\ell-1+\lfloor q/2 \rfloor}^{3,2} \right \}
\end{align*}
to $V_{2,3}$. And finally, for the pair $V_1$ and $V_3$, we add 
\begin{align*}
    V_e^{1,3} & = \left \{v_{e,1}^{1,3},\ldots,v_{e,\ell-1+\lfloor (q+1)/2 \rfloor}^{1,3} \right \} \\
    \text{and } V_e^{3,1} & = \left \{v_{e,1}^{3,1},\ldots,v_{e,\ell-1+\lfloor (q+1)/2 \rfloor}^{3,1}\right\}
\end{align*}
to $V_{1,3}$.
The following defines $V_{1,2}$, $V_{2,3}$, and $V_{1,3}$ more formally. For $i,j \in \{1,2,3\}$ where $i < j$,
\begin{align*}
    V_{i,j} & = \bigcup_{e \in E(G)} V_e^{i,j} \cup V_e^{j,i}.
\end{align*}
This completes the description of $V(G_{\coreGraph})$. We describe $E(G_{\coreGraph})$ next.

The edges inside $G_{\coreGraph}$ stitch vertices in $V_{\aux}$ to form paths of length $r/3$ between each pair in $\{V_1,V_2,V_3\}$. $E(G_{\core})$ consists of three sets of edges, $E_{1,2}$, $E_{2,3}$, and $E_{1,3}$. For each edge in $G$ and each pair in $\{V_1,V_2,V_3\}$, we add two sets of edges to $G_{\coreGraph}$. We describe the edges we add to $G_{\coreGraph}$ for each edge $e = \{u_i,u_j\} \in E(G)$. For the pair $V_1$ and $V_2$, we add
\begin{align*}
    E_e^{1,2} & = \left \{(w_i, v_{e,1}^{1,2}),(v_{e,1}^{1,2},v_{e,2}^{1,2}),\ldots,(v_{e,\ell-1}^{1,2}, x_j) \right \} \\
    \text{and } E_e^{2,1} & = \left \{(w_j, v_{e,1}^{2,1}),(v_{e,1}^{2,1},v_{e,2}^{2,1}),\ldots,(v_{e,\ell-1}^{2,1}, x_i) \right \}
\end{align*}
to $E_{1,2}$. Edges in $E_{1,2}$ form $\ell$-paths between $V_1$ and $V_2$ with $V_{1,2}$ as interior vertices. For the pair $V_2$ and $V_3$, we add 
\begin{align*}
    E_e^{2,3} & = \left \{(x_i, v_{e,1}^{2,3}),(v_{e,1}^{2,3},v_{e,2}^{2,3}),\ldots,(v_{e,\ell-1+\lfloor q/2 \rfloor}^{2,3}, y_j) \right \} \\
    \text{and } E_e^{3,2} & = \left \{(x_j, v_{e,1}^{3,2}),(v_{e,1}^{3,2},v_{e,2}^{3,2}),\ldots,(v_{e,\ell-1+\lfloor q/2 \rfloor}^{3,2}, y_i) \right \}
\end{align*}
to $E_{2,3}$. Edges in $E_{2,3}$ compose $\ell+\lfloor q/2 \rfloor$-paths between $V_2$ and $V_3$, by joining the vertices in $V_{2,3}$. And, for the pair $V_1$ and $V_3$, we add
\begin{align*}
    E_e^{1,3} & = \left \{(w_i, v_{e,1}^{1,3}),(v_{e,1}^{1,3},v_{e,2}^{1,3}),\ldots,(v_{e,\ell-1+\lfloor (q+1)/2 \rfloor}^{1,3}, y_j) \right \} \\
    \text{and } E_e^{3,1} & = \left \{(w_j, v_{e,1}^{3,1}),(v_{e,1}^{3,1},v_{e,2}^{3,1}),\ldots,(v_{e,\ell-1+\lfloor (q+1)/2 \rfloor}^{3,1}, y_i) \right \}
\end{align*}
to $E_{1,3}$. The edge set $E_{1,3}$ joins vertices in $V_{1,3}$ to form $\ell+\lfloor (q+1)/2 \rfloor$-paths between $V_1$ and $V_3$. We can describe the three sets of edges that compose $E(G_{\coreGraph})$ more formally as follows. For $i,j \in \{1,2,3\}$ where $i < j$,
\begin{align*}
    E_{i,j} & = \bigcup_{e \in E(G)} E_e^{i,j} \cup E_e^{j,i}.
\end{align*}

Now, we describe the edge set $E_{\connect}$ that connects $G_{\fixedGraph}$ and $G_{\coreGraph}$.
First, we partition $V_{1,2}$, $V_{2,3}$, and $V_{1,3}$ based on distance to $V_1$, $V_2$, and $V_3$, respectively. For instance, we define $V_{1,2}^i$ to be all the vertices in $V_{1,2}$ with $i$ as the length of the shortest path to a vertex in $V_1$. Recall that each vertex in $V_{1,2}$ serves as an internal vertex of a path between a vertex in $V_1$ and a vertex in $V_2$. Formally, we define
\begin{align*}
V_{1,2}^i & = \bigcup_{e \in E(G)} \{v_{e,i}^{1,2}, v_{e,i}^{2,1}\} \text{ for } i \in \{1,\ldots,\ell-1\}, \\
V_{2,3}^i & = \bigcup_{e \in E(G)} \{v_{e,i}^{2,3}, v_{e,i}^{3,2}\} \text{ for } i \in \{1,\ldots,\ell-1+\lfloor q/2 \rfloor\}, \\
\text{and } V_{1,3}^i & = \bigcup_{e \in E(G)} \{v_{e,i}^{1,3}, v_{e,i}^{3,1}\} \text{ for } i \in \{1,\ldots,\ell-1+\lfloor (q+1)/2 \rfloor\}.
\end{align*}

Now that we have partitioned $V_{\aux}$, we add the sets $V_1$, $V_2$, and $V_3$ to this partition of $V_{\aux}$ to define a partition $\mcP^\prime$ of $V(G_{\coreGraph})$ as follows.
\begin{align*}
 \mcP^\prime  & = \big\{ V_1, V_2, V_3,\\
 & V_{1,2}^1,\ldots,V_{1,2}^{\ell-1}, \\
 & V_{2,3}^1,\ldots,V_{2,3}^{\ell-1+\lfloor q/2 \rfloor}, \\
 & V_{1,3}^1,\ldots,V_{1,3}^{\ell-1+\lfloor (q+1)/2 \rfloor}\big\}.
\end{align*}
Observe that $|\mcP^\prime|=r$. Let $\map_{\connect}: V(C) \rightarrow \mcP^\prime$ be a bijective mapping. We first describe $E_{\connect}$ based on $\map_{\connect}$ and then specify $\map_{\connect}$. The following describes the edges we add to $E_{\connect}$ for each edge $e=\{u,v\} \in E(H)$ where $u\in V(C)$ and $v\notin V(C)$. Let $z_v\in V_{\fixed}$ be the vertex corresponding to $v$. We add an edge between $z_v$ and each vertex in $\map_{\connect}(u)$. We describe $\map_{\connect}$ next.

We set $\map_{\connect}$ to map $V(C)$ to an $r$-cycle in $G_{\coreGraph}$ that is a $\ppmatch$ of $C$ (recall~\Def{p-match}). Recall that $C = a_{k-r+1},a_{k-r+2},\ldots,a_k,a_{k-r+1}$. We break this cycle into three parts of length $\ell$, $\ell+\lfloor q/2 \rfloor$, and $\ell+\lfloor (q+1)/2 \rfloor$, respectively, starting from $a_{k-r+1}$. We set $\map_{\connect}(a_{k-r+1})$ to $V_1$, $\map_{\connect}(a_{k-r+1+\ell})$ to $V_2$, and $\map_{\connect}(a_{k-r+1+2\ell+\lfloor q/2 \rfloor})$ to $V_3$. In order for $\map_{\connect}$ to map $C$ to a $\ppmatch$ of $C$, we set $\map_{\connect}$ to map vertices of $C$ between $a_{k-r+1}$ and $a_{k-r+1+\ell}$ to vertices of the paths between $V_1$ and $V_2$, which are vertices in $V_{1,2}$. Similarly, $\map_{\connect}$ maps vertices of $C$ between $a_{k-r+1+\ell}$ and $a_{k-r+1+2\ell+\lfloor q/2 \rfloor}$ to $V_{2,3}$, and vertices of $C$ between $a_{k-r+1+2\ell+\lfloor q/2 \rfloor}$ and $a_{k-r+1}$ to $V_{1,3}$. Formally,
\begin{align*}
    \map_{\connect}(a_{k-r+1}) & = V_1, \\
    \map_{\connect}(a_{k-r+1+i}) & = V_{1,2}^i, \text{ for } i \in \{1,\ldots,\ell-1\}, \\
    \map_{\connect}(a_{k-r+1+\ell}) & = V_2, \\
    \map_{\connect}(a_{k-r+1+\ell+i}) & = V_{2,3}^i, \text{ for } i \in \{1,\ldots,\ell-1+\lfloor q/2 \rfloor\}, \\
    \map_{\connect}(a_{k-r+1+2\ell+\lfloor q/2 \rfloor}) & = V_3, \\
    \text{and } \map_{\connect}(a_{k-r+1+2\ell+\lfloor q/2 \rfloor+i}) & = V_{2,3}^i, \text{ for } i \in \{1,\ldots,\ell-1+\lfloor (q+1)/2 \rfloor\}.
\end{align*}
This completes the description of $E_{\connect}$ and hence $G_H$. Before presenting the details of the reduction, we first show that $G_H$ has bounded degeneracy and $O(m)$ edges.

The following lemma shows that in order to prove a graph $G$ is $t$-degenerate, we only need to exhibit an ordering $\prec$ of $V(G)$ such that each vertex of $G$ has $t$ or fewer neighbors that come later in the ordering $\prec$.
Given a graph $G$ and an ordering $\prec$ of $V(G)$, the DAG $G_\prec\dir$ is obtained by orienting the edges of $G$ with respect to $\prec$. 
\begin{lemma}\label{lem:order-and-k-degen}[Szekeres-Wilf~\cite{szekeres1968inequality}]
Given a graph $G$, $\degen(G) \leq t$ if there exists an ordering $\prec$ of $V(G)$ such that the out-degree of each vertex in $G_\prec\dir$ is at most $t$.
\end{lemma}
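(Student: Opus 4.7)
The plan is a short, direct argument from the definition of degeneracy: show that the hypothesis on $\prec$ forces every non-empty subgraph of $G$ to contain a vertex of degree at most $t$.

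First I would fix an arbitrary non-empty subgraph $G' \subseteq G$ and let $v$ be the vertex of $V(G')$ that is smallest under the restriction of $\prec$ to $V(G')$. The key observation is that every neighbor of $v$ inside $G'$ must appear later than $v$ in the ordering $\prec$, since $v$ was chosen to be the $\prec$-minimum vertex of $V(G')$. Consequently, in the DAG $G_\prec^\rightarrow$, every edge of $G'$ incident to $v$ is oriented out of $v$. Hence the degree of $v$ in $G'$ is bounded above by the out-degree of $v$ in $G_\prec^\rightarrow$, which by hypothesis is at most $t$.

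Since $G'$ was an arbitrary non-empty subgraph of $G$, we conclude that every non-empty subgraph of $G$ has a vertex of degree at most $t$, and therefore has minimum degree at most $t$. By the definition of degeneracy recalled in the Preliminaries, this is exactly the statement $\degen(G) \le t$.

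There is no real obstacle here: the only subtlety worth stating explicitly in the write-up is that the out-degree of $v$ in $G_\prec^\rightarrow$ is computed with respect to the neighborhood in the full graph $G$, but this only makes the bound on the $G'$-degree of $v$ easier, since $E(G') \subseteq E(G)$. No additional definitions or constructions are required, and the argument is independent of the pattern graph $H$ used elsewhere in this section.
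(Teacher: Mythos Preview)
Your argument is correct and is the standard proof of this folklore direction of the degeneracy--ordering equivalence. Note, however, that the paper does not actually prove this lemma: it simply states it with a citation to Szekeres--Wilf and moves on, so there is no ``paper's own proof'' to compare against. Your write-up would serve perfectly well as a self-contained justification if one were desired.
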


Next, we show that $G_H$ has bounded degeneracy using $\Lem{order-and-k-degen}$.

\begin{lemma} \label{lem:bounded-degeneracy}
$\degen(G_H) \leq k-r+2$.
\end{lemma}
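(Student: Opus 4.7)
The plan is to apply \Lem{order-and-k-degen} by exhibiting an explicit vertex ordering $\prec$ of $V(G_H)$ in which every vertex has at most $k-r+2$ later neighbors. The natural choice, given the layered structure of the construction, is
\[
  V_{\aux} \; \prec \; V_{\core} \; \prec \; V_{\fixed},
\]
where within $V_{\aux}$ the vertices of a single subdivision path (say $v_{e,1}^{i,j},v_{e,2}^{i,j},\ldots$) are listed in their natural order along the path, and different paths are concatenated in an arbitrary order. The motivation is that the only ``heavy'' vertices are those in $V_{\core}$ and $V_{\fixed}$ (since they can have degree growing with $m$ through $E_{\connect}$ or through the many subdivision paths), so we want them placed late.

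The verification splits into three short cases. For $v\in V_{\aux}$, its neighbors in $V_{\core}\cup V_{\aux}$ are exactly the two path-neighbors of $v$ in the original subdivision path; at most one of these (the predecessor along the path) is earlier in $\prec$, so $v$ has at most two later neighbors outside $V_{\fixed}$. Its neighbors in $V_{\fixed}$ come from $E_{\connect}$: if $v\in\map_{\connect}(u)$ for some $u\in V(C)$, then $v$ is joined to $z_{v'}$ for each neighbor $v'$ of $u$ in $V(H_{\excludeC})$, giving at most $|V(H_{\excludeC})|=k-r$ edges. All $V_{\fixed}$-neighbors are later, so $v$ has out-degree at most $k-r+2$. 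For $v\in V_{\core}$, note that there are no internal edges in $V_{\core}$ and $V_{\aux}$ precedes $V_{\core}$, so only $V_{\fixed}$-neighbors are later; there are at most $k-r$ of them (one for each neighbor in $V(H_{\excludeC})$ of the unique $u\in V(C)$ with $\map_{\connect}(u)\in\{V_1,V_2,V_3\}$ containing $v$). For $v\in V_{\fixed}$, all later neighbors lie in $V_{\fixed}$, and $G_H[V_{\fixed}]$ is a copy of $H_{\excludeC}$, so the out-degree is at most $k-r-1$.

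Combining the three cases, every vertex has out-degree at most $k-r+2$, and \Lem{order-and-k-degen} yields $\degen(G_H)\leq k-r+2$. The only delicate point to be careful about is the boundary behavior of subdivision paths: the first internal vertex of a path has two non-$V_{\fixed}$ neighbors (namely, the next interior vertex and one $V_{\core}$ endpoint), both later than itself, so this is precisely the case that forces the $+2$ rather than $+1$; interior and last internal vertices each have at most one later non-$V_{\fixed}$ neighbor and thus actually satisfy the stronger bound $k-r+1$. No other step is technically demanding, so the main obstacle is purely bookkeeping: enumerating the edge-types at $v\in V_{\aux}$ and confirming that $E_{\connect}$ contributes at most $k-r$ to its degree.
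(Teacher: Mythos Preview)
Your proof is correct and follows essentially the same approach as the paper: the same three-block ordering $V_{\aux}\prec V_{\core}\prec V_{\fixed}$ and the same case analysis bounding out-degrees by $k-r+2$, $k-r$, and $k-r-1$ respectively. The only difference is that you additionally fix an intra-$V_{\aux}$ ordering along each subdivision path, which is harmless but unnecessary (the paper simply observes that a vertex in $V_{\aux}$ has exactly two neighbors in $G_{\coreGraph}$, so it has at most two later such neighbors regardless of the internal ordering).
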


\begin{proof}
We present a vertex ordering $\prec$ for $G_H$ such that for each vertex $v \in V(G_H)$, the out-degree of $v$ is at most $k-r+2$ in ${G_H}_\prec\dir$.
Let $\prec$ be an ordering of $V(G_H)$ such that $V_{\aux} \prec V_{\core} \prec V_{\fixed}$, and ordering within each set is arbitrary. Each vertex in $V_{\aux}$ is connected to exactly two other vertices in $V(G_{\coreGraph})$ and at most to all $k-r$ vertices in $V_{\fixed}$. So the out-degree of each vertex in $V_{\aux}$ in ${G_H}_\prec\dir$ is at most $k-r+2$. Since there are no edges inside $V_{\core}$, the only out-edges from vertices inside $V_{\core}$ is to vertices in $V_{\fixed}$. Further, the only out-edges from vertinces in $V_{\fixed}$ are to other vertices in $V_{\fixed}$.
Thus the out-degree of each vertex $v \in V(G_H)$ in ${G_H}_\prec\dir$ is at most $k-r+2$.
\end{proof}
Observe that $G_H$ has at most $\degen(G_H) \cdot |V(G_H)|$ edges. By \Lem{bounded-degeneracy}, $\degen(G_H) < k$, and $|V(G_H)| < 6m\ell + 3n + k$ by construction of $G_H$. Thus, $G_H$ has $O(m)$ edges.

\mypar{Details of the Reduction}
We define a partition of $V(G_H)$ by adding each vertex in $V_{\fixed}$ as a set by itself to $\mcP^\prime$. Formally,
\begin{align*}
 \mcP & = \big\{ \{z_1\},\{z_2\},\ldots,\{z_{k-r}\}, \\
 & V_1, V_2, V_3,\\
 & V_{1,2}^1,\ldots,V_{1,2}^{\ell-1}, \\
 & V_{2,3}^1,\ldots,V_{2,3}^{\ell-1+\lfloor q/2 \rfloor}, \\
 & V_{1,3}^1,\ldots,V_{1,3}^{\ell-1+\lfloor (q+1)/2 \rfloor}\big\}.
\end{align*}
Observe that $|\mcP|=k$. Also, since $G_H$ has bounded degeneracy, each subgraph of $G_H$ has bounded degeneracy too. Thus, by \Lem{hom-inc-exc} we can count $\pmatch$es of $H$ in $G_H$ in linear time if there exists an algorithm $\ALG$ for $\HOMC_H$ problem that runs in time $O(mf(\degen, k))$ for a positive function $f$. In \Lem{copy-cycle-bijection}, we prove that there is a bijection between $\pmatch$es of $H$ in $G_H$ and $\ppmatch$es of $C$ in $G_{\coreGraph}$. Further, in \Lem{triangle-linear-func}, we prove that the number of triangles in $G$ is a simple linear function of the number of $\ppmatch$es of $C$ in $G_{\coreGraph}$. So, by counting $\pmatch$es of $H$ in $G_H$, we can obtain the number of triangles in $G$.

\begin{lemma} \label{lem:copy-cycle-bijection}
There exists a bijection between the set of $\pmatch$es of $H$ in $G_H$ and the set of $\ppmatch$es of $C$ in $G_{\coreGraph}$.
\end{lemma}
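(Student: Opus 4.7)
The plan is to exhibit an explicit bijection $f$ from the set of $\ppmatch$es of $C$ in $G_{\coreGraph}$ to the set of $\pmatch$es of $H$ in $G_H$, and then verify that its inverse is well-defined. Recall that any $\pmatch$ of $H$ must contain every $z_i$ (since each $\{z_i\}$ is a singleton class of $\mcP$) together with exactly one vertex from each of the $r$ classes of $\mcP^\prime$, while any $\ppmatch$ of $C$ contains exactly one vertex from each class of $\mcP^\prime$.

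First I would describe the forward map $f$. Given a $\ppmatch$ $C^\prime$ of $C$, let $\phi: V(C) \to V(C^\prime)$ send each $a \in V(C)$ to the unique vertex of $C^\prime$ inside $\map_{\connect}(a)$; because $\map_{\connect}$ was defined so that consecutive vertices on $C$ go to consecutive classes along the chain of paths in $G_{\coreGraph}$, $\phi$ is an isomorphism $C \to C^\prime$. Define $f(C^\prime) = M$ with $V(M) = V_{\fixed} \cup V(C^\prime)$ and $E(M) = E(G_{\fixedGraph}) \cup E(C^\prime) \cup E^*$, where $E^* \subseteq E_{\connect}$ consists of the edges $\{z_v, \phi(u)\}$ for every $\{u,v\} \in E(H)$ with $u \in V(C)$ and $v \notin V(C)$. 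The map sending $a_i \mapsto z_i$ for $i \le k-r$ combined with $\phi$ on $V(C)$ is then an isomorphism $H \to M$: edges inside $H_{\excludeC}$ match because $G_{\fixedGraph}$ is a copy of $H_{\excludeC}$, edges of $C$ match by $\phi$, and cross-edges are in $E^*$ by the construction of $E_{\connect}$. Hence $M$ is a $\pmatch$ of $H$.

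For the inverse, given a $\pmatch$ $M$ of $H$ with isomorphism $\psi:H \to M$, set $g(M)$ to be the subgraph of $G_{\coreGraph}$ on $V(M) \cap V(G_{\coreGraph})$ inheriting its edges from $M$. The main technical content is showing that $\psi$ must map $V(H_{\excludeC})$ onto $V_{\fixed}$ and $V(C)$ onto the $r$ core vertices, so that $g(M)$ is indeed isomorphic to $C$. I would use three structural features of $G_H$: $V_{\core}$ has no internal edges, every vertex of $V_{\aux}$ has exactly two neighbours inside $G_{\coreGraph}$, and the only edges between $V_{\fixed}$ and $V(G_{\coreGraph})$ lie in $E_{\connect}$ and respect $\map_{\connect}$. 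Together with the fact that $C$ is an \emph{induced} cycle in $H$, these rule out any ``twisted'' assignment where a vertex of $V(C)$ is sent into $V_{\fixed}$ or a vertex of $V(H_{\excludeC})$ is sent into the core. This rigidity is the main obstacle; once it is established, $g(M)$ is a $\ppmatch$ of $C$, and checking $g\circ f = \mathrm{id}$ and $f\circ g = \mathrm{id}$ follows directly from the explicit formulas for $f$ and $g$.
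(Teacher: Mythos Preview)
Your forward map $f$ is fine and matches the paper's ``other direction'' (adding $V_{\fixed}$ to a $\ppmatch$ of $C$ yields an induced copy of $H$).

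The backward direction has a real gap. You define $g(M)$ correctly as the restriction of $M$ to $V(G_{\coreGraph})$; this depends only on $M$. But you then try to prove that $g(M)$ is an $r$-cycle by showing that the \emph{given} isomorphism $\psi$ must send $V(C)$ onto the core vertices. That rigidity statement is false in general. Take $H=Q_3$ (the $3$-cube): it has $\LICS(H)=6$, and for any induced $6$-cycle $C$ the two remaining vertices are an antipodal pair. Since $\mathrm{Aut}(Q_3)$ acts transitively on antipodal pairs, for any $\pmatch$ $M$ there exist isomorphisms $\psi:H\to M$ sending vertices of $C$ into $V_{\fixed}$. So you cannot hope to rule out ``twisted'' $\psi$'s using only the structural features you list; the obstacle you name is not the right one.

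What actually works, and what the paper does, is to bypass $\psi$ entirely and count edges. Pick any $\pmatch$ $M$. Among the vertices of $M$ there are at most $|E(H_{\excludeC})|$ edges inside $V_{\fixed}$ (since $G_{\fixedGraph}$ is a copy of $H_{\excludeC}$), exactly one $E_{\connect}$-edge for each cross edge of $H$ between $C$ and $H_{\excludeC}$ (by the definition of $E_{\connect}$ via $\map_{\connect}$), and at most $r$ edges inside $G_{\coreGraph}$ (edges of $G_{\coreGraph}$ only join consecutive $\mcP'$-classes, and you have one vertex per class). This totals at most $|E(H)|$; since $M\cong H$ the total is exactly $|E(H)|$, so every inequality is tight. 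In particular $M$ is induced and the $r$ selected core vertices carry exactly $r$ edges, forcing them to form the full $r$-cycle. Hence $g(M)$ is a $\ppmatch$ of $C$, with no appeal to how $\psi$ behaves. Once you have this, the bijection $f\circ g=\mathrm{id}$ and $g\circ f=\mathrm{id}$ follows as you indicated.
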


\begin{proof}
Let $H^\prime$ be a $\pmatch$ of $H$ in $G_H$. Observe that by construction of $G_H$, the only edges of $G_H$ inside $V(H^\prime)$ are edges of $H^\prime$. Therefore, $H^\prime$ is actually an induced match of $H$ in $G_H$. By construction of $G_H$, specifically $E_{\connect}$, the number of edges between $V_{\fixed}$ and $V(H^\prime) \setminus V_{\fixed}$ is equal to the number of edges between $H_{\excludeC}$ and $C$. As $G_{\fixedGraph}$ is a copy of $H_{\excludeC}$, there are exactly $|E(H_{\excludeC})|$ edges inside the set of vertices $V_{\fixed}$. Thus, $H^\prime$ has exactly $|E(C)|=r$ edges inside $G_{\coreGraph}$. We describe these edges next.

Let $w_i,x_j$, and $y_t$ be the vertices of $H^\prime$ in $V_1,V_2$, and $V_3$, respectively. Inside $G_{\coreGraph}$, $w_i$ could only be connected to the two vertices of $H^\prime$ in $V_{1,2}^1$ and $V_{1,3}^1$. Furthermore, $x_j$ could only be adjacent to the two vertices of $H^\prime$ in $V_{1,2}^{\ell-1}$ and $V_{2,3}^1$. And finally, $y_t$ could only be neighbors of the two vertices of $H^\prime$ in $V_{2,3}^{\ell-1+\lfloor q/2 \rfloor}$ and $V_{1,3}^{\ell-1+\lfloor (q+1)/2 \rfloor}$.
In addition, each vertex in $V_{\aux}$ has at most two neighbors inside $G_{\coreGraph}$, and the same holds in $H^\prime$. Inside $G_{\coreGraph}$, $H^\prime$ has exactly $r$ edges, so each vertex is connected (only) to their two possible neighbors specified above. Hence, there exist an $\ell$-path between $w_i$ and $x_j$, an $\ell+\lfloor q/2 \rfloor$-path between $x_j$ and $y_t$, and an $\ell+\lfloor (q+1)/2 \rfloor$-path between $w_i$ and $y_t$. Thus, $H^\prime - V_{\fixed}$ is an $r$-cycle inside $G_{\coreGraph}$ that includes exactly one vertex in each part of $P^\prime$, and hence is a $\ppmatch$ of $C$. It is easy to see that this $\ppmatch$ of $C$ is actually an induced match. Next, we show the other direction.

Let $C^\prime$ be a $\ppmatch$ of $C$ in $G_{\coreGraph}$. It is easy to see that by construction of $G_{\coreGraph}$, $C^\prime$ is an induced match. By construction of $G_H$, $G_H[V(C^\prime) \cup V_{\fixed}]$ is an induced match of $H$. Therefore, $C^\prime$ corresponds to exactly one $\pmatch$ of $H$ in $G_H$.
\end{proof}

\begin{lemma} \label{lem:triangle-linear-func}
Let $\ppmatch(C,G_{\coreGraph})$ denote the set of $\ppmatch$es of $C$ in $G_{\coreGraph}$. The number of triangles in $G$ is equal to $|\ppmatch(C,G_{\coreGraph})|/6$.
\end{lemma}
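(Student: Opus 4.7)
The plan is to establish an explicit $6$-to-$1$ correspondence between $\ppmatch(C,G_{\coreGraph})$ and the triangles of $G$. I will build a map $\Phi$ that sends a triangle $\{u_a,u_b,u_c\}$ together with an ordered assignment of these three vertices to $V_1,V_2,V_3$ to a $\ppmatch$ of $C$, and then argue $\Phi$ is a bijection onto $\ppmatch(C,G_{\coreGraph})$. Since each unordered triangle admits $3!=6$ such ordered assignments, dividing by $6$ will recover the triangle count.

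For the forward direction, given a triangle with placement $u_a\mapsto V_1,\; u_b\mapsto V_2,\; u_c\mapsto V_3$, I select the vertices $w_a, x_b, y_c$ and splice them together using the three subpaths of $G_{\coreGraph}$ supplied by the three edges of the triangle: namely $E_e^{1,2}$ for $e=\{u_a,u_b\}$ joining $w_a$ to $x_b$, $E_e^{2,3}$ for $e=\{u_b,u_c\}$ joining $x_b$ to $y_c$, and $E_e^{1,3}$ for $e=\{u_a,u_c\}$ joining $w_a$ to $y_c$. These three paths are internally vertex-disjoint, since their interiors lie in the pairwise disjoint sets $V_{1,2},V_{2,3},V_{1,3}$, so their union forms an $r$-cycle. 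By construction this cycle meets every part of $\mcP^\prime$ in exactly one vertex, so it is indeed a $\ppmatch$ of $C$.

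The reverse direction is the substantive part. Given $C^\prime\in\ppmatch(C,G_{\coreGraph})$, I pick out its unique vertices $w_a\in V_1,\; x_b\in V_2,\; y_c\in V_3$ and observe that deleting them slices $C^\prime$ into three internally vertex-disjoint paths through $V_{1,2},V_{2,3},V_{1,3}$ of the prescribed lengths. The heart of the argument is a rigidity claim: each interior auxiliary vertex $v_{e,k}^{i,j}$ has exactly two neighbours in $G_{\coreGraph}$, namely its immediate predecessor and successor along the hand-built path $E_e^{i,j}$. Consequently, once the path leaving $w_a$ enters $V_{1,2}^1$ via some neighbour, it is forced to stay within a single $E_e^{1,2}$ (or $E_e^{2,1}$), and its terminus is the unique vertex $x_j$ with $e=\{u_a,u_j\}$. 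For this terminus to equal $x_b$ we must have $e=\{u_a,u_b\}$, so in particular $\{u_a,u_b\}\in E(G)$. The same rigidity argument applied to the other two sub-paths yields $\{u_b,u_c\},\{u_a,u_c\}\in E(G)$; since $G$ is simple there are no self-loops, hence $a,b,c$ are pairwise distinct and $\{u_a,u_b,u_c\}$ is a triangle with $C^\prime = \Phi(a,b,c)$.

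The main obstacle I anticipate is the bookkeeping in the rigidity step, specifically keeping straight the two ``direction'' subpaths $V_e^{1,2}$ and $V_e^{2,1}$ attached to each unordered edge of $G$, and verifying that together they account for every neighbour of $w_a$ in $V_{1,2}^1$, so that the path really is forced. Once this is nailed down, the interior of each layer behaves like a chain of degree-two vertices, the bijection is immediate, and the conclusion that $|\ppmatch(C,G_{\coreGraph})|/6$ equals the number of triangles in $G$ follows.
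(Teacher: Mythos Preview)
Your proposal is correct and matches the paper's own proof in approach: both arguments show that a $\ppmatch$ of $C$ is determined by its three vertices in $V_1,V_2,V_3$, that the forced sub-paths between these vertices witness the three edges of a triangle in $G$, and that conversely each triangle together with one of the $3!=6$ assignments to $V_1,V_2,V_3$ yields a distinct $\ppmatch$. Your explicit degree-two rigidity argument for the auxiliary vertices is exactly the mechanism behind the paper's terser claim that ``the only possible such path in $G_{\coreGraph}$ is an $\ell$-path,'' so the substance is identical.
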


\begin{proof}
Consider a cycle $C^\prime \in \ppmatch(C,G_{\coreGraph})$. Let $w_i$, $x_j$, and $y_t$ be the the only vertices of $C^\prime$ in $V_1$, $V_2$, and $V_3$, respectively. There should be a path between $w_i$ and $x_j$ in $C^\prime$ that does not include $y_t$, so other than $w_i$ and $x_j$, it only includes vertices in $V_{\aux}$. The only possible such path in $G_{\coreGraph}$ is an $\ell$-path between $w_i$ and $x_j$. Therefore, this $\ell$-path exists, and as a result $(u_i,u_j) \in E(G)$. Similarly, $C^\prime$ includes a path between $x_j$ and $y_t$ that only contain vertices in $V_{aux}$ other than $x_j$ and $y_t$. Therefore, there exists an $\ell+\lfloor q/2 \rfloor$-path between $x_j$ and $y_t$ in $G_{\coreGraph}$, and hence $(u_j,u_t) \in E(G)$. Finally, a path between $w_i$ and $y_t$ that other than its endpoints, only includes vertices in $V_{\aux}$, should be a part of $C^\prime$. So, there exists an $\ell+\lfloor (q+1)/2 \rfloor$-path between $w_i$ and $y_t$ in $G_{\coreGraph}$. As a result, $(u_i,u_t) \in E(G)$. Thus, $C^\prime$ corresponds only to the triangle $u_i,u_j,u_t$ in $G$. Observe that, $C^\prime$ could be specified by its vertices in $V_1,V_2$, and $V_3$.
Next, we prove the other direction; exactly 6 $\ppmatch$es of $C$ in $G_{\coreGraph}$ correspond to each triangle in $G$.

Consider a triangle $T$ with the vertex set  $\{u_i,u_j,u_t\}$ in $G$ and a $\ppmatch$ $C^\prime$ of $C$ in $G_{\coreGraph}$ that corresponds to $T$. There are six different bijective mappings from $\{u_i,u_j,u_t\}$ to $\{V_1,V_2,V_3\}$. As we showed above, $C^\prime$ could be specified by its vertices in $V_1,V_2$, and $V_3$. So, given a bijective mapping $\map_{\triMap}: \{u_i,u_j,u_t\} \rightarrow \{V_1,V_2,V_3\}$, the three vertices $\map_{\triMap}(u_i)$, $\map_{\triMap}(u_j)$, and $\map_{\triMap}(u_t)$ specify $C^\prime$. Thus, there are exactly 6 $\ppmatch$es of $C$ in $G_{\coreGraph}$ that correspond to $T$. As a result, the number of triangles in $G$ is $|\ppmatch(C,G_{\coreGraph})|/6$.
\end{proof}

\Lem{copy-cycle-bijection} and \Lem{triangle-linear-func} together show that we can obtain the number of triangles in $G$ from the number of $\pmatch$es of $H$ in $G_H$, in constant time. In conclusion, we have proved that if there exists an algorithm $\ALG$ for the $\HOMC_H$ problem that runs in time $O(mf(\degen, k))$ for a positive function $f$, then there exists an $O(m)$ algorithm for the $\TRIC$ problem. Assuming the $\TRICONJ$, the problem of $\TRIC$ has the worst case time complexity of $\omega(m)$ for an input graph with $m$ edges. Thus, the $O(m)$ Turing reduction from the $\TRIC$ problem to $\HOMC_H$ problem we presented proves \Thm{lower_bound}.
\end{proof}

\begin{observation}
\label{obs:subgraph}
In the proof of \Thm{lower_bound}, we count $\phom$s (defined in~\Def{p-match}) from $H$ to $G_H$ using the algorithm $\ALG$. Since $|\mcP|=|V(H)|$, each $\phom$ from $H$ to $G_H$ is an embedding of $H$ in $G_H$. Thus, we can apply the same argument of \Lem{hom-inc-exc} assuming there exists an algorithm for counting subgraphs, that has the same running time of $\ALG$.
Therefore, using the same argument as that of the proof of \Thm{lower_bound}, we can prove the exact same statement of $\Thm{lower_bound}$ for the $\SUBC_H$ problem.
\end{observation}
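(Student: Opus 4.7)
The plan is to revisit the reduction used in the proof of \Thm{lower_bound} and argue that essentially no modification is needed to obtain the analogous lower bound for $\SUBC_H$. I would first invoke the construction of $G_H$ from the input $\TRIC$ instance $G$ exactly as before, so that $G_H$ has bounded degeneracy (by \Lem{bounded-degeneracy}), has $O(m)$ edges, and admits the partition $\mcP$ with $|\mcP|=k=|V(H)|$ defined in the same way. The goal is again to recover the number of $\pmatch$es of $H$ in $G_H$ (from which the triangle count in $G$ is obtained in constant time via \Lem{copy-cycle-bijection} and \Lem{triangle-linear-func}), but this time using a hypothetical algorithm $\ALG'$ for $\SUBC_H$ in place of the homomorphism counting algorithm $\ALG$.

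The key observation driving the transfer is that because the partition $\mcP$ has exactly $k$ parts and each $\phom$ must hit every part, any $\phom$ from $H$ to $G_H$ is forced to be injective and therefore corresponds to an embedding of $H$ in $G_H$. Consequently, counting $\phom$s is equivalent (up to the fixed multiplicative factor $|\mathrm{Aut}(H)|$) to counting matches of $H$, i.e.\ subgraph copies. Thus the inclusion-exclusion computation in \Lem{hom-inc-exc} can be replayed verbatim with subgraph counts in place of homomorphism counts: for each of the $2^{k}-1$ non-empty subfamilies $\mcF_i$ of $\mcP$, run $\ALG'$ on the induced subgraph $G_i = G_H\left[V(G_H)\setminus \bigcup_{S\in\mcF_i} S\right]$ to obtain the number of matches of $H$ in $G_i$, and combine the results with the standard alternating-sign formula to isolate those matches that touch every part of $\mcP$.

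Finally, I would close the reduction by noting that each subgraph $G_i$ inherits bounded degeneracy from $G_H$, so the assumed running time $O(m f(\degen,k))$ of $\ALG'$ is preserved across all $2^{k}$ invocations, yielding an overall $O\!\left(2^k\cdot m f(\degen,k)\right)$ algorithm for counting $\pmatch$es of $H$ in $G_H$. Dividing by $6$ (via \Lem{triangle-linear-func}) then recovers the triangle count of $G$, contradicting the \TRICONJ exactly as in the original proof. No step of this transfer is delicate; the only point worth double-checking is that the inclusion-exclusion identity of \Lem{hom-inc-exc} uses only the fact that we are counting a set of maps parametrised by which parts of $\mcP$ are hit, not any intrinsic feature of homomorphisms versus embeddings, which is immediate from inspecting its proof.
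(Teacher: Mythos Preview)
Your proposal is correct and follows essentially the same approach as the paper: the observation is justified in the paper only by the remark that since $|\mcP|=|V(H)|$ every $\phom$ is an embedding, so the inclusion--exclusion of \Lem{hom-inc-exc} applies verbatim with a subgraph-counting oracle in place of $\ALG$, and the remainder of the reduction is unchanged. Your write-up spells out the same reasoning in slightly more detail (noting in particular that each $G_i$ inherits bounded degeneracy), but there is no substantive difference.
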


\section{Conclusion}

In this paper, we study the problem of counting homomorphisms of a fixed pattern $H$ in a graph $G$ with bounded degeneracy. We provided a clean characterization of the patterns $H$ for which near-linear time algorithms are possible ---  if and only if the largest induced cycle in $H$ has length at most $5$ (assuming standard fine-grained complexity conjectures). We conclude this exposition with two natural research directions.

While we discover a clean dichotomy for the homomorphism counting problem, the landscape for the subgraph counting problem is not as clear. Our hardness result (Theorem~\ref{thm:lower_bound}) holds for the subgraph counting version --- if a pattern $H$ has $\LICS \geq 6$, then there does not exists any near-linear time (randomized) algorithm for finding the subgraph count of $H$ (see Observation~\cref{obs:subgraph}). However, the ``only if'' direction does not follow. It would be interesting to find a tight characterization for the subgraph counting problem.

Both our current work and a previous work~\cite{bera2020linear} attempt at understanding what kind of patterns can be counted in near-linear time in sparse graphs. It would be interesting to explore beyond linear time algorithms. Specifically, we pose the following question: Can we characterize patterns that are countable in quadratic time?

\bibliographystyle{alpha}
\bibliography{refs}

\newcommand{\etalchar}[1]{$^{#1}$}
\begin{thebibliography}{PTTW13}

\bibitem[AGM12]{ahn2012graph}
Kook~Jin Ahn, Sudipto Guha, and Andrew McGregor.
\newblock Graph sketches: sparsification, spanners, and subgraphs.
\newblock In {\em Proc. 31st ACM Symposium on Principles of Database Systems},
  pages 5--14. ACM, 2012.

\bibitem[AKK18]{assadi2018simple}
Sepehr Assadi, Michael Kapralov, and Sanjeev Khanna.
\newblock A simple sublinear-time algorithm for counting arbitrary subgraphs
  via edge sampling.
\newblock In {\em Proc. 10th Conference on Innovations in Theoretical Computer
  Science}, 2018.

\bibitem[ANRD15]{AhNe+15}
Nesreen~K. Ahmed, Jennifer Neville, Ryan~A. Rossi, and Nick Duffield.
\newblock Efficient graphlet counting for large networks.
\newblock In {\em International Conference on Data Mining}, 2015.

\bibitem[AW14]{abboud2014popular}
Amir Abboud and Virginia~Vassilevska Williams.
\newblock Popular conjectures imply strong lower bounds for dynamic problems.
\newblock In {\em Proc. 55th Annual IEEE Symposium on Foundations of Computer
  Science}, 2014.

\bibitem[AYZ97]{alon1997finding}
Noga Alon, Raphael Yuster, and Uri Zwick.
\newblock Finding and counting given length cycles.
\newblock {\em Algorithmica}, 17(3):209--223, 1997.

\bibitem[BB73]{bertele1973non}
Umberto Bertele and Francesco Brioschi.
\newblock On non-serial dynamic programming.
\newblock {\em J. Comb. Theory, Ser. A}, 14(2):137--148, 1973.

\bibitem[BC17]{bera2017towards}
Suman~K Bera and Amit Chakrabarti.
\newblock Towards tighter space bounds for counting triangles and other
  substructures in graph streams.
\newblock In {\em Proc. 34th International Symposium on Theoretical Aspects of
  Computer Science}, 2017.

\bibitem[BCG20]{bera2019graph}
Suman~K Bera, Amit Chakrabarti, and Prantar Ghosh.
\newblock Graph coloring via degeneracy in streaming and other space-conscious
  models.
\newblock In {\em Proc. 47th International Colloquium on Automata, Languages
  and Programming}, 2020.

\bibitem[BCL{\etalchar{+}}06]{borgs2006counting}
Christian Borgs, Jennifer Chayes, L{\'a}szl{\'o} Lov{\'a}sz, Vera~T S{\'o}s,
  and Katalin Vesztergombi.
\newblock Counting graph homomorphisms.
\newblock In {\em Topics in discrete mathematics}, pages 315--371. Springer,
  2006.

\bibitem[BKS02]{BarYossefKS02}
Ziv {Bar-Yossef}, Ravi Kumar, and D.~Sivakumar.
\newblock Reductions in streaming algorithms, with an application to counting
  triangles in graphs.
\newblock In {\em Proc. 13th Annual ACM-SIAM Symposium on Discrete Algorithms},
  2002.

\bibitem[BPS20]{bera2020linear}
Suman~K Bera, Noujan Pashanasangi, and C~Seshadhri.
\newblock Linear time subgraph counting, graph degeneracy, and the chasm at
  size six.
\newblock In {\em Proc. 11th Conference on Innovations in Theoretical Computer
  Science}. Schloss Dagstuhl-Leibniz-Zentrum f{\"u}r Informatik, 2020.

\bibitem[Bre19]{bressan2019faster}
Marco Bressan.
\newblock Faster subgraph counting in sparse graphs.
\newblock In {\em 14th International Symposium on Parameterized and Exact
  Computation (IPEC 2019)}. Schloss Dagstuhl-Leibniz-Zentrum fuer Informatik,
  2019.

\bibitem[BS20]{bera2020degeneracy}
Suman~K Bera and C~Seshadhri.
\newblock How the degeneracy helps for triangle counting in graph streams.
\newblock In {\em Proceedings of the 39th ACM SIGMOD-SIGACT-SIGAI Symposium on
  Principles of Database Systems}, pages 457--467, 2020.

\bibitem[BW99]{brightwell1999graph}
Graham~R Brightwell and Peter Winkler.
\newblock Graph homomorphisms and phase transitions.
\newblock {\em Journal of combinatorial theory, series B}, 77(2):221--262,
  1999.

\bibitem[CDM17]{curticapean2017homomorphisms}
Radu Curticapean, Holger Dell, and D{\'a}niel Marx.
\newblock Homomorphisms are a good basis for counting small subgraphs.
\newblock In {\em Proceedings of the 49th Annual ACM SIGACT Symposium on Theory
  of Computing}, pages 210--223, 2017.

\bibitem[CM77]{chandra1977optimal}
Ashok~K Chandra and Philip~M Merlin.
\newblock Optimal implementation of conjunctive queries in relational data
  bases.
\newblock In {\em Proc. 9th Annual ACM Symposium on the Theory of Computing},
  pages 77--90, 1977.

\bibitem[CN85]{chiba1985arboricity}
Norishige Chiba and Takao Nishizeki.
\newblock Arboricity and subgraph listing algorithms.
\newblock {\em SIAM Journal on computing}, 14(1):210--223, 1985.

\bibitem[Coh09]{cohen2009graph}
Jonathan Cohen.
\newblock Graph twiddling in a mapreduce world.
\newblock {\em Computing in Science \& Engineering}, 11(4):29, 2009.

\bibitem[DG00]{dyer2000complexity}
Martin Dyer and Catherine Greenhill.
\newblock The complexity of counting graph homomorphisms.
\newblock {\em Random Structures \& Algorithms}, 17(3-4):260--289, 2000.

\bibitem[DJ04]{dalmau2004complexity}
V{\'\i}ctor Dalmau and Peter Jonsson.
\newblock The complexity of counting homomorphisms seen from the other side.
\newblock {\em Theor. Comput. Sci.}, 329(1-3):315--323, 2004.

\bibitem[DRW19]{dell2019counting}
Holger Dell, Marc Roth, and Philip Wellnitz.
\newblock Counting answers to existential questions.
\newblock In {\em Proc. 46th International Colloquium on Automata, Languages
  and Programming}. Schloss Dagstuhl-Leibniz-Zentrum fuer Informatik, 2019.

\bibitem[DST02]{diaz2002counting}
Josep D{\'\i}az, Maria Serna, and Dimitrios~M Thilikos.
\newblock Counting h-colorings of partial k-trees.
\newblock {\em Theor. Comput. Sci.}, 281(1-2):291--309, 2002.

\bibitem[ELRS17]{eden2017approximately}
Talya Eden, Amit Levi, Dana Ron, and C~Seshadhri.
\newblock Approximately counting triangles in sublinear time.
\newblock {\em SIAM Journal on Computing}, 46(5):1603--1646, 2017.

\bibitem[Epp94]{eppstein1994arboricity}
David Eppstein.
\newblock Arboricity and bipartite subgraph listing algorithms.
\newblock {\em Information processing letters}, 51(4):207--211, 1994.

\bibitem[ERS18]{eden2018approximating}
Talya Eden, Dana Ron, and C~Seshadhri.
\newblock On approximating the number of k-cliques in sublinear time.
\newblock In {\em Proc. 50th Annual ACM Symposium on the Theory of Computing},
  pages 722--734, 2018.

\bibitem[ERS20]{eden2020faster}
Talya Eden, Dana Ron, and C~Seshadhri.
\newblock Faster sublinear approximations of $ k $-cliques for low arboricity
  graphs.
\newblock In {\em Annual ACM-SIAM Symposium on Discrete Algorithms}, 2020.

\bibitem[FG04]{flum2004parameterized}
J{\"o}rg Flum and Martin Grohe.
\newblock The parameterized complexity of counting problems.
\newblock {\em SIAM J. Comput.}, 33(4):892--922, 2004.

\bibitem[GG06]{goel2006bounded}
G.~Goel and J.~Gustedt.
\newblock Bounded arboricity to determine the local structure of sparse graphs.
\newblock In {\em International Workshop on Graph-Theoretic Concepts in
  Computer Science}, pages 159--167. Springer, 2006.

\bibitem[GLS20]{gishboliner2020counting}
Lior Gishboliner, Yevgeny Levanzov, and Asaf Shapira.
\newblock Counting subgraphs in degenerate graphs, 2020.

\bibitem[Gro07]{grohe2007complexity}
Martin Grohe.
\newblock The complexity of homomorphism and constraint satisfaction problems
  seen from the other side.
\newblock {\em Journal of the ACM (JACM)}, 54(1):1--24, 2007.

\bibitem[Hal76]{halin1976s}
Rudolf Halin.
\newblock S-functions for graphs.
\newblock {\em Journal of geometry}, 8(1-2):171--186, 1976.

\bibitem[HN90]{hell1990complexity}
Pavol Hell and Jaroslav Ne{\v{s}}et{\v{r}}il.
\newblock On the complexity of h-coloring.
\newblock {\em Journal of Combinatorial Theory, Series B}, 48(1):92--110, 1990.

\bibitem[IPZ98]{impagliazzo1998problems}
Russell Impagliazzo, Ramamohan Paturi, and Francis Zane.
\newblock Which problems have strongly exponential complexity?
\newblock In {\em Proc. 39th Annual IEEE Symposium on Foundations of Computer
  Science}, pages 653--662, 1998.

\bibitem[IR78]{itai1978finding}
Alon Itai and Michael Rodeh.
\newblock Finding a minimum circuit in a graph.
\newblock {\em SIAM Journal on Computing}, 7(4):413--423, 1978.

\bibitem[JS17]{jain2017fast}
Shweta Jain and C~Seshadhri.
\newblock A fast and provable method for estimating clique counts using
  tur{\'a}n's theorem.
\newblock In {\em Proc. 26th Proceedings, International World Wide Web
  Conference (WWW)}, pages 441--449. International World Wide Web Conferences
  Steering Committee, 2017.

\bibitem[JSP13]{Jha2013}
Madhav Jha, C~Seshadhri, and Ali Pinar.
\newblock A space efficient streaming algorithm for triangle counting using the
  birthday paradox.
\newblock In {\em Proc. 19th Annual SIGKDD International Conference on
  Knowledge Discovery and Data Mining}, pages 589--597, 2013.

\bibitem[JSP15]{jha2015path}
Madhav Jha, C~Seshadhri, and Ali Pinar.
\newblock Path sampling: A fast and provable method for estimating 4-vertex
  subgraph counts.
\newblock In {\em Proc. 24th Proceedings, International World Wide Web
  Conference (WWW)}, pages 495--505. International World Wide Web Conferences
  Steering Committee, 2015.

\bibitem[KMSS12]{Kane2012}
Daniel~M Kane, Kurt Mehlhorn, Thomas Sauerwald, and He~Sun.
\newblock Counting arbitrary subgraphs in data streams.
\newblock In {\em Proc. 39th International Colloquium on Automata, Languages
  and Programming}, pages 598--609, 2012.

\bibitem[KPP{\etalchar{+}}14]{kolda2014counting}
Tamara~G Kolda, Ali Pinar, Todd Plantenga, C~Seshadhri, and Christine Task.
\newblock Counting triangles in massive graphs with mapreduce.
\newblock {\em SIAM Journal on Scientific Computing}, 36(5):S48--S77, 2014.

\bibitem[Lov67]{lovasz1967operations}
L{\'a}szl{\'o} Lov{\'a}sz.
\newblock Operations with structures.
\newblock {\em Acta Mathematica Academiae Scientiarum Hungarica},
  18(3-4):321--328, 1967.

\bibitem[Lov12]{lovasz2012large}
L{\'a}szl{\'o} Lov{\'a}sz.
\newblock {\em Large networks and graph limits}, volume~60.
\newblock American Mathematical Soc., 2012.

\bibitem[MB83]{matula1983smallest}
David~W Matula and Leland~L Beck.
\newblock Smallest-last ordering and clustering and graph coloring algorithms.
\newblock {\em J. ACM}, 30(3):417--427, 1983.

\bibitem[MMPS11]{Manjunath2011}
Madhusudan Manjunath, Kurt Mehlhorn, Konstantinos Panagiotou, and He~Sun.
\newblock Approximate counting of cycles in streams.
\newblock In {\em Proc. 19th Annual European Symposium on Algorithms}, pages
  677--688, 2011.

\bibitem[MVV16]{McGregor2016}
Andrew McGregor, Sofya Vorotnikova, and Hoa~T. Vu.
\newblock Better algorithms for counting triangles in data streams.
\newblock In {\em Proceedings of the 35th ACM SIGMOD-SIGACT-SIGAI Symposium on
  Principles of Database Systems}, pages 401--411, 2016.

\bibitem[OB17]{ortmann2017efficient}
Mark Ortmann and Ulrik Brandes.
\newblock Efficient orbit-aware triad and quad census in directed and
  undirected graphs.
\newblock {\em Applied network science}, 2(1), 2017.

\bibitem[PS20]{pashanasangi2020efficiently}
Noujan Pashanasangi and C~Seshadhri.
\newblock Efficiently counting vertex orbits of all 5-vertex subgraphs, by
  evoke.
\newblock In {\em Proc. 13th International Conference on Web Search and Data
  Mining (WSDM)}, pages 447--455, 2020.

\bibitem[PSV17]{PiSeVi17}
Ali Pinar, C~Seshadhri, and Vaidyanathan Vishal.
\newblock Escape: Efficiently counting all 5-vertex subgraphs.
\newblock In {\em Proceedings, International World Wide Web Conference (WWW)},
  pages 1431--1440. International World Wide Web Conferences Steering
  Committee, 2017.

\bibitem[PTTW13]{Pavan2013}
Aduri Pavan, Kanat Tangwongsan, Srikanta Tirthapura, and Kun-Lung Wu.
\newblock Counting and sampling triangles from a graph stream.
\newblock {\em Proceedings of the VLDB Endowment}, 6(14):1870--1881, 2013.

\bibitem[RS83]{robertson1983graph}
Neil Robertson and Paul~D Seymour.
\newblock Graph minors. i. excluding a forest.
\newblock {\em Journal of Combinatorial Theory, Series B}, 35(1):39--61, 1983.

\bibitem[RS84]{robertson1984graph}
Neil Robertson and Paul~D Seymour.
\newblock Graph minors. iii. planar tree-width.
\newblock {\em Journal of Combinatorial Theory, Series B}, 36(1):49--64, 1984.

\bibitem[RS86]{robertson1986graph}
Neil Robertson and Paul~D. Seymour.
\newblock Graph minors. ii. algorithmic aspects of tree-width.
\newblock {\em Journal of algorithms}, 7(3):309--322, 1986.

\bibitem[RW20]{roth2020counting}
Marc Roth and Philip Wellnitz.
\newblock Counting and finding homomorphisms is universal for parameterized
  complexity theory.
\newblock In {\em Proc. 31st Annual ACM-SIAM Symposium on Discrete Algorithms},
  pages 2161--2180, 2020.

\bibitem[SERF18]{shin2018patterns}
K.~Shin, T.~Eliassi-Rad, and C.~Faloutsos.
\newblock Patterns and anomalies in $k$-cores of real-world graphs with
  applications.
\newblock {\em Knowledge and Information Systems}, 54(3):677--710, 2018.

\bibitem[SV11]{Suri2011}
Siddharth Suri and Sergei Vassilvitskii.
\newblock Counting triangles and the curse of the last reducer.
\newblock In {\em Proceedings of the 20th international conference on World
  wide web}, pages 607--614, 2011.

\bibitem[SW68]{szekeres1968inequality}
George Szekeres and Herbert~S Wilf.
\newblock An inequality for the chromatic number of a graph.
\newblock {\em Journal of Combinatorial Theory}, 4(1):1--3, 1968.

\end{thebibliography}

\end{document}